\renewcommand\footnotetextcopyrightpermission[1]{} %
\theoremstyle{definition}
\newtheorem{theorem}{Theorem}
\newtheorem{corollary}[theorem]{Corollary}
\newtheorem{proposition}[theorem]{Proposition}
\theoremstyle{definition}
\newtheorem{definition}[theorem]{Definition}
\newtheorem{example}[theorem]{Example}
\newtheorem{remark}[theorem]{Remark}
\newtheorem{problem}[theorem]{Problem}
\newcommand{\colred}{\color{red!60}}
\colorlet{darkgreen}{green!80!black}
\colorlet{darkred}{red!80!black}
\tikzset{auto, >= stealth}
\tikzset{every edge/.append style={thick, shorten >= 1pt}}
\tikzset{initial/.style={draw, thick, <-, shorten <=1pt}}
\tikzset{player0/.style = {draw, thick, shape=circle, minimum size=5mm}}
\tikzset{player1/.style = {draw, thick, shape=rectangle, minimum size=5mm}}
\tikzset{bplayer0/.style = {draw, thick, shape=ellipse, minimum size=5mm,text width=1.1cm}}
\tikzset{bplayer1/.style = {draw, thick, shape=rectangle, minimum size=5mm,text width=1.6cm}}
\newcommand\pos{1.4}
\tikzstyle{startstop} = [rectangle, rounded corners, 
\tikzstyle{process} = [trapezium, 
\tikzstyle{io} = [rectangle, 
\tikzstyle{decision} = [diamond, 
\tikzstyle{arrow} = [thick,->,>=stealth]
\newcommand{\N}{\mathbb{N}}
\newcommand{\bigO}{\mathcal{O}}
\newcommand{\cupdot}{\mathbin{\mathaccent\cdot\cup}}
\newcommand{\Contract}{\mathtt{C}}
\newcommand{\Assumption}{\mathtt{A}}
\newcommand{\AssumptionI}[1]{\Assumption_{#1}}
\newcommand{\Guarantee}{\mathtt{G}}
\newcommand{\GuaranteeI}[1]{\Guarantee_{#1}}
\newcommand{\inodd}{\in_{\mathrm{odd}}}
\newcommand{\ineven}{\in_{\mathrm{even}}}
\newcommand{\abs}[1]{\left\lvert #1 \right\rvert}
\newcommand{\LTLnext}{\bigcirc}
\newcommand{\LTLeventually}{\lozenge}
\newcommand{\LTLalways}{\square}
\newcommand{\lang}{\mathcal{L}}
\newcommand{\game}{\mathcal{G}}
\newcommand{\gamegraph}{\ensuremath{G}}
\newcommand{\play}{\ensuremath{\rho}}
\newcommand{\priority}{\mathbb{P}}
\newcommand{\priorityI}[1]{\priority_{#1}}
\newcommand{\priorityset}[1]{P^{#1}}
\newcommand{\spec}{\ensuremath{\Phi}}
\newcommand{\specI}[1]{\ensuremath{\spec_{#1}}}
\newcommand{\specz}{\ensuremath{\specI{0}}}
\newcommand{\speco}{\ensuremath{\specI{1}}}
\newcommand{\speci}{\ensuremath{\specI{i}}}
\newcommand{\speczp}{\ensuremath{\specz^\bullet}}
\newcommand{\specop}{\ensuremath{\speco^\bullet}}
\newcommand{\specip}{\ensuremath{\speci^\bullet}}
\newcommand{\speczq}{\ensuremath{\specz''}}
\newcommand{\specoq}{\ensuremath{\speco''}}
\newcommand{\speciq}{\ensuremath{\speci''}}
\newcommand{\speczr}{\ensuremath{\specz'}}
\newcommand{\specor}{\ensuremath{\speco'}}
\newcommand{\specir}{\ensuremath{\speci'}}
\newcommand{\vertex}{V}
\newcommand{\vertexI}[1]{\vertex_{#1}}
\newcommand{\vertexz}{\vertexI{0}}
\newcommand{\vertexo}{\vertexI{1}}
\newcommand{\vertexi}{\vertexI{i}}
\newcommand{\edge}{E}
\newcommand{\p}[1]{\ensuremath{\text{Player}~#1}}
\newcommand{\pz}{\ensuremath{\p{0}}}
\newcommand{\po}{\ensuremath{\p{1}}}
\newcommand{\win}{\mathcal{W}}
\newcommand{\winz}{\win_0}
\newcommand{\wino}{\win_1}
\newcommand{\wini}{\win_i}
\newcommand{\strat}{\ensuremath{\pi}}
\newcommand{\stratI}[1]{\ensuremath{\strat_{#1}}}
\newcommand{\stratz}{\ensuremath{\stratI{0}}}
\newcommand{\strato}{\ensuremath{\stratI{1}}}
\newcommand{\strati}{\ensuremath{\stratI{i}}}
\newcommand{\stratj}{\ensuremath{\stratI{j}}}
\newcommand{\Strat}{\ensuremath{\Pi}}
\newcommand{\StratI}[1]{\ensuremath{\Strat_{#1}}}
\newcommand{\Stratz}{\ensuremath{\StratI{0}}}
\newcommand{\Strato}{\ensuremath{\StratI{1}}}
\newcommand{\Strati}{\ensuremath{\StratI{i}}}
\newcommand{\team}[1]{\llangle #1\rrangle}
\newcommand{\buchi}{\ifmmode B\ddot{u}chi \else B\"uchi \fi}
\newcommand{\cobuchi}{\ifmmode co\text{-}B\ddot{u}chi \else co-B\"uchi \fi}
\newcommand{\irmac}{\textsf{iRmaC}\xspace}
\newcommand{\ir}{\textsf{iR}\xspace}
\newcommand{\csm}{\texttt{CSM}\xspace}
\newcommand{\csms}{\texttt{CSMs}\xspace}
\newcommand{\paritygame}{\mathit{Parity}}
\newcommand{\buchigame}{\mathit{B\ddot{u}chi}}
\newcommand{\cobuchigame}{\mathit{co\text{-}B\ddot{u}chi}}
\newcommand{\parity}{\ensuremath{\mathit{Parity}}}
\newcommand{\parityAssump}{\ensuremath{\mathit{Parity}}}
\newcommand{\solveParity}{\textsc{Parity}}
\newcommand{\solveBuchi}{\textsc{B\"uchi}}
\newcommand{\solveCobuchi}{\textsc{CoB\"uchi}}
\newcommand{\solveSafety}{\textsc{Safety}}
\newcommand{\computeAttr}{\textsc{Attr}}
\newcommand{\computeLive}{\textsc{Live}}
\newcommand{\computeCoLive}{\textsc{CoLive}}
\newcommand{\computeSafe}{\textsc{Unsafe}}
\newcommand{\buchiTemp}{\textsc{B\"uchiTemp}}
\newcommand{\cobuchiTemp}{\textsc{coB\"uchiTemp}}
\newcommand{\parityTemp}{\textsc{ParityTemp}}
\newcommand{\template}{\ensuremath{\Lambda}}
\newcommand{\assump}{\ensuremath{\Psi}}
\newcommand{\assumpI}[1]{\ensuremath{\assump_{#1}}}
\newcommand{\assumpz}{\ensuremath{\assumpI{0}}}
\newcommand{\assumpo}{\ensuremath{\assumpI{1}}}
\newcommand{\assumpi}{\ensuremath{\assumpI{i}}}
\newcommand{\templatesafe}{\ensuremath{\template_{\textsc{unsafe}}}}
\newcommand{\templategrlive}{\ensuremath{\template_{\textsc{live}}}}
\newcommand{\templatecondlive}{\ensuremath{\template_{\textsc{cond}}}}
\newcommand{\templatecolive}{\ensuremath{\template_{\textsc{colive}}}}
\newcommand{\src}{\mathit{src}}
\newcommand{\livegroup}{H_\ell}
\newcommand{\livegroupSingle}{H_i}
\newcommand{\livegroupSingleN}{H}
\newcommand{\colivegroup}{D}
\newcommand{\safegroup}{S}
\newcommand{\condlivegroup}{\mathcal{H}}
\newcommand{\livegroupA}{\livegroup^a}
\newcommand{\colivegroupA}{\colivegroup^a}
\newcommand{\safegroupA}{\safegroup^a}
\newcommand{\condlivegroupA}{\condlivegroup^a}
\newcommand{\livegroupS}{\livegroup^s}
\newcommand{\colivegroupS}{\colivegroup^s}
\newcommand{\safegroupS}{\safegroup^s}
\newcommand{\condlivegroupS}{\condlivegroup^s}
\newcommand{\compt}{\lhd}
\newcommand{\pre}[2]{\textsf{pre}\ensuremath{_{#1}(#2)}}
\newcommand{\cpre}[3]{\textsf{cpre}\ensuremath{^{#3}_{#1}(#2)}} %
\newcommand{\cprea}[2]{\textsf{cpre}\ensuremath{^a_{#1}(#2)}}
\newcommand{\attr}[3]{\textsf{attr}\ensuremath{^{#3}_{#1}(#2)}}
\newcommand{\attra}[2]{\textsf{attr}\ensuremath{^a_{#1}(#2)}}
\newsavebox{\@brx}
\newcommand{\llangle}[1][]{\savebox{\@brx}{\(\m@th{#1\langle}\)}%
	\mathopen{\copy\@brx\kern-0.5\wd\@brx\usebox{\@brx}}}
\newcommand{\rrangle}[1][]{\savebox{\@brx}{\(\m@th{#1\rangle}\)}%
	\mathclose{\copy\@brx\kern-0.5\wd\@brx\usebox{\@brx}}}
\newif\ifFIRST
\newif\ifSECOND
\let\LISTOP\relax
\newcommand{\List}[4][\;]{#3#1%
	\FIRSTtrue
	\@for\i:=#2\do{%
		\ifFIRST\LISTOP{\i}\FIRSTfalse\else,\LISTOP{\i}\fi%
	}%
	#1#4%
	\let\LISTOP\relax
}
\newcommand{\set}[1]{\left\lbrace #1\right\rbrace}
\newcommand{\tup}[1]{\left( #1\right)}
\newcommand{\Set}[2][]{\List[#1]{#2}{\{}{\}}}
\newcommand{\VSet}[2][]{\let\LISTOP\val\List[#1]{#2}{\{}{\}}}
\newcommand{\Tuple}[2][]{\List[#1]{#2}{(}{)}}
\newcommand{\VTuple}[2][]{\let\LISTOP\val\List[#1]{#2}{(}{)}}
\newcommand{\checkTemplate}{\textsc{CheckTemplate}}
\newcommand{\negotiate}{\textsc{Negotiate}}
\newcommand{\conflict}{\mathcal{C}}
\newcommand{\extract}{\textsc{ExtractStrategy}}
\newcommand{\toolname}{\texttt{CoSMo}\xspace}
\newcommand{\genziel}{\texttt{genZiel}\xspace}
\newcommand{\ro}{{\colbluealt \ensuremath{\mathcal{R}_1}}}
\newcommand{\rt}{{\colorangealt \ensuremath{\mathcal{R}_2}}}
\definecolor{colorblindgreen}{HTML}{004D40}
\definecolor{colorblindblue}{HTML}{1E88E5}
\definecolor{colorblindorange}{HTML}{FFC107}
\definecolor{colorblindred}{HTML}{D81B60}
\definecolor{colorblindbluealt}{HTML}{00BFFF}
\definecolor{colorblindorangealt}{HTML}{FF7F00}
\newcommand{\colbluealt}{\color{colorblindbluealt}}
\newcommand{\colorangealt}{\color{colorblindorangealt}}
\begin{document}
 
\title[]{Contract-Based Distributed Synthesis\\ in Two-Objective Parity Games}

\author{Ashwani Anand}
\affiliation{%
	\institution{MPI-SWS, Germany}
}
\email{ashwani@mpi-sws.org}

\author{Satya Prakash Nayak}
\affiliation{%
	\institution{MPI-SWS, Germany}
}
	\email{sanayak@mpi-sws.org}

\author{Anne-Kathrin Schmuck}
\affiliation{%
	\institution{MPI-SWS, Germany}
}
	\email{akschmuck@mpi-sws.org}

\thanks{
The authors are supported by the DFG projects SCHM 3541/1-1
and 389792660 TRR 248–CPEC. 
}

\begin{abstract}
We consider the problem of computing distributed logical controllers for two interacting system components via a novel sound and complete contract-based synthesis framework. Based on a discrete abstraction of component interactions as a two-player game over a finite graph and specifications for both components given as $\omega$-regular (e.g.\ LTL) properties over this graph, we co-synthesize contract and controller candidates locally for each component and propose a \emph{negotiation} mechanism which iteratively refines these candidates until a solution to the given distributed synthesis problem is found. Our framework relies on the recently introduced concept of \emph{permissive templates} which collect an infinite number of controller candidates in a concise data structure. We utilize the efficient computability, adaptability and compositionality of such templates to obtain an efficient, yet sound and complete negotiation framework for contract-based distributed logical control. We showcase the superior performance of our approach by comparing our prototype tool \toolname to the state-of-the-art tool on a robot motion planning benchmark suite. 
\end{abstract}

\maketitle
\pagestyle{empty}

\sloppy

 	\section{Introduction}\label{sec:intro}
\emph{Games on graphs} provide an effective way to formalize synthesis problems in the context of correct-by-construction cyber-physical systems (CPS) design. A prime example are algorithms to synthesize \emph{control software} that ensures the satisfaction of \emph{logical specifications} under the presence of an external environment, which e.g., causes changed task assignments, transient operating conditions, or unavoidable interactions with other system components. The resulting \emph{logical control software} typically forms a higher layer of the control software stack. The details of the underlying physical dynamics and actuation are then abstracted away into the structure of the game graph utilized for synthesis.

Algorithmically, the outlined control design procedure via games-on-graphs utilizes \emph{reactive synthesis}, a well understood and highly automated design flow originating from the formal methods community rooted in computer science. The strength of reactive synthesis in logical control design is its ability to provide strong correctness guarantees by treating the environment as fully adversarial. 
While this view is useful if a single controller is designed for a system which needs to obey the specification in an \emph{unknown} environment, it does not excel at synthesizing \emph{distributed} and \emph{interacting} logical control software. While controllers for multiple interacting systems can be obtained using existing centralized synthesis techniques, it requires sharing the specifications of all the systems with a central entity. Towards increasing privacy, a decentralized computation of the controllers is preferred, to avoid centrally handling the specifications and strategy choices of the subsystems. In the multi-system setting, each component acts as the \enquote{environment} for the other ones and controllers for components are designed concurrently. Hence, if known a-priory, the control design of one component could take the needs of other components into account and does not need to be treated fully adversarial.

\begin{example}\label{exp:pens:1}
As a simple motivating example for a distributed logical control problem with this flavor, consider a fully automated factory producing pens as depicted in \cref{fig:factory}. It has a machine which takes raw materials for pens at $ A_1 $. When required, it can produce pens with erasers, for which it needs erasers from $ C_1 $. For this, it has a robot $ \ro $ that takes the raw materials from $ B_3 $ to the production machine at $ A_1 $. Hence, the robot $ \ro $ needs to visit $ A_1 $ and $ B_3 $ infinitely often, i.e. satisfy the LTL objective\footnote{We formally introduce LTL objectives in \cref{sec:prelim}.} $\varphi_1:=\square\lozenge \ro:A_1\wedge\square\lozenge \ro:B_3 $, where $ \mathcal{R}_i : P$ denotes that $ \mathcal{R}_i $ is in the cell $ P $. For delivering the erasers to the machine, it has another robot $ \rt $ that takes raw material from $ B_3 $ and feeds the machines via a conveyor belt at $ C_1 $ if $ \ro $ feeds the raw material at $ A_1 $, i.e, the objective is to satisfy $\varphi_2:= \square\lozenge \ro:A_1 \Rightarrow\square\lozenge \rt:B_3\wedge\square\lozenge \rt:C_1 $. As both robots share the same workspace, their controllers need to ensure the specification despite the movements of the other robot.

\begin{figure}
	\centering
		\includegraphics[width=0.45\columnwidth]{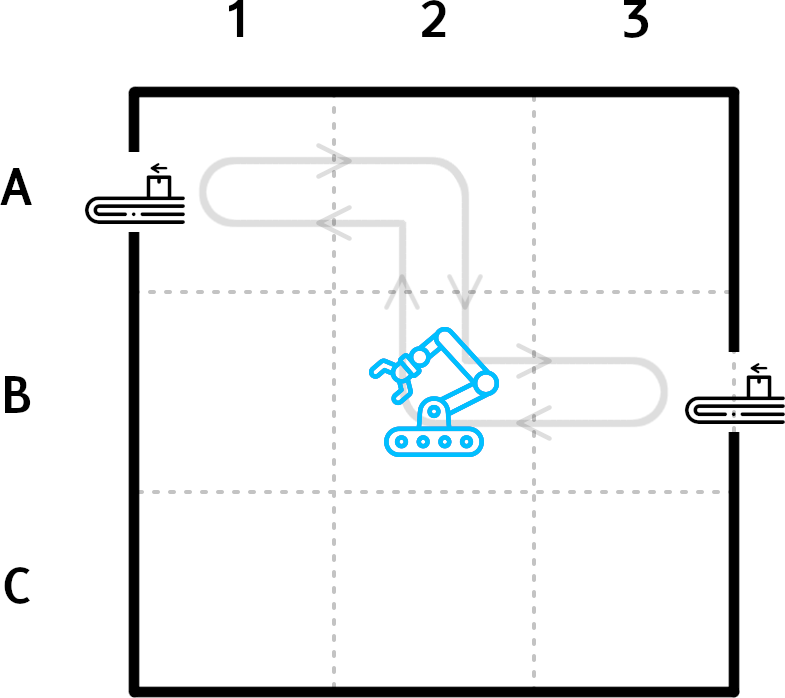}
	\hfill
	\includegraphics[width=0.45\columnwidth]{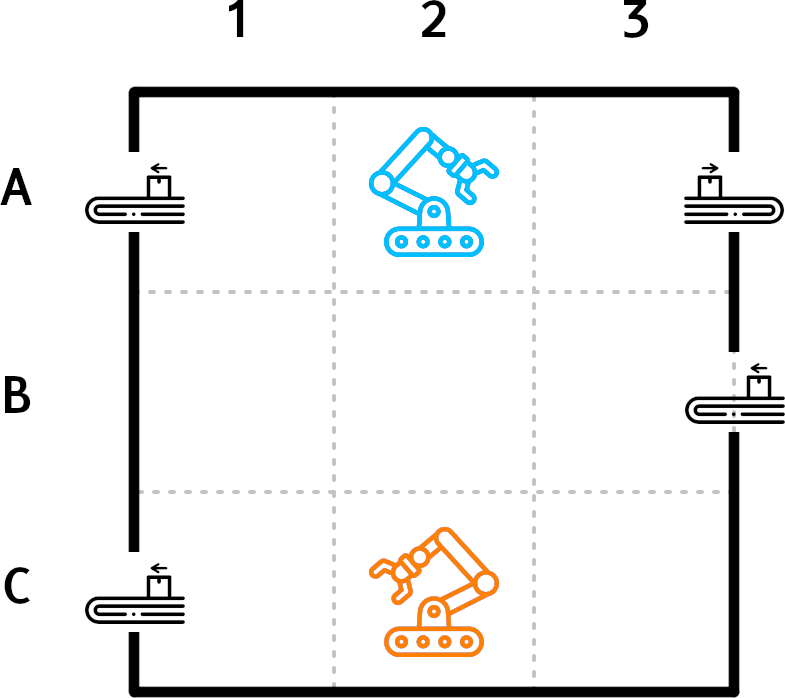}
	\caption{Illustration of a factory with two mobile robots $\ro$ and $\rt$, discussed in \cref{exp:pens:1}-\ref{exp:pens:3}. Cell $\Gamma_i$ is located in line $\Gamma\in\{A,B,C\}$ and row $i\in\{1,2,3\}$. Walls are indicated by solid lines, conveyor belts are depicted schematically.}\label{fig:factory}%
	\vspace{-0.3cm}
\end{figure}

The resulting controller synthesis problem can be modeled as a game over a finite graph $G$ -- the vertices remember the current position of each robot and its edges model all possible movements between them. Each player in the game models one robot and chooses the moves of that robot along the edges of the graph. Then the specifications for each robot are an LTL objective over $G$, formally resulting in a \emph{two-objective parity}\footnote{Parity games are formally introduced in \cref{sec:prelim}. Their expressivity is needed to allow for the full class of LTL specifications.} \emph{game}. While we exclude continuous robot dynamics from the subsequent discussion, we note that non-trivial dynamics can be abstracted into a game graph using well established abstraction techniques (see e.g.,\cite{tabuada2009verification,alur2015principles,belta2017formal} for an overview) or handled by a well-designed hand-over mechanism between continuous and logical feedback controllers (see e.g. \cite{ContextTriggerdABCD_2023}).
 \end{example}

Technically, the main contribution of this paper is a new algorithm to solve such \emph{two-objective parity games} arising from distributed logical control problems (as outlined in \cref{exp:pens:1}) in a \emph{distributed fashion}, i.e., without sharing the local specifications of components with each other, and by performing most computations locally. \emph{Assume-guarantee contracts} have proven to be very useful for such distributed synthesis problems and have been applied to various variants thereof \cite{chatterjee2007assume,BrenguierRaskinSankur2017,DBLP:journals/iandc/AlurMT18,DBLP:conf/tacas/BloemCJK15,
Bosy-based-contract,
Benveniste,Compositional_PerezGirard_2021,DallalTabuada_2015,8264194,8462741,9993344}. The main differences of our work compared to these works is threefold:\\
\begin{inparaenum}[(i)]
 \item We focus on \emph{logical} control design but thereby for the \emph{full class} of $\omega$-regular specifications, making our paper most related to other A/G-based distributed \emph{reactive synthesis} approaches \cite{Bosy-based-contract,AGDistributed,chatterjee2007assume,BrenguierRaskinSankur2017}. \\
 \item We \emph{co-design} both \emph{contracts and controllers}, i.e., we do not assume contracts to be given, as e.g.\ in \cite{Benveniste,9993344}. \\
 \item We develop a \emph{sound and complete} -- yet distributed -- synthesis framework. Existing tools either perform synthesis centrally, e.g.\ \cite{chatterjee2007assume,Bosy-based-contract,DBLP:conf/tacas/BloemCJK15}, are not complete, e.g.\ \cite{DallalTabuada_2015,AGDistributed,DBLP:journals/iandc/AlurMT18}, or cannot handle the full class of LTL specifications \cite{DallalTabuada_2015,Compositional_PerezGirard_2021}. %
\end{inparaenum}

Furthermore, the existing centralized synthesis techniques output a single strategy profile — one strategy per subsystem — which heavily depend on each other. The main advantage of our approach is the computation of decoupled strategy templates. This gives each subsystem the flexibility to independently choose any strategy from a huge class of strategies.

We achieve these new features in A/G-based synthesis by utilizing the concept of \emph{permissive templates} recently introduced by Anand et al. \cite{SIMAssumptions22,SIMController22}. Such \emph{templates} collect an infinite number of controller candidates in a concise data structure and allow for very efficient computability, adaptability and compositionality as illustrated with the next example.

 \begin{example}\label{exp:pens:2}
 Consider again  $ \ro $ in \cref{fig:factory} (left) with specification $\varphi_1$ from \cref{exp:pens:1}. A classical reactive synthesis engine would return a single strategy, e.g.\ one which keeps cycling along the path\footnote{We do not mention the not-so-relevant parts of the strategy for ease of understanding.} $ A_1\rightarrow A_2\rightarrow B_2\rightarrow B_3\rightarrow B_2\rightarrow A_2\rightarrow A_1$. However, $ \ro $ does not really need to stick to a single path to fulfill $\varphi_1$. It only needs to \emph{always eventually} go from $ A_1 $ to $ A_2 $ or $ B_1 $, from $ A_2 $ and $ B_1 $ to $ A_3 $ or $ B_2 $, from $ A_3 $ and $ B_2 $ to $ B_3 $, and so on. These very local liveness properties capture the essence of every correct controller for $\ro$ and can be summarized as a 
 \emph{strategy template}, which can be extracted from a classical synthesis engine without computational overhead \cite{SIMController22}. This controller representation has various advantages. 
 
 First, strategy templates are composable. If $ \ro $ is suddenly required to also collect and deposit goods from $A_3$, we can independently synthesize a strategy template for objective $\varphi_1'=\square\lozenge A_3 $ and both templates can be \emph{composed} by a simple conjunction of all present liveness properties\footnote{Of course, this is not always that easy, but provably so in most practical applications. For details see the extensive evaluation in \cite{SIMController22}.}. $\ro$ can choose a strategy that satisfies both objectives by complying with all template properties.
 
 Second, strategy templates keep all possible strategy choices in-tact and hence allow for a robust control implementation. If, e.g.\ due to the presence of other robots, $ A_1 $ is momentarily blocked, $\ro$ can keep visiting $ B_3 $ and $ A_3 $ until $ A_1 $ becomes available again\footnote{Again, see \cite{SIMController22} for an extensive case-study of this template feature.}. 
 
 Third, it was recently shown by Nayak et al.\ \cite{ContextTriggerdABCD_2023} that the flexibility of strategy templates allows to realize logical strategy choices by continuous feedback controllers over non-linear dynamics in a provable correct way, without time and space discretizations. 
 \end{example}

While the above example illustrates the flexibility of strategy templates for a single component, Anand et al. \cite{SIMAssumptions22,SIMController22} consider games with only one system (and objective). The current paper novelty leverages the easy compositionality and adaptability of the templates for \emph{contract-based distributed synthesis}. Intuitively, strategy templates collect the essence of strategic requirements for one robot in a concise data structure, which can be used to instantiate a contract. As these contracts are locally computed (one for each robot, w.r.t.\ its objective), they need to be synchronized, which might cause conflicts that need to be resolved. This requires multiple negotiation rounds until a realizing contract is found. The negotiation framework we present in this paper is ensured to \emph{always terminate} and to be sound and complete, i.e., to always provide a realizable contract if the synthesis problem has a solution. The intuition behind our framework is illustrated next.

\begin{example}\label{exp:pens:3}
Consider the two-robot scenario in \cref{fig:factory}(right) and observe that robot $ \ro $ has no strategy to satisfy its specification without any assumption on $\rt$'s behavior, e.g.\ if $ \rt $ always stays in $ B_3 $, $ \ro $ can never take raw material from $B_3$. %
However, since both robots are built by the factory designers, they can be designed to cooperate \enquote{just enough} to fulfill both objectives. We therefore assume that $ \ro $ can \enquote{ask} $ \rt $ 
to always eventually leave $ B_3 $, so $ \ro $ can collect the raw goods.
Algorithmically, this is done by locally computing both a strategy template \cite{SIMController22} for $\ro$ and an \emph{assumption template} \cite{SIMAssumptions22} on $\rt$. The latter contains local requirements on $\rt$ that need to be satisfied in order for $\ro$ to fulfill its objective.

When we then switch perspectives and (locally) synthesize a strategy for $\rt$ for its own objective $\varphi_2$, we can force $\rt$ to obey the assumption template from $\ro$'s previous computation. This, however, might again put new assumptions on $\ro$ for the next round. Due to the easy composability of templates, computations in each round are efficient, leading to an overall polynomial algorithm in the size of the game graph. 
In addition, the algorithm outputs a compatible pair of strategy templates (one for each player), which gives each player maximal freedom for realization (as outlined in \cref{exp:pens:2}) -- any control realization they pick allows also the other component to fulfill their objective. In this running example, the resulting pair of compatible strategy templates require $ \ro $ to go to $ A_1 $ when $ \rt $ is at $ B_3 $, and to $ B_3 $ only when the access is granted by $ \rt $ (which $ \rt $ will grant, by following its final template), and will also let $ \rt $ go to $ B_3 $ always eventually when it arrives.

As these resulting strategy templates only capture the \emph{essence} of the required cooperation, they can also be implemented under partial observation, as long as the required information about the other component are extractable from these observations. Due to page constrains, we omit the formal treatment of this case.

\end{example}

\noindent\textbf{Outline.}
After presenting required preliminaries in \cref{sec:prelim}, we formalize the considered contract-based synthesis problem in \cref{sec:overview}, and instantiate it via permissive templates in \cref{section:templates}. We then use this instantiation to devise a negotiation algorithm for its solution in \cref{sec:negotiate} and prove all features of the algorithm and its output illustrated in \cref{exp:pens:3}. 
Finally, \cref{sec:negotiate} provides empirical evidence that our negotiation framework possess desirable computational properties by comparing our C++-based prototype tool \toolname to state-of-the-art solvers on a benchmark suite. %

\section{Preliminaries}\label{sec:prelim}
\smallskip\noindent\textbf{Notation.}
We use $\mathbb{N}$ to denote the set of natural numbers including zero.
Given two natural numbers $a,b\in\mathbb{N}$ with $a<b$, we use $[a;b]$ to denote the set $\set{n\in\mathbb{N} \mid a\leq n\leq b}$.
Let $\Sigma$ be a finite alphabet.
The notations $\Sigma^*$ and $\Sigma^\omega$ respectively denote the set of finite and infinite words over $\Sigma$. 
Given two words $u\in \Sigma^*$ and $v\in \Sigma^*\cup \Sigma^{\omega} $, the concatenation of $u$ and $v$ is written as the word $uv$. 

\smallskip\noindent\textbf{Game Graphs.}
A \emph{game graph} is a tuple $\gamegraph= \tup{V=V^0\cupdot V^1,E}$ where $(V,E)$ is a finite, directed graph with \emph{vertices} $ V $ and \emph{edges} $ E $, and 
$ \vertexz, \vertexo\subseteq V$ form a partition of $V$. W.l.o.g.\ we assume that for every $v\in V$ there exists $v'\in V$ s.t.\ $(v,v')\in E$. 
A \emph{play} originating at a vertex $v_0$ is an infinite sequence of vertices $\play=v_0v_1\ldots \in V^\omega$.

\smallskip\noindent\textbf{Winning Conditions.}
Given a game graph $\gamegraph$, a \emph{winning condition} (or \emph{objective}) is a set of %
plays %
specified using a formula $ \spec $ in \emph{linear temporal logic} (LTL) over the vertex set $V$, i.e., LTL formulas whose atomic propositions are sets of vertices from $V$. 
Then the set of desired infinite plays is given by the $\omega$-regular language $\lang(\gamegraph,\spec)\subseteq V^\omega$. 
When $G$ is clear from the context, we simply write $\lang(\spec)$.
The standard definitions of $\omega$-regular languages and LTL are omitted for brevity and can be found in standard textbooks \cite{baier2008principles}. 

A \emph{parity objective} $\spec = \paritygame(\priority)$ is given by the LTL formula
\begin{equation}\label{equ:parity}
	\paritygame(\priority) \coloneqq \bigwedge_{i\inodd [0;d]} \left(\square\lozenge \priorityset{i} \implies \bigvee_{j\ineven [i+1;d]} \square\lozenge \priorityset{j}\right),
\end{equation}
with \emph{priority set} $\priorityset{j} = \{v : \priority(v)=j\}$ for $0\leq j\leq d$ of vertices for some \emph{priority function} $\priority : V \rightarrow [0;d]$ that assigns each vertex a \emph{priority}. $\lang(\paritygame(\priority))$ contains all plays $\play$ for which the highest priority appearing infinitely often along $\play$ is even.
We note that every game with an arbitrary $\omega$-regular set of desired 
plays can be reduced to a parity game (possibly with a larger set of vertices) by standard methods \cite{baier2008principles}.

\smallskip\noindent\textbf{Games.} 
A \emph{two-player (turn-based) game} is a tuple $\game=\tup{\gamegraph,\spec}$, where $G$ is a game graph, and $ \spec $ is the \emph{winning condition} over $\gamegraph$. 
A \emph{two-player (turn-based) two-objective game} is a triple $\game=\tup{\gamegraph,\specz,\speco}$, where $G $ is a game graph, and $ \specz $ and $ \speco $ are \emph{winning conditions} over $\gamegraph$, respectively, for $ \pz $ and $ \po $. We call a $\game$ a parity game if all involved winning conditions are parity objectives.

\smallskip\noindent\textbf{Strategies.}
A strategy of $\p{i}$~(for $i\in\{0,1\}$) is a function $\strati\colon \vertex^*\vertexi\to \vertex$ such that for every $\play v \in \vertex^*\vertexi$ holds that $\strati(\play v)\in \edge(v)$. 
A \emph{strategy profile} $(\stratz,\strato)$ is a pair where $\strati$ is a strategy for $\p{i}$.
Given a strategy $\strati$, we say that the play $\play=v_0v_1\ldots$ is \emph{compliant} with $\strati$ if $v_{k-1}\in \vertexi$ implies $v_{k} = \strati(v_0\ldots v_{k-1})$ for all $k$.
We refer to a play compliant with $\strati$ and a play compliant with a strategy profile ($\stratz$,$\strato$) as a \emph{$ \strati $-play} and a \emph{$ \stratz\strato $-play}, respectively.

\smallskip\noindent\textbf{Winning.}
Given a game $\game=(\gamegraph,\spec)$, a play $ \play $ in $ \game $ is \emph{winning} if it satisfies%
\footnote{Throughout the paper, we use the terms \enquote{winning for objective $\spec$} and \enquote{satisfying $\spec$} interchangeably.}
$\spec$, i.e., $ \play\in\lang(\spec) $. 
A strategy $\strati$ for $\p{i}$ is \emph{winning from a vertex} $ v\in V $ if all $\strati$-plays from $ v $ are winning. 
A vertex $v\in V$ is  \emph{winning for $\p{i}$}, if there exists a $\p{i}$ winning strategy $\strati$ from $ v $. 
We collect all winning vertices of $\p{i}$ in the \emph{$\p{i}$ winning region} $\team{i}\spec\subseteq V$.
We say a $\p{i}$ strategy is \emph{winning} for $\p{i}$ if it is winning from every vertex in $\team{i}\spec$.

Furthermore, given a game $(\gamegraph,\spec)$, we say a strategy profile $(\stratz,\strato)$ is \emph{winning from a vertex} $ v\in V $ if the $\stratz\strato$-play from $ v $ is winning.
We say a vertex $v\in V$ is \emph{cooperatively winning}, if there exists a winning strategy profile $(\stratz,\strato)$ from $ v $.
We collect all such vertices in the \emph{cooperative winning region} $\team{0,1}\spec\subseteq V$.
We say a strategy profile is \emph{winning} if it is \emph{winning} from every vertex in $\team{0,1}\spec$.
Winning strategies and cooperative winning region for a two-objective game $(\gamegraph,\spec_0,\spec_1)$ are defined analogously. %

\section{Contract-Based Synthesis}\label{sec:overview}
Towards a formalization of our proposed negotiation framework for distributed synthesis %
this section introduces the notion of assume-guarantee contracts (\cref{sec:contracts}) that we build upon, the notion of \irmac-specifications (\cref{sec:irmac-specification}) that describes our main goal, and formally states the synthesis problem we solve in this paper (\cref{sec:statement}).

\subsection{Assume-Guarantee Contracts}\label{sec:contracts}
Given a two-objective game $\game= (\gamegraph, \specz, \speco)$ we define an \emph{assume-guarantee contract} over $\game$ --- a \emph{contract} for short --- as a tuple $\Contract:=(\Tuple{\AssumptionI{0}, \GuaranteeI{0}},\Tuple{\AssumptionI{1}, \GuaranteeI{1}})$
where $ \AssumptionI{i}$ and $ \GuaranteeI{i}$ are LTL specifications over the graph $\gamegraph$ called the \emph{assumption} and the \emph{guarantee} for player $i$, respectively. 
It is well known that such contracts provide a \emph{certified interface} between both players, if they are 
\begin{enumerate}[(i)]
 \item \emph{compatible}, i.e.,
 \begin{equation}\label{equ:contract:compatible}
\lang(\GuaranteeI{i})\subseteq \lang(\AssumptionI{1-i}),~\text{and}
\end{equation}
\item \emph{realizable} by both players from at least one vertex, i.e., 
\begin{equation}\label{equ:contract:realizable}
 \exists v\in V~.~\forall i\in\{0,1\}~.~v\in \team{i}(\AssumptionI{i}\Rightarrow(\GuaranteeI{i}\wedge\speci)).
\end{equation}
\end{enumerate}

Unfortunately, it is also well known that for the full class of $\omega$-regular contracts, conditions \eqref{equ:contract:compatible}-\eqref{equ:contract:realizable} are not strong enough to provide a sound (and complete) proof rule for \emph{verification}, let alone the harder problem of \emph{synthesis}. In verification, one typically resorts to strengthening the contracts with less expressive properties \cite{AG_Pnueli,AG_Stark,AG_Misra_Chandy,AG_Abadi_Lamport}. This approach was also followed by \cite{AGDistributed} for synthesis, requiring contracts to be safety formulas. This, however, always results in an unavoidable conservatism, resulting in incompleteness of the proposed approaches. 

Within this paper, we take a novel approach to this problem which does not restrict the expressiveness of the formulas in $\Tuple{\AssumptionI{i}, \GuaranteeI{i}}$ but rather liberally changes the considered local specification in \eqref{equ:contract:realizable} to one which is \enquote{well-behaved} for contract-based distributed synthesis. We then show, that this liberty does still result in a sound and complete distributed synthesis technique by developing an algorithm to compute such \enquote{well-behaved} specifications whenever the original two-objective game has a cooperative solution. Before formalizing this problem statement in \cref{sec:statement} we first define such \enquote{well-behaved} specifications, called \irmac -- \textsf{i}ndependently \textsf{R}ealizable and \textsf{ma}ximally {\textsf{C}ooperative.

\subsection{\irmac-Specifications}\label{sec:irmac-specification}
We begin by formalizing a new \enquote{well-behaved} local specification for contract realizability.

\begin{definition}[\textsf{iR}-Contracts]
A contract $\Contract:=(\Tuple{\AssumptionI{0}, \GuaranteeI{0}},\Tuple{\AssumptionI{1}, \GuaranteeI{1}})$ over a two-objective game $\game= (\gamegraph, \specz, \speco)$ is called \emph{independently realizable} (\textsf{iR}) from a vertex $v$ if \eqref{equ:contract:compatible} holds and for all $i\in\{0,1\}$
\begin{equation}\label{equ:contract:realizable:b}
    v\in\team{i}\specip~\text{with}~\specip:= \GuaranteeI{i}\wedge (\AssumptionI{i}\Rightarrow \speci),
\end{equation}
 where $\specip$ is called a \emph{contracted local specification}.
\end{definition}

Intuitively, \eqref{equ:contract:realizable:b} requires the guarantees to be realizable by \p{i} without the \enquote{help} of player \p{1-i}, i.e., unconditioned from the assumption, which is in contrast to \eqref{equ:contract:realizable}. It is therefore not surprising that \textsf{iR}-Contracts allow to solve the local contracted games $(\gamegraph,\specip)$ fully independently (and in a zero-sum\footnote{A zero-sum game is a two-player game where the opponent has the negated specification of the protagonist, i.e., $(G,\spec,\neg\spec)$, i.e., the opponent acts fully adversarially. As defined in \cref{sec:prelim}, this (standard version of) games is denoted by tuples $(G,\spec)$.} fashion) while still ensuring that the resulting strategy profile solves the original game $\game$.

\begin{proposition}\label{prop:ir}
Given a two-objective game $\game= (\gamegraph, \specz, \speco)$ with \textsf{iR}-contract $\Contract:=(\Tuple{\AssumptionI{0}, \GuaranteeI{0}},\Tuple{\AssumptionI{1}, \GuaranteeI{1}})$ realizable from a vertex $v$, and contracted local specifications $(\speczp,\specop)$, let $\strati$ be a winning strategy in the (zero-sum) game $(\gamegraph, \specip)$. %

Then the tuple $\Tuple{\stratz,\strato}$ is a winning strategy profile for $\game$ from~$v$.
\end{proposition}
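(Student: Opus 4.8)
The plan is to reduce the statement to an analysis of a single play and then discharge the contractual assumptions using compatibility. Let $\play$ denote the unique $\stratz\strato$-play starting at $v$. By the definition of compliance in \cref{sec:prelim}, $\play$ is simultaneously a $\stratz$-play and a $\strato$-play from $v$. Since $\strati$ is winning in the zero-sum game $(\gamegraph,\specip)$ from $v$ and $v\in\team{i}\specip$ by \eqref{equ:contract:realizable:b}, every $\strati$-play from $v$ satisfies $\specip$; in particular $\play$ does. Applying this for both $i\in\{0,1\}$ yields that $\play$ satisfies both contracted local specifications $\speczp$ and $\specop$, which is the only place the winning hypotheses are used.

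The core of the argument then unpacks these two facts. From $\play\in\lang(\speczp)$ with $\speczp=\GuaranteeI{0}\wedge(\AssumptionI{0}\Rightarrow\specz)$ I read off $\play\models\GuaranteeI{0}$ and $\play\models(\AssumptionI{0}\Rightarrow\specz)$; symmetrically, from $\play\in\lang(\specop)$ I get $\play\models\GuaranteeI{1}$ and $\play\models(\AssumptionI{1}\Rightarrow\speco)$. To conclude $\play\models\specz$ it now suffices to establish $\play\models\AssumptionI{0}$: instantiating the compatibility condition \eqref{equ:contract:compatible} at $i=1$ gives $\lang(\GuaranteeI{1})\subseteq\lang(\AssumptionI{0})$, and since $\play\models\GuaranteeI{1}$ we obtain $\play\models\AssumptionI{0}$, hence $\play\models\specz$. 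The argument for $\speco$ is the mirror image: instantiating \eqref{equ:contract:compatible} at $i=0$ yields $\lang(\GuaranteeI{0})\subseteq\lang(\AssumptionI{1})$, and $\play\models\GuaranteeI{0}$ then forces $\play\models\AssumptionI{1}$, so $\play\models\speco$. Thus $\play$ satisfies both $\specz$ and $\speco$, meaning $\Tuple{\stratz,\strato}$ is winning from $v$.

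The only conceptual subtlety worth flagging — rather than a genuine technical obstacle — is the apparent circularity: Player~$0$'s objective is unlocked by assumption $\AssumptionI{0}$, which I satisfy via Player~$1$'s guarantee, while Player~$1$'s objective is unlocked by $\AssumptionI{1}$, satisfied via Player~$0$'s guarantee. This reasoning is not circular precisely because the guarantees $\GuaranteeI{0}$ and $\GuaranteeI{1}$ are discharged \emph{unconditionally}: they are the leftmost conjuncts of $\speczp$ and $\specop$, so the winning strategies $\stratz,\strato$ enforce them with no appeal to any assumption. This is exactly the point of the \textsf{iR} condition \eqref{equ:contract:realizable:b} over \eqref{equ:contract:realizable}, and I would make it explicit so the reader sees why breaking the assumption dependency requires the independent (rather than merely conditional) realizability of the guarantees.
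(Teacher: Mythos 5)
Your proof is correct and follows essentially the same route as the paper's: both first use $v\in\team{i}\specip$ to conclude the $\stratz\strato$-play satisfies $\speczp\wedge\specop$, then discharge $\AssumptionI{i}$ via the other player's unconditional guarantee through \eqref{equ:contract:compatible} and apply the implication to recover $\specz\wedge\speco$. The paper phrases this last step as a language-inclusion chain $\lang(\speczp\wedge\specop)\subseteq\lang(\specz\wedge\speco)$ while you argue on the single play, but the content is identical; your closing remark on why the argument is not circular is a nice clarification, not a deviation.
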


\begin{proof}
As $v\in \team{i}\specip$, $\strati$ is winning from $v$ for game $(\gamegraph,\specip)$, every $\strati$-play from $v$ satisfies $\specip = \GuaranteeI{i}\wedge (\AssumptionI{i}\Rightarrow\speci)$.
Therefore, every $(\stratz,\strato)$-play from $v$ satisfies both $\speczp$ and $\specop$.
Now, let us show that $\lang(\speczp\cap\specop) \subseteq \lang(\specz\wedge\speco)$.
Using the definition of contracted local specifications and by \eqref{equ:contract:compatible}, we have
$ \lang(\speczp\wedge\specop)= \lang(\GuaranteeI{0}\wedge (\AssumptionI{0}\Rightarrow\specz)) \cap \lang(\GuaranteeI{1}\wedge (\AssumptionI{1}\Rightarrow\speco))\subseteq_{\eqref{equ:contract:compatible}} \lang(\AssumptionI{1}\wedge (\AssumptionI{0}\Rightarrow\specz))\cap \lang(\AssumptionI{0}\wedge (\AssumptionI{1}\Rightarrow\speco))=\lang(\AssumptionI{0})\cap\lang(\AssumptionI{1})\cap\lang(\AssumptionI{0}\Rightarrow\specz)\cap\lang(\AssumptionI{1}\Rightarrow\speco)=\lang(\specz)\cap\lang(\speco) = \lang(\specz\wedge\speco)$.
Therefore, every $(\stratz,\strato)$-play from $v$ satisfies $\specz\wedge\speco$. Hence, $(\stratz,\strato)$ is a winning strategy profile for $\game$ from $v$.
\end{proof}

By the way they are defined, \textsf{iR}-Contracts can be used to simply encode a single winning strategy profile from a vertex, which essentially degrades contract-based synthesis to solving a single cooperative game with specification $\specz\cup\speco$. The true potential of \textsf{iR}-Contracts is only reveled if they are reduced to the \enquote{essential cooperation} between both players. Then the local contracted specifications $\specip$ will give each player as much freedom as possible to choose its local strategy. This is formalized next.

\begin{definition}[\textsf{iRmaC}-Specifications]
 Given a two-objective game $\game= (\gamegraph, \specz, \speco)$, a pair of specifications $(\speczp,\specop)$ is said to be \emph{independently realizable} and \emph{maximally cooperative} (\textsf{iRmaC}) if
 \begin{subequations}\label{equ:ceis}
 \begin{align}
\lang(\specz\wedge\speco) &= \lang(\speczp\wedge\specop), \text{ and}\label{equ:ceis:ce}\\
 \team{0,1} (\specz\wedge\speco) &= \team{0}\speczp~\cap~\team{1}\specop.\label{equ:ceis:is}
\end{align}
\end{subequations}
\end{definition}
\noindent Here  \eqref{equ:ceis:ce} ensures that the contracted local games $(\gamegraph,\specip)$ do not eliminate any cooperative winning play allowed by the original specifications, while \eqref{equ:ceis:is} ensures that the combination of local winning regions does not restrict the cooperative winning region. These properties of \irmac-specifications now allow each player to extract a strategy $\strati$ \emph{locally and fully independently} by solving the (zero-sum) game $(\gamegraph,\specip)$. Then it is guaranteed that the resulting (independently chosen) strategy profile $(\stratz,\strato)$ is winning in $\game= (\gamegraph, \specz, \speco)$. This is formalized next.

\begin{proposition}\label{prop:irmac}
Given a two-objective game $\game= (\gamegraph, \specz, \speco)$ with \textsf{iRmaC} specifications-$(\speczp,\specop)$, the following are equivalent: %
\begin{enumerate}[(i)]
    \item there exists a winning strategy profile from $v$ in $(\gamegraph,\specz,\speco)$,\label{item:prop:irmac1}
    \item for each $i\in\{0,1\}$, there exists a $\p{i}$ winning strategy from $v$ in $(\gamegraph,\specip)$.\label{item:prop:irmac2}
\end{enumerate}
\begin{proof}
    \begin{inparaitem}
        \item[(\cref{item:prop:irmac1} $\Rightarrow$\cref{item:prop:irmac2})] If there exists a winning strategy profile from $v$ for the game $(\gamegraph,\specz,\speco)$, then $v\in\team{0,1}(\specz\wedge\speco)$. Then, by \eqref{equ:ceis:is}, $v\in\team{i}\specip$ for each $i\in\{0,1\}$. Hence, there exists a $\p{i}$ winning strategy from $v$ in $(\gamegraph,\specip)$ for each $i\in\{0,1\}$.
        \item[(\cref{item:prop:irmac1}$\Leftarrow$\cref{item:prop:irmac2})] Similarly, if there exists a $\p{i}$ winning strategy from $v$ for the game $(\gamegraph,\specip)$ for each $i\in\{0,1\}$, then $v\in\team{i}\specip$ for each $i\in\{0,1\}$. Then, by \eqref{equ:ceis:is}, $v\in\team{0,1}(\specz\wedge\speco)$, and hence, there exists a winning strategy profile from $v$ for the game $(\gamegraph,\specz,\speco)$.
    \end{inparaitem} 
\end{proof}
\end{proposition}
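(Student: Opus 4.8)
The plan is to reduce each of the two statements to a set-membership condition on the vertex $v$ and then to read off the equivalence directly from the defining property \eqref{equ:ceis:is} of \textsf{iRmaC}-specifications; the substantive work has effectively been front-loaded into that definition, so the proof itself is short.

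First I would unfold statement~\cref{item:prop:irmac1} via the notion of cooperative winning from \cref{sec:prelim}: by definition of the cooperative winning region, a winning strategy profile exists from $v$ in $(\gamegraph,\specz,\speco)$ if and only if $v\in\team{0,1}(\specz\wedge\speco)$. Next I would unfold statement~\cref{item:prop:irmac2} via the definition of a player's winning region: for a fixed $i$, a $\p{i}$ winning strategy exists from $v$ in the zero-sum game $(\gamegraph,\specip)$ if and only if $v\in\team{i}\specip$, and demanding this for both $i\in\{0,1\}$ is exactly the condition $v\in\team{0}\speczp\cap\team{1}\specop$.

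Finally, I would invoke the defining equation \eqref{equ:ceis:is}, namely $\team{0,1}(\specz\wedge\speco)=\team{0}\speczp\cap\team{1}\specop$, which makes the two membership conditions literally identical and hence equivalent. If one prefers to present the two implications separately, as in the statement, the direction \cref{item:prop:irmac1}$\Rightarrow$\cref{item:prop:irmac2} uses the inclusion $\team{0,1}(\specz\wedge\speco)\subseteq\team{0}\speczp\cap\team{1}\specop$, and the converse uses the reverse inclusion.

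The step I expect to be the main obstacle is, frankly, that there is no real obstacle at this point: all the mathematical content has been absorbed into \eqref{equ:ceis:is}, whose establishment is the genuinely hard task carried out elsewhere. The only matters requiring care are pairing each implication with the correct inclusion of the set equality, and observing that the language-equality condition \eqref{equ:ceis:ce} is not needed for this particular equivalence --- it is \eqref{equ:ceis:is} alone that drives the argument.
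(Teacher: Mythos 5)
Your proposal is correct and follows essentially the same route as the paper's own proof: both directions unfold the two statements into the membership conditions $v\in\team{0,1}(\specz\wedge\speco)$ and $v\in\team{0}\speczp\cap\team{1}\specop$, and then apply the set equality \eqref{equ:ceis:is}. Your observation that \eqref{equ:ceis:ce} is not needed here is also consistent with the paper's argument.
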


With this, we argue that \irmac-specifications indeed provide a  \emph{maximally cooperative} contract for distributed synthesis which allows to fully decentralize remaining strategy choices.

\subsection{Problem Statement and Outline}\label{sec:statement}
Based on the desirable properties of \irmac-specifications outlined before, the main contribution of this paper is an algorithm to compute \irmac-specifications for \emph{two-objective parity games}, which are a canonical representation of two-player games with an LTL objective for each player. %

\begin{problem}\label{prob:main}
 Given a two-objective parity game $\game= (\gamegraph, \specz, \speco)$, compute \irmac-specifications $(\speczp,\specop)$.
\end{problem}

In particular, we provide an algorithm which \emph{always} outputs an \irmac-specification s.t.\ the latter only results in an empty cooperative winning region (via \eqref{equ:ceis:is}) \emph{if and only if} $\game$ is not cooperatively solvable. Thereby, our approach constitutes a \emph{sound and complete} approach to distributed logical controller synthesis. All existing solutions to this problem, i.e., \cite{Bosy-based-contract,AGDistributed}, only provide a sound approach. In addition, as outlined before, the computed \irmac-specifications then allow to choose winning strategies (i.e., controllers) in a fully decentralized manner (due to \cref{prop:irmac}). 

Our algorithm for solving \cref{prob:main} is introduced in \cref{section:templates} and \cref{sec:negotiate}. Conceptually, this algorithm builds upon the recently introduced formalism of \emph{permissive templates} by Anand et al. \cite{SIMAssumptions22,SIMController22} and utilizes their efficient computability, adaptability and permissiveness to solve \cref{prob:main}. Interestingly, this  approach does not only allow us to solve \cref{prob:main} but also allows resulting local strategies to be easily adaptable to new local objectives and unforeseen circumstances, as illustrated in the motivating example of \cref{sec:intro}. We showcase the computational efficiency and the extra features of our approach by experiments with a prototype implementation on a set of control-inspired benchmarks in \cref{sec:experiements}.

\section{Contracts as Templates}\label{section:templates}
This section shows how \emph{templates} can be used to solve \cref{prob:main} and starts with an illustrative example to convey some intuition.

\begin{figure*}
\vspace{-0.5cm}
\centering
 \begin{tikzpicture}
      \node[player0] (0) at (0, 0) {$a$};
      \node[player1] (1) at (\fpeval{\pos}, 0) {$b$};
			\node[player1] (2) at (-\fpeval{\pos},0) {$c$};
			\node[player0] (3) at (2*\fpeval{\pos}, 0) {$d$};
			\path[->] (0) edge[loop above] () edge[bend left = 0] (1) edge[bend left = 20] (2);
			\path[->] (1) edge[loop above] () edge[bend left = 0] (3);
			\path[->] (2) edge[loop above] () edge[bend left = 20] (0);
			\path[->] (3) edge[bend left = 20] (2.south east);

\node[text width=8cm,anchor=north west] at (4,0.6) {
$\spec_0=\Box\Diamond\{c\}$\\
$\spec_1=\Diamond\Box\{a,c,d\}$ \\
$\spec_1'=\Diamond\Box\{a,b,c\}$
}; 

\node[text width=8cm,anchor=north west] at (6.8,0.6) {
$\Rightarrow$ $\assumpI{0}=\templategrlive(\{e_{bd}\})$,\\
$\Rightarrow$ $\assumpI{1}=\templatecolive(e_{ab})$,\\
$\Rightarrow$ $\assumpI{1}'=\mathtt{true}$ ,
};

\node[text width=8cm,anchor=north west] at (10.2,0.6) {
$\StratI{0}=\templategrlive(\{e_{ac}\})$\\
$\StratI{1}=\templatecolive(e_{bb})$\\
$\StratI{1}'=\templatecolive(e_{bd})$
};
\end{tikzpicture}
\vspace{-0.8cm}
 \caption{A two-player game graph discussed in \cref{sec:example} with $\po$ (squares) and $\pz$ (circles) vertices, different winning conditions $\spec_i$, and corresponding winning assumption templates $\assump_i$ and strategy templates $\Strat_i$ for Player $i$.}\label{fig:introExample}
\end{figure*}

\begin{example}\label{sec:example}
In order to appreciate the simplicity, adaptability and compositionality of templates consider the two-objective game in \cref{fig:introExample}. 
The winning condition $\specI{0}$ for $\p{0}$ requires vertex $c$ to be seen infinitely often. Intuitively, every winning strategy for \p{0} w.r.t.\ $\specI{0}$ needs to eventually take the edge~$e_{ac}$ if it sees vertex $a$ infinitely often. 
Furthermore, \p{0} can only win from vertex $b$ with the help of \p{1}. In particular, \p{1} needs to ensure that  whenever vertex~$b$ is seen infinitely often it takes edge~$e_{bc}$ infinitely often. These two conditions can be concisely formulated via the strategy template $\Stratz=\templategrlive(\{e_{ac}\})$ and an assumption template $\assumpI{0}=\templategrlive(\{e_{bd}\})$, both given by what we call a live-edge template -- if the source is seen infinitely often, the given edge has to be taken infinitely often. 
 It is easy to see that every $\p{0}$ strategy that satisfies $\Stratz$ is winning for $\specI{0}$ under the assumption that $\p{1}$ chooses a strategy that satisfies $\assumpz$.
 
Now, consider the winning condition $\speco$ for $\p{1}$ which requires the play to eventually stay in region $\{a,c,d\}$. This induces assumption $\assumpo$ on $\p{0}$ and strategy template $\Strato$ for $\p{1}$ given in \cref{fig:introExample} (right). Both are \emph{co-liveness} templates -- the corresponding edge can only be taken \emph{finitely} often. This ensures that all edges that lead to the region $\{a,c,d\}$ are taken only finitely often.

The tuples of strategy and assumption templates $(\assump_i,\Strati)$  we have constructed for both players in the above example will be called \emph{contracted strategy-masks}, \csm for short. 
If the players now share the assumptions from their local \csms, it is easy to see that in the above example both players can ensure the assumptions made by other player in addition to their own strategy templates, i.e., each $\p{i}$ can realize $\assumpI{1-i}\wedge\Strati$ from all vertices. In this case, we call the \csms $(\assump_i,\Strati)$ \emph{compatible}. 
In such situations, the new specifications $(\speczp,\specop)$ with $\specip = \assumpI{1-i}\wedge (\assumpi\Rightarrow\speci)$ are directly computable from the given \csms and indeed form an \irmac-contract.

Unfortunately, locally computed \csms are not always compatible. To see this, consider the slightly modified winning condition $\specI{1}'$ for $\p{1}$ that induces strategy template $\Strato'$ for $\p{1}$. This template  requires the edge $e_{bd}$ to be taken only \emph{finitely} often. Now, $\p{1}$ cannot realize both $\assumpz$ and $\Strato'$ as the conditions given by both templates for edge $e_{bd}$ are \emph{conflicting} -- the same edge cannot be taken infinitely often \emph{and} finitely often.
In this case one more round of negotiation is needed to ensure that both players eventually avoid vertex $d$ by modifying the objectives to $\speci' = \speci\wedge\LTLeventually\LTLalways \neg d$. This will give us a new pair of \csms that are indeed compatible, and a new pair of objectives $(\speczp,\specop)$ that are now again an \irmac specification.
\end{example}

In the following we formalize the notion of templates (\cref{section:templates:recal}) and \csms (\cref{sec:csm}) and show that, if compatible, they indeed provide \irmac-specifications (\cref{sec:templates:solve}). We further show how to compute \csms for each player (\cref{section:template extraction}). The outlined negotiation for compatibility is then discussed in \cref{sec:negotiate}.

\subsection{Permissive Templates}\label{section:templates:recal}
This section recalls the concept of \emph{templates} from \cite{SIMAssumptions22,SIMController22}.
In principle, a template is simply an LTL formula $\template$ over a game graph $G$. We will, however, restrict attention to four distinct types of such formulas, and interpret them as a succinct way to represent a set of strategies for each player, in particular all strategies that \emph{follow} $\template$. Formally, a $ \p{i} $ strategy $\strati$ \emph{follows} $ \template $ if every $ \strati $-play belongs to $\lang(\template) $, i.e., strategy $\strati$ is winning from all vertices in the game $(\gamegraph,\template)$.
The exposition in this section follows the presentation in \cite{SIMAssumptions22} where more illustrative examples and intuitive explanations can be found.

\smallskip\noindent\textbf{Safety Templates.} Given a set $ \safegroup\subseteq E $ of \emph{unsafe edges}, the safety template is defined as
	$\templatesafe(\safegroup)\coloneqq \LTLalways \wedge_{e\in \safegroup} \neg e$,
where an edge $e=(u,v)$ is equivalent to the LTL formula $u\wedge\LTLnext v$. A safety template requires that an edge to $S$ should never be taken. %

\smallskip\noindent\textbf{Live-Group Templates.} A \emph{live-group} $\livegroupSingleN = \Set{e_j}_{j\geq 0}$ is a set of edges $e_j = (s_j,t_j)$ with source vertices $\src(\livegroupSingleN):=\Set{s_j}_{j\geq 0}$. Given a set of live-groups 
$\livegroup=\left\{\livegroupSingle\right\}_{i\geq 0}$  we define a live-group template as
	$\templategrlive(\livegroup) \coloneqq \bigwedge_{i\geq 0}\square\lozenge src(\livegroupSingle)\Rightarrow\square\lozenge \livegroupSingle$.
A live-group template requires that if some vertex from the source of a live-group is visited infinitely often, then some edge from this group should be taken infinitely often by the following strategy. 

\smallskip\noindent\textbf{Conditional Live-Group Templates.}  A \emph{conditional live-group} over $\gamegraph$ is a pair $ (R, \livegroup ) $, where $ R\subseteq V $ and $ \livegroup $ is a set of live groups. Given a set of conditional live groups $ \condlivegroup $ we define a \emph{conditional live-group template} as 
	$\templatecondlive(\condlivegroup) \coloneqq \bigwedge_{(R,\livegroup)\in \condlivegroup}\left(\square\lozenge R\Rightarrow \templategrlive(\livegroup)\right)$.
A conditional live-group template requires that for every pair $(R,\livegroup)$, if some vertex from the set $R$ is visited infinitely often, then a following strategy must follow the live-group template $\templategrlive(\livegroup)$.

\smallskip\noindent\textbf{Co-liveness Templates.} Given a set of \emph{co-live} edges $ \colivegroup $ a co-live template is defined as
	$\templatecolive(\colivegroup) \coloneqq \bigwedge_{e\in\colivegroup}\lozenge\square \neg e$.
A co-liveness template requires that edges in $ \colivegroup $ are only taken finitely often.

\smallskip\noindent\textbf{Composed Templates.}
In the following, a template $\template:=\templatesafe(\safegroup)\wedge\templatecolive(\colivegroup)\wedge\templatecondlive(\condlivegroup)$ will be associated with the tuple $(\safegroup,\colivegroup,\condlivegroup)$, denoted by $\template\compt(\safegroup,\colivegroup,\condlivegroup)$.
Similarly, $\template\compt(\safegroup,\colivegroup,\livegroup)$ denotes the template $\template:=\templatesafe(\safegroup)\wedge\templatecolive(\colivegroup)\wedge\templategrlive(\livegroup)$.
We further note that the conjunction of two templates $\template\compt(\safegroup,\colivegroup,\condlivegroup)$ and $\template'\compt(\safegroup',\colivegroup',\condlivegroup')$ is equivalent to the template $(\template\wedge\template')\compt(\safegroup\cup\safegroup',\colivegroup\cup\colivegroup',\condlivegroup\cup\condlivegroup')$ by the definition of conjunction of LTL formulas.

\subsection{Contracted Strategy-Masks (\csms)}\label{sec:csm}
Towards our goal of formalizing \irmac-specifications via templates, this section defines \emph{contracted strategy-masks} 
which contain two templates $\assumpi$ and $\Strati$, representing a set of \p{1-i}- and \p{i}-strategies respectively, which can be interpreted as the assumption $\assumpi$ on player \p{1-i} under which \p{i} can win the local game $(\gamegraph,\speci)$ with any strategy from $\Strati$.

Towards this goal, we first observe that every template in \cref{section:templates:recal} is defined via a set of edges that a following strategy needs to handle in a particular way. Intuitively, we can therefore \enquote{split} each template into a part restricting strategy choices for \p{0} (by only considering edges originating from $V_0$) and a part restricting strategy choices for \p{1} (by only considering edges originating from $V_1$), which then allows us to define \csm.

\begin{definition}
 Given a game graph $\gamegraph=(V,E)$, a template $\template\compt(\safegroup,\colivegroup,\condlivegroup)$ over $\gamegraph$ is an \emph{assumption template} (resp. a \emph{strategy template}) for player $i$ if for all edges $e\in\safegroup\cup\colivegroup\cup\overline{\livegroupSingleN}$ holds that $src(e)\in V_{1-i}$ (respectively $src(e)\in V_i$) where $\overline{\livegroupSingleN}:=\bigcup\{\livegroupSingleN\in\livegroup~\mid~(\cdot,\livegroup)\in\condlivegroup\}$.
\end{definition}

\begin{definition}
 Given a game $(\gamegraph,\speci)$, a \emph{contracted strategy-mask} (\csm) for player $i$ is a tuple $(\assumpI{i}, \StratI{i})$, such that $\assumpI{i}\compt(\safegroupS_i,\colivegroupS_i,\condlivegroupS_i)$ and $\StratI{i}\compt(\safegroupA_i, \colivegroupA_i, \condlivegroupA_i)$ are assumption and strategy templates for player $i$, respectively.
\end{definition}

We next formalize the intuition that \csms collect winning strategies for \p{i} under assumptions on \p{1-i}.

\begin{definition}\label{def:csm_winning}
A \csm $(\assumpI{i}, \StratI{i})$ is \emph{winning} for Player $i$ in $(\gamegraph,\speci)$ \emph{from vertex} $v$ if for every $ \p{i} $ strategy $ \strati $ following $ \StratI{i} $ and every $\p{1-i}$ strategy $\stratI{1-i}$ following $\assumpI{i}$ the $\stratz\strato$-play originating from $v$ is winning.
Moreover, we say a \csm $(\assumpI{i}, \StratI{i})$ is \emph{winning} for Player $i$ in $(\gamegraph,\speci)$ if it is winning from every vertex in $\team{0,1}\speci$.
\end{definition}
We denote by $\team{i}(\assumpI{i}, \StratI{i})$ the set of vertices from which $(\assumpI{i}, \StratI{i})$ is winning for Player $i$ in $(\gamegraph,\speci)$.
Due to localness of our templates, the next remark follows.
\begin{remark}\label{rem:winundera}
If a \csm $(\assumpI{i}, \StratI{i})$ is winning for Player $i$ in $(\gamegraph,\speci)$ from vertex $v$, then every $\p{i}$ strategy $\strati$ following $\Strati$ is winning for $\p{i}$ in the game $(\gamegraph,\assumpi\Rightarrow\speci)$.
\end{remark}

\subsection{Representing Contracts via \csm}\label{sec:templates:solve}
The previous subsection has formalized the concept of a \csm for a local (zero-sum) game $(\gamegraph,\speci)$ . This section now shows how under which conditions the combination of two \csms $(\assumpI{0}, \StratI{0})$ and $(\assumpI{1}, \StratI{1})$ (one for each player) allows to construct a contract
 \begin{equation}\label{equ:contract-csm}
  \Contract:=((\assumpI{0},\assumpI{1}),(\assumpI{1},\assumpI{0})),
 \end{equation}
(i.e, setting  $\AssumptionI{i}:= \assumpI{i}$ and $\GuaranteeI{i}:= \assumpI{1-i}$), which induces \irmac-specifications $(\speczp,\specop)$ as in \eqref{equ:contract:realizable:b}. %

The first condition we need is compatibility.
\begin{definition}[Compatible \csms]\label{def:compatible}
	Two \csms, $(\assumpI{0}, \StratI{0})$ for $\p{0}$ and $(\assumpI{1}, \StratI{1})$ for $\p{1}$, are said to be \emph{compatible}, if for each $i\in\{0,1\}$, there exists a $\p{i}$ strategy $\strati$ that follows $\Strati\wedge\assumpI{{1-i}}$.
\end{definition}
Intuitively, as $\assumpI{1-i}$ is the assumption on $\p{i}$ and $\Strati$ represents the template that $\p{i}$ will follow, we need to find a strategy that follows both templates. 
Before going further, let us first show a simple result that follows from \cref{def:compatible}.
\begin{proposition}\label{prop:csm-iR}
 Given a two-objective game $\game= (\gamegraph, \specz, \speco)$, let $(\assumpI{0}, \StratI{0})$ and $(\assumpI{1}, \StratI{1})$ be two compatible \csms s.t.\ $(\assumpI{i}, \StratI{i})$ is winning from a vertex $v$ for \p{i} in $(\gamegraph,\speci)$. Then the contract $\Contract$ as in \eqref{equ:contract-csm} is an \ir-contract realizable from $v$.
\end{proposition}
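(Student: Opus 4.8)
The plan is to verify the two defining conditions of an \ir-contract directly for the contract $\Contract$ of \eqref{equ:contract-csm}, recalling that this construction sets $\AssumptionI{i}:=\assumpI{i}$ and $\GuaranteeI{i}:=\assumpI{1-i}$. Thus I must check (a) the compatibility inclusion \eqref{equ:contract:compatible}, namely $\lang(\GuaranteeI{i})\subseteq\lang(\AssumptionI{1-i})$ for each $i$, and (b) the independent-realizability condition \eqref{equ:contract:realizable:b}, namely $v\in\team{i}\specip$ with $\specip=\GuaranteeI{i}\wedge(\AssumptionI{i}\Rightarrow\speci)=\assumpI{1-i}\wedge(\assumpi\Rightarrow\speci)$.

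For (a) I would note that the inclusion is immediate---in fact an equality---by the very shape of \eqref{equ:contract-csm}: since $\GuaranteeI{i}=\assumpI{1-i}$ and $\AssumptionI{1-i}=\assumpI{1-i}$, the two formulas coincide, so $\lang(\GuaranteeI{i})=\lang(\AssumptionI{1-i})$ and \eqref{equ:contract:compatible} holds for both $i\in\{0,1\}$. This simply reflects the design choice of letting each player's guarantee be exactly the other player's assumption.

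The substantive step is (b), and here I would invoke the \emph{compatibility of the \csms} (\cref{def:compatible}). By that definition, for each $i$ there is a $\p{i}$ strategy $\strati$ that follows $\Strati\wedge\assumpI{1-i}$; since following a conjunction of templates means every $\strati$-play lies in $\lang(\Strati)\cap\lang(\assumpI{1-i})$, this single strategy simultaneously follows $\Strati$ and follows $\assumpI{1-i}$. Because $\strati$ follows $\Strati$ and the \csm $(\assumpI{i},\Strati)$ is winning for \p{i} from $v$, \cref{rem:winundera} yields that $\strati$ is winning from $v$ in $(\gamegraph,\assumpi\Rightarrow\speci)$, i.e.\ every $\strati$-play from $v$ satisfies $\assumpi\Rightarrow\speci$. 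Because $\strati$ also follows $\assumpI{1-i}$, every $\strati$-play lies in $\lang(\assumpI{1-i})$. Conjoining the two facts, every $\strati$-play from $v$ satisfies $\assumpI{1-i}\wedge(\assumpi\Rightarrow\speci)=\specip$, so $\strati$ witnesses $v\in\team{i}\specip$, which is exactly \eqref{equ:contract:realizable:b}.

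I expect the only real obstacle to be conceptual rather than computational: one must keep the two distinct notions of \enquote{compatibility} apart---the trivial contract-level inclusion \eqref{equ:contract:compatible} versus the genuinely used \csm-compatibility of \cref{def:compatible}---and one must apply \cref{rem:winundera} with the correct reading that \enquote{winning from $v$} for the \csm transfers to winning from $v$ for the relativized objective $\assumpi\Rightarrow\speci$. The crucial point is that \csm-compatibility is precisely what produces a \emph{single} $\p{i}$ strategy discharging both obligations at once---staying winning via $\Strati$ and enforcing the guarantee $\assumpI{1-i}$---so no separate argument combining two different strategies is needed.
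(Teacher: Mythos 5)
Your proof is correct and follows essentially the same route as the paper's: compatibility of the \csms yields a single strategy $\strati$ following $\Strati\wedge\assumpI{1-i}$, and \cref{rem:winundera} upgrades "winning \csm from $v$" to "every $\strati$-play from $v$ satisfies $\assumpi\Rightarrow\speci$", giving $v\in\team{i}\specip$. The only difference is that you also spell out the (trivially satisfied) inclusion \eqref{equ:contract:compatible}, which the paper leaves implicit.
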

\begin{proof}
We need to show that $v\in\team{i}(\assumpI{1-i}\wedge(\assumpi\Rightarrow\speci))$ for each $i=0,1$.
Firstly, as the \csms are compatible, for each $i$, there exists a $\p{i}$ strategy $\strati$ that follows $\Strati\wedge\assumpI{1-i}$. 
Hence, every $\strati$-play satisfies both $\assumpI{1-i}$.
Secondly, as \csm $(\assumpi,\Strati)$ is winning from $v$ for $\p{i}$ in game $(\gamegraph,\speci)$, by \cref{rem:winundera}, every $\strati$-play from $v$ satisfies $\assumpI{i}\Rightarrow\speci$.
Therefore, every $\strati$-play from $v$ satisfies $\assumpI{1-i}\wedge(\assumpi\Rightarrow\speci)$, and hence, $v\in\team{i}(\assumpI{1-i}\wedge(\assumpi\Rightarrow\speci))$.
\end{proof}

To ensure that two compatible \csms as in \cref{prop:csm-iR} are not only an \ir-contract but also provide \irmac-specifications, we utilize the main result from \cite{SIMAssumptions22} which showed that assumption templates can be computed in an \emph{adequately permissive}\footnote{We refer to \cite{SIMAssumptions22} for an elaborate discussion of conditions (i)-(iii) in \cref{def:adequate assumption}.} way over a given parity game. This notion is translated to \csms next.

\begin{definition}\label{def:adequate assumption}
Given a game $(\gamegraph,\speci)$ and a \csm $(\assumpI{i}, \StratI{i})$ for Player $i$, we call this \csm \emph{adequately permissive} for $ (\gamegraph,\speci) $ if it is
	\begin{enumerate}[(i)]
		\item \emph{sufficient}: $\team{i}(\assumpI{i}, \StratI{i}) \supseteq \team{0,1}\speci$,\label{item:def:APA:sufficient}
		\item \emph{implementable}: $\team{1-i}\assump_i = V$ and $\team{i}\Strati = V$
		\item \emph{permissive}: $\lang(\speci)\subseteq \lang(\assumpi)$.\label{item:def:APA:permissive}
	\end{enumerate}
\end{definition}

Note that the sufficiency condition makes the \csm winning as formalized in the next remark.
\begin{remark}\label{rem:winningcsm}
	If a \csm $(\assumpi,\Strati)$ for $\p{i}$ in a game $(\gamegraph,\speci)$ is sufficient, then it is winning for $\p{i}$ in $(\gamegraph,\speci)$.
\end{remark}

With this, we are ready to prove the main theorem of this section, which shows that synthesis of \irmac-specifications reduces to finding adequately permissive \csms which are compatible.

\begin{theorem}\label{thm:csm-iRmaC}
Given a two-objective game $\game= (\gamegraph, \specz, \speco)$, let $(\assumpI{0}, \StratI{0})$ and $(\assumpI{1}, \StratI{1})$ be two \emph{compatible} \csms s.t.\ $(\assumpI{i}, \StratI{i})$ is \emph{adequately permissive} for \p{i} in $(\gamegraph,\speci)$. Then $(\speczp,\specop)$ with $\specip = \assumpI{1-i}\wedge(\assumpI{i}\Rightarrow\speci)$ are \textsf{iRmaC}-specifications.
\end{theorem}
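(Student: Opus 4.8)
The plan is to verify directly the two defining conditions of \irmac-specifications for the pair $(\speczp,\specop)$ with $\specip = \assumpI{1-i}\wedge(\assumpI{i}\Rightarrow\speci)$, namely the language equality \eqref{equ:ceis:ce} and the winning-region equality \eqref{equ:ceis:is}. A useful preliminary observation is that the contract $\Contract$ from \eqref{equ:contract-csm} sets $\GuaranteeI{i}=\assumpI{1-i}=\AssumptionI{1-i}$, so the compatibility requirement \eqref{equ:contract:compatible}, $\lang(\GuaranteeI{i})\subseteq\lang(\AssumptionI{1-i})$, holds trivially. This lets me reuse \cref{prop:ir} and \cref{prop:csm-iR} without separately re-establishing compatibility of the contract.

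For the language condition \eqref{equ:ceis:ce} I would argue both inclusions at the level of plays. The inclusion $\lang(\speczp\wedge\specop)\subseteq\lang(\specz\wedge\speco)$ is precisely the language chain already carried out in the proof of \cref{prop:ir} (it only uses the trivial compatibility noted above), so I would simply invoke it. For the reverse inclusion I would use permissiveness, i.e.\ condition \eqref{item:def:APA:permissive} of \cref{def:adequate assumption}: any play $\rho$ satisfying $\specz\wedge\speco$ also satisfies $\assumpI{0}$ and $\assumpI{1}$ because $\lang(\speci)\subseteq\lang(\assumpI{i})$, and it satisfies the implications $\assumpI{i}\Rightarrow\speci$ since their consequents hold; hence $\rho$ satisfies $\speczp\wedge\specop$.

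For the winning-region condition \eqref{equ:ceis:is} I would treat the two directions separately. For ``$\subseteq$'', take $v\in\team{0,1}(\specz\wedge\speco)$. Any cooperative strategy profile witnessing this also wins for each single objective $\speci$ (since the $\stratz\strato$-play satisfies $\specz\wedge\speco$ and hence $\speci$), so $v\in\team{0,1}\speci$ for both $i$; by the sufficiency clause \eqref{item:def:APA:sufficient} of \cref{def:adequate assumption} each \csm $(\assumpI{i},\StratI{i})$ is then winning from $v$, and \cref{prop:csm-iR} (which consumes compatibility of the \csms) shows the contract is \ir-realizable from $v$, i.e.\ $v\in\team{i}\specip$ for both $i$, so $v\in\team{0}\speczp\cap\team{1}\specop$. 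For ``$\supseteq$'', take $v\in\team{0}\speczp\cap\team{1}\specop$; together with the trivial compatibility this is exactly \ir-realizability of the contract from $v$ as in \eqref{equ:contract:realizable:b}, so \cref{prop:ir} yields a winning strategy profile from $v$, giving $v\in\team{0,1}(\specz\wedge\speco)$.

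The routine parts are the LTL bookkeeping for \eqref{equ:ceis:ce} and the ``$\supseteq$'' direction of \eqref{equ:ceis:is}, both of which are direct appeals to \cref{prop:ir}. The step I expect to require the most care is the ``$\subseteq$'' direction of \eqref{equ:ceis:is}: the crux is the observation that cooperative winning for $\specz\wedge\speco$ entails cooperative winning for each $\speci$, so that the \emph{per-objective} sufficiency guarantee of the adequately permissive \csms actually applies at $v$ and can be combined pointwise with \cref{prop:csm-iR}.
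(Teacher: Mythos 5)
Your proof is correct and follows essentially the same route as the paper: permissiveness for $\lang(\specz\wedge\speco)\subseteq\lang(\speczp\wedge\specop)$, a set-theoretic argument for the converse, and sufficiency plus \cref{prop:csm-iR} for the $\subseteq$ direction of \eqref{equ:ceis:is}. The only (valid) stylistic difference is that you observe compatibility \eqref{equ:contract:compatible} is trivial for the contract \eqref{equ:contract-csm} and then delegate the remaining language chain and the $\supseteq$ direction of \eqref{equ:ceis:is} to \cref{prop:ir}, whereas the paper re-derives those computations inline.
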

\begin{proof}
	We need to show that the pair $(\speczp,\specop)$ satisfies  \eqref{equ:ceis:ce} and \eqref{equ:ceis:is}.
	The proof for \eqref{equ:ceis:ce} is completely set theoretic:\\
	\begin{inparaitem}
		\item[$ (\subseteq) $] For each $ i\in\{0,1\} $, it holds that $\lang(\speci)\subseteq_{\eqref{item:def:APA:permissive}} \lang(\speci)\cap\lang(\assumpI{i})\subseteq  \lang(\assumpI{i}\Rightarrow \speci)\cap\lang(\assumpI{i})$.
		With this we have, $\lang(\specz\wedge\speco)\subseteq \lang(\specz)\cap\lang(\speco)
			\subseteq  \lang(\assumpI{0}\Rightarrow \specz)\cap\lang(\assumpz) \cap  \lang(\assumpI{1}\Rightarrow \speco)\cap\lang(\assumpo)$ which simplifies to $\lang(\speczp)\cap \lang(\specop) = \lang(\speczp\wedge\specop)$.\\
		\item[$ (\supseteq) $] For each $ i\in\{0,1\} $, it holds that $\lang(\specip)$ is equivalent to
		$\lang(\assumpI{1-i}\wedge (\assumpI{i}\Rightarrow\speci))=\lang(\assumpI{1-i}\wedge (\neg \assumpI{i}\vee \speci))=\lang((\assumpI{1-i}\wedge\neg \assumpI{i})\vee (\assumpI{1-i}\wedge \speci))$ which simplifies to $\lang(\assumpI{1-i}\wedge\neg \assumpI{i})\cup\lang(\assumpI{1-i}\wedge \speci)$.
		Then we have that $\lang(\speczp\wedge\specop)=\lang(\speczp)\cap \lang(\specop)$ reduces to $\left(\lang(\assumpI{1}\wedge\neg \assumpI{0})\cup\lang(\assumpI{1}\wedge \specz)\right)\cap
				\left(\lang(\assumpI{0}\wedge\neg \assumpI{1})\cup\lang(\assumpI{0}\wedge \speco)\right)$ which simplifies to $\lang(\assumpI{1}\wedge \specI{0})\cap\lang(\assumpI{0}\wedge \speco)
			\subseteq \lang(\specz\wedge\speco)$.
	\end{inparaitem}
	
	Next, we show that one side of \eqref{equ:ceis:is} follows from \eqref{equ:ceis:ce}, whereas the other side follows from \cref{prop:csm-iR}: 
	\begin{inparaitem}
		\item[$ (\supseteq) $] If $v \in \team{0}\speczp~\cap~\team{1}\specop $, then for each $i\in\{0,1\}$, there exists a strategy $\strati$ for $\p{i}$ such that every $\strati$-play from $v$ belongs to $\lang(\specip)$. Hence, every $\stratz\strato$-play from $v$ belongs to $\lang(\speczp)\cap\lang(\specop) = \lang(\speczp\wedge\specop)=_{\eqref{equ:ceis:ce}}\lang(\specz\wedge\speco)$. Therefore, $v\in\team{0,1}(\specz\wedge\speco)$.
  		\item[$(\subseteq)$] If $v\in\team{0,1}(\specz\wedge\speco)\subseteq \team{0,1}\speci$, then by \cref{item:def:APA:sufficient}, $v\in \team{i}(\Strati,\assumpi)$. Hence, for each $i$, \csm $(\Strati,\assumpi)$ is winning for $\p{i}$ from $v$. As the \csms are also compatible, by \cref{prop:csm-iR}, the contract $\Contract = (\assumpz,\assumpo)$ is an \ir-contract realizable from $v$. Hence, by definition, $v\in \team{i}\assumpI{1-i}\wedge(\assumpI{i}\Rightarrow\speci) = \team{i}\specip$. Therefore, $v\in\team{0}\speczp \cap \team{1}\specop$.
	\end{inparaitem}
\end{proof}

\subsection{Computing Adequately Permissive \csms}\label{section:template extraction}
As stated before, \cref{thm:csm-iRmaC} shows that a solution to \cref{prob:main} reduces to finding \emph{adequately permissive} \csms which are \emph{compatible}. Due to the close connection between \emph{adequately permissive} \csms and \emph{adequately permissive assumption templates} from \cite{SIMAssumptions22}, it turns out that the computation of \emph{adequately permissive} \csms can be done in very close analogy to the computation of adequately permissive assumption templates for parity games from \cite{SIMAssumptions22}. In particular, we inherent (i) the observation that conjunctions of safety, co-live and conditional live-group templates are rich enough to express adequately permissive \csms, and (ii) the existence of a polynomial time (i.e., very efficient) algorithm for their construction.

\begin{theorem}\label{thm:parity_assumption_new}
	Given a game graph $ \gamegraph=\tup{V=\vertexz\cupdot \vertexo, E} $ with parity objective $ \spec_i$,
	an adequately permissive \csm $(\assump_i\compt(\safegroupS_i,\colivegroupS_i,\condlivegroupS_i), \Strat_i\compt(\safegroupA_i, \colivegroupA_i, \condlivegroupA_i))$  for player $i$ in $(\gamegraph,\spec_i)$ can be computed in time $ \bigO(n^4) $, where $n=|V|$. We call the respective procedure for this computation $\parityTemp(\gamegraph,\spec_i)$.
\end{theorem}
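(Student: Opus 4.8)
The plan is to obtain the CSM by a single pass of the fixpoint algorithm for adequately permissive assumption templates on parity games from \cite{SIMAssumptions22}, while keeping track of \emph{both} the antagonist- and protagonist-side edge restrictions it discovers. First I would run this algorithm on $(\gamegraph,\spec_i)$ with \p{i} as protagonist and \p{1-i} as antagonist. It terminates in $\bigO(n^4)$ time and, along its attractor/fixpoint computation, identifies a set of unsafe edges, a set of co-live edges, and a collection of conditional live-groups that together form a composed template $\template_i$ over $\gamegraph$; by the richness result inherited as item (i) in the preceding discussion, such a conjunction suffices to express an adequately permissive object.

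The decisive structural step is to split every discovered edge by the owner of its source. Collecting all restrictions whose source lies in $V_{1-i}$ into $(\safegroupS_i,\colivegroupS_i,\condlivegroupS_i)$ and all restrictions whose source lies in $V_i$ into $(\safegroupA_i,\colivegroupA_i,\condlivegroupA_i)$ yields, by the definition of assumption and strategy templates in \cref{sec:csm}, a legal assumption template $\assumpi\compt(\safegroupS_i,\colivegroupS_i,\condlivegroupS_i)$ and a legal strategy template $\Strati\compt(\safegroupA_i,\colivegroupA_i,\condlivegroupA_i)$ for \p{i}. By the compositionality of templates recalled at the end of \cref{section:templates:recal}, their conjunction satisfies $\assumpi\wedge\Strati\equiv\template_i$, so no constraint is lost or added by the partition, and the split is a linear-time post-processing that leaves the $\bigO(n^4)$ bound intact.

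It remains to lift the three conditions of \cref{def:adequate assumption}. Permissiveness (iii) transfers verbatim: the \cite{SIMAssumptions22} proof that $\lang(\spec_i)\subseteq\lang(\template_i)$ only ever restricts antagonist edges, which are exactly those retained in $\assumpi$, giving $\lang(\spec_i)\subseteq\lang(\assumpi)$. Implementability (ii) now splits into $\team{1-i}\assumpi=V$ and $\team{i}\Strati=V$; each follows because every edge restricted by $\assumpi$ (resp.\ $\Strati$) originates in $V_{1-i}$ (resp.\ $V_i$), so complying is entirely within that player's control, and one checks --- as in \cite{SIMAssumptions22} --- that the restrictions never leave a vertex without an admissible successor. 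For sufficiency (i), take any $v\in\team{0,1}\spec_i$, any \p{i} strategy $\strati$ following $\Strati$ and any \p{1-i} strategy $\stratI{1-i}$ following $\assumpi$; the resulting play is compliant with $\template_i$ since $\assumpi\wedge\Strati\equiv\template_i$, and the winning guarantee of \cite{SIMAssumptions22} then forces it to satisfy $\spec_i$, i.e.\ $v\in\team{i}(\assumpi,\Strati)$.

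The main obstacle is precisely this last point. \cite{SIMAssumptions22} establishes that the assumption is winning \emph{when the protagonist plays a particular witnessing strategy}, whereas \cref{def:csm_winning} demands that \emph{every} $\Strati$-compliant play survive against \emph{every} $\assumpi$-compliant antagonist. Bridging this requires arguing that the protagonist-side restrictions carved out by the fixpoint are complete enough that the witnessing strategy of \cite{SIMAssumptions22} is itself a following strategy of $\Strati$ (cf.\ the strategy-template construction of \cite{SIMController22}), and, more delicately, that the decomposition does not break the coupling between the two players' obligations on shared conditional live-groups. Establishing this joint winning property --- rather than the one-sided guarantee of the original assumption result --- is where the real verification work concentrates.
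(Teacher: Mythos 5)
Your overall route is the paper's route: the procedure $\parityTemp$ (\cref{alg:compute parity assumption} in \cref{app:template extraction}) is exactly the Zielonka-style recursion for adequately permissive assumptions from \cite{SIMAssumptions22}, modified so that each sub-routine ($\computeSafe$, $\computeLive$, $\computeCoLive$) returns its unsafe, live-group and co-live edges already partitioned by the owner of the source vertex, thereby producing the assumption and the strategy template \emph{at the same time} rather than as a post-processing split of a single template. The $\bigO(n^4)$ bound, the transfer of implementability, and the transfer of permissiveness (which by \cref{def:adequate assumption} only constrains $\assumpi$, i.e., exactly the antagonist-side conjuncts) all go through essentially as you describe.

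There is, however, a genuine gap, and you locate it yourself: sufficiency, condition (\ref{item:def:APA:sufficient}) of \cref{def:adequate assumption}, is announced rather than proved --- you close by observing that establishing the joint winning property is \enquote{where the real verification work concentrates}, which is precisely the part of the argument that cannot be outsourced to \cite{SIMAssumptions22}. The paper does this work explicitly (\cref{thm:parity assumption}, building on \cref{thm:Buechi assumptions} and \cref{thm:coBuechi assumptions}): it constructs, by induction on the highest priority $d$, a concrete finite-memory \p{i} strategy that follows $\Strati$ --- playing a \cobuchi-winning sub-strategy towards $W_{\neg d}$ when $d$ is odd, and round-robin switching among \buchi-winning sub-strategies for the nested target sets when $d$ is even --- and shows that every play of this strategy against an arbitrary $\assumpi$-following opponent satisfies $\spec_i$ from every vertex of $\team{0,1}\spec_i$. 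Note that this resolves your \enquote{one witnessing strategy vs.\ every following strategy} worry only in the one-witness form: the paper, too, exhibits a single $\Strati$-following strategy rather than quantifying over all of them as \cref{def:csm_winning} literally demands, so the inductive construction is the essential content you would in any case have to supply. As it stands, your proposal gives the right algorithm and the right decomposition, but the central correctness lemma is missing.
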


For completeness, we give the full algorithm for $\parityTemp$, along with its simplified (and more efficient) versions for Safety ($\computeSafe$ with $\bigO(m)$,~$m=|E|$), Büchi ($\buchiTemp$ with $\bigO(m)$,~$m=|E|$) and co-Büchi games ($\cobuchiTemp$ with $\bigO(m)$,~$m=|E|$) along with additional intuition and all correctness proofs in \cref{app:template extraction}. This exposition is given in very close analogy to \cite{SIMAssumptions22}.

With \cref{thm:parity_assumption_new} in place, the main algorithmic problem for solving \cref{prob:main} is to ensure that computed \csms are \emph{compatible}. This is done via a negotiation algorithm, as already illustrated in the last paragraph of \cref{sec:example}, formalized next.

\section{Negotiation for Compatible \csms}\label{sec:negotiate}
This section contains the main contribution of this paper w.r.t.\ the algorithmic solution of \cref{prob:main}. That is, we give an algorithm to compute \emph{adequately permissive} and \emph{compatible} \csms in a mostly distributed fashion. %

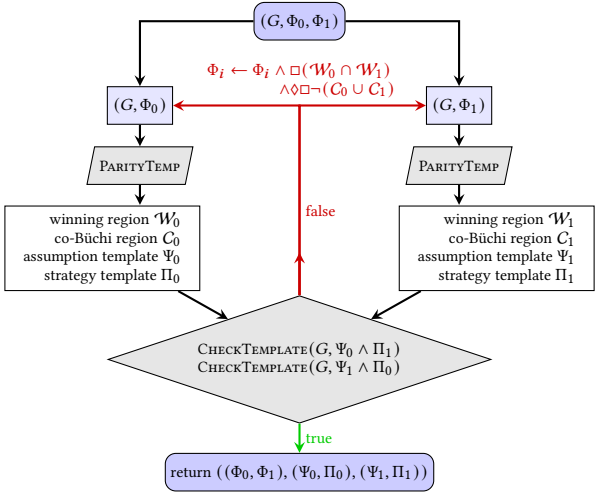
\begin{figure}
\scriptsize
\centering
\begin{tikzpicture}[node distance=3cm]

	\node (start) [startstop,fill=blue!20] {$(\gamegraph,\spec_0,\spec_1)$};
	\node (game0) [io, below left of=start, yshift=1cm,fill=blue!10] {$(\gamegraph,\spec_0)$};
	\node (game1) [io, below right of=start, yshift=1cm,fill=blue!10] {$(\gamegraph,\spec_1)$};
	\node (algo0) [process, below of=game0, yshift=1.3cm, yshift=0.9cm,fill=black!10] {\parityTemp};
	\node (algo1) [process, below of=game1, yshift=1.3cm, yshift=0.9cm,fill=black!10] {\parityTemp};
	\node (dtemp0) [below of=algo0, yshift=2.4cm]{};
	\node (temp0) [io, below of=algo0, xshift=-0.5cm, yshift=1.9cm] {
								\begin{tabular}{r} 
								\hspace*{0cm}winning region $\winz$\\
								\hspace*{0cm}\cobuchi region $\conflict_0$\\
								assumption template~$\assumpI{0}$\\
								\hspace*{0cm}strategy~template~$\StratI{0}$
								\end{tabular}};
	\node (dtemp1) [below of=algo1, yshift=2.4cm]{};
	\node (temp1) [io, below of=algo1, xshift=0.5cm, yshift=1.9cm] {
								\begin{tabular}{r} 
								\hspace*{0cm}winning region $\wino$\\
								\hspace*{0cm}\cobuchi region $\conflict_1$\\
								assumption template~$\assumpI{1}$\\
								\hspace*{0cm}strategy~template~$\StratI{1}$
								\end{tabular}};
	\node (dec1) [decision, below of=start, yshift=-1.5cm,fill=black!10] {
								\begin{tabular}{c}		
								$\checkTemplate(\gamegraph,\assumpI{0}\wedge\StratI{1})$\\
								$\checkTemplate(\gamegraph,\assumpI{1}\wedge\StratI{0})$
								\end{tabular}};
	\node (stop) [startstop, below of=dec1, yshift=1.5cm,fill=blue!20] {return $((\specz,\speco),(\assumpz,\Stratz),(\assumpo,\Strato))$};
	\node (gamemid) [below of=start, yshift=0cm] {};

	\draw [arrow,red!80!black] (dec1) -- (gamemid);
	\draw [arrow] (start) -| (game0);
	\draw [arrow] (start) -| (game1);
	\draw [arrow] (game0) -- (algo0);
	\draw [arrow] (game1) -- (algo1);
	\draw [arrow] (algo0) -- (dtemp0);
	\draw [arrow] (algo1) -- (dtemp1);
	\draw [arrow] (temp0) -- (dec1);
	\draw [arrow] (temp1) -- (dec1);
	\draw [arrow,green!80!black] (dec1) -- node[anchor=west] {true} (stop);
	\draw [arrow,red!80!black] (dec1) |- node[anchor=west, yshift=-1.4cm] {false} (game0);
	\draw [arrow,red!80!black] (dec1) |- node[anchor=south] {
		\begin{tabular}{c} 
		$\speci\gets\speci\wedge\LTLalways (\winz\cap\wino)$\\
		$\hspace*{1cm}\wedge\LTLeventually\LTLalways \neg(\conflict_0\cup\conflict_1)$
		\end{tabular}}
		(game1);	
\end{tikzpicture}
\caption{Flowchart illustration of \textsc{Negotiate} (\cref{alg:negotiate}).}\label{fig:negotiateAlgo}
\end{figure}

\begin{algorithm}[t]
	\caption{$\negotiate(\gamegraph,\specz,\speco)$}\label{alg:negotiate}
	\begin{algorithmic}[1]
		\Require  $ \gamegraph=\tup{V, E}$, $\specz=\parity(\priorityI{0})$, 
		$\speco=\parity(\priorityI{1})$, 
		\Ensure modified specifications $(\specz,\speco)$; \csms $(\assump_0,\Strat_0)$, $(\assump_1,\Strat_1)$
		\State $\wini, \conflict_i, \Strati, \assumpi \gets \parityTemp(\gamegraph,\speci,i)$, $\forall i\in\{0,1\}$\label{alg:negotiation:parity}
		\If{$\checkTemplate(\gamegraph, \assumpI{1-i} \wedge \Strati)=\mathtt{true}$, $\forall i\in\{0,1\}$}\label{alg:negotiation:if}
		\State \Return $ (\specz,\speco), (\assumpz , \Stratz) , (\assumpo , \Strato) $ \label{alg:negotiate:returnmid}
		\Else
		\State $\specir \gets \speci\wedge \LTLalways (\winz \cap \wino) \wedge \LTLeventually\LTLalways \neg (\conflict_0 \cup \conflict_1)$ , $\forall~i\in\{0,1\}$ \label{alg:negotiate:addColive}
		\State 	\Return $\negotiate(\gamegraph, \speczr,\specor)$ \label{alg:negotiate:recursion}
		\EndIf
	\end{algorithmic}
\end{algorithm}

Our algorithm, called $\negotiate$, is depicted schematically in \cref{fig:negotiateAlgo} and given formally in \cref{alg:negotiate}. It uses $\parityTemp$ to compute adequately permissive \csms $(\assumpi,\Strati)$ for each $\p{i}$ in its corresponding game $(\gamegraph,\speci)$ locally (\cref{alg:negotiation:parity} in \cref{alg:negotiate}). These \csms  are then checked for compatibility (as in \cref{def:compatible}) via the function $\checkTemplate$ defined in \cref{sec:CheckTemplate}. If \csms are compatible, they define \irmac-specifications (via \cref{thm:csm-iRmaC}) and hence, \cref{prob:main} is solved and $\negotiate$ terminates.

If they are not compatible, existing conflicts need to be resolved as formalized in \cref{sec:resolve}. The required strengthening of both \csms is again done locally by solving games with modified specifications (red arrow looping back in \cref{fig:negotiateAlgo}) again via $\parityTemp$. 

As the resulting new \csms might again be conflicting, this strengthening process repeats iteratively. We prove in \cref{sec:soundness} that there are always only a finite number of negotiation rounds.

\subsection{Checking Compatibility Efficiently}\label{sec:CheckTemplate}
This section discusses how the procedure $\checkTemplate$ checks compatibility of \csms efficiently. Based on \cref{def:compatible}, checking compatibility of two \csms reduces to checking the existence of a strategy that follows the templates $\assumpI{1-i}\wedge\Strati \compt (\safegroup,\colivegroup,\condlivegroup)$ for both $i\in\{0,1\}$.
As our templates are just particular LTL formulas, one can of course use automata-theoretic techniques to check this.
However, given the edge sets $(\safegroup,\colivegroup,\condlivegroup)$ this check can be performed more efficiently as formalized next.

\begin{definition}\label{def:conflict-freetemplate}
	A template $\template\compt\Tuple{\safegroup,\colivegroup, \condlivegroup}$ over game graph $\gamegraph = (V,E)$ is \emph{conflict-free} if %
	\begin{compactenum}[(i)]
		\item every vertex $v$ has an outgoing edge that is neither co-live nor unsafe, i.e., 
		$v\times E(v) \not \subseteq D \cup S$, and
		\item in every live-group $\livegroupSingleN\in\livegroup$ s.t.\ $(\cdot,\livegroup)\in\condlivegroup$, every source vertex $v$ has an outgoing edge in $\livegroupSingleN$ that is neither co-live nor unsafe, i.e.,
		$v\times \livegroupSingleN(v) \not \subseteq D \cup S$.
	\end{compactenum}
	The procedure of checking (i)-(ii) is called $\checkTemplate(\gamegraph, \template) $, which returns \texttt{true} if (i)-(ii) holds, and \texttt{false} otherwise.
\end{definition}
We note that checking (i)-(ii) can be done independently for every vertex, hence $\checkTemplate(\gamegraph, \template)$ runs in linear time $\bigO(n)$ for $n=|V|$.
Intuitively, whenever the existentially quantified edge in (i) and (ii) of \cref{def:conflict-freetemplate} exists, a strategy that alternates between all these edges follows the given template. In addition, this strategy can also be extracted in linear time. This is formalized next.

\begin{proposition}\label{prop:strategyExtraction}
	Given a game graph $\gamegraph = (V,E)$ with conflict-free template
	$\template \compt \Tuple{\safegroup,\colivegroup,\condlivegroup}$ for $\p{i}$, a strategy $\strati$ for $\p{i}$ that follows $\template$ can be extracted in time $ \bigO(m) $, where $ m $ is the number of edges. This procedure is called $ \extract(\gamegraph, \template) $.
\end{proposition}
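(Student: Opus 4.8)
The plan is to construct, for \p{i}, a finite-memory strategy that at every vertex it controls perpetually alternates (in round-robin fashion) among a fixed set of ``good'' outgoing edges, and then to argue that this single alternation simultaneously satisfies all three conjuncts of $\template$.

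First I would fix the data handed to us by conflict-freeness. For each $v\in V_i$, condition~(i) of \cref{def:conflict-freetemplate} yields an outgoing edge $e_v\in v\times E(v)$ with $e_v\notin \colivegroup\cup\safegroup$; fix one such $e_v$. Moreover, for every live-group $\livegroupSingleN$ occurring in $\condlivegroup$ (i.e.\ $\livegroupSingleN\in\livegroup$ for some $(\cdot,\livegroup)\in\condlivegroup$) and every source $v\in\src(\livegroupSingleN)\cap V_i$, condition~(ii) yields an edge $e_{v,\livegroupSingleN}\in v\times\livegroupSingleN(v)$ with $e_{v,\livegroupSingleN}\notin\colivegroup\cup\safegroup$; fix one such edge per pair. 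For each $v\in V_i$ let $\mathcal{E}(v)$ be the finite, nonempty list consisting of $e_v$ together with all the chosen $e_{v,\livegroupSingleN}$. The strategy $\strati$ keeps as memory a pointer $p_v$ into $\mathcal{E}(v)$ for every $v$; whenever the play reaches $v\in V_i$ it plays the edge currently under $p_v$ and advances $p_v$ cyclically. Since $\template$ is a template for \p{i}, none of $\safegroup$, $\colivegroup$, $\overline{\livegroupSingleN}$ contains an edge sourced in $V_{1-i}$, so fixing moves only out of $V_i$ suffices.

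Next I would verify that every $\strati$-play $\play$ lies in $\lang(\template)$. Safety and co-liveness are immediate: every edge $\strati$ ever takes out of a $V_i$-vertex lies in some $\mathcal{E}(v)$ and hence avoids $\safegroup\cup\colivegroup$, while edges out of $V_{1-i}$-vertices are unconstrained; thus no unsafe edge is ever taken and every co-live edge is taken zero times, giving $\templatesafe(\safegroup)\wedge\templatecolive(\colivegroup)$. For the conditional part it suffices to prove the stronger unconditional statement that for every live-group $\livegroupSingleN$ appearing in $\condlivegroup$ we have $\square\lozenge\src(\livegroupSingleN)\Rightarrow\square\lozenge\livegroupSingleN$ along $\play$, since this implies each conjunct $\square\lozenge R\Rightarrow\templategrlive(\livegroup)$ of $\templatecondlive(\condlivegroup)$. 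So suppose $\src(\livegroupSingleN)$ is visited infinitely often; as $\src(\livegroupSingleN)$ is finite, some single source $s\in\src(\livegroupSingleN)\subseteq V_i$ is visited infinitely often. Every visit to $s$ advances $p_s$ one step cyclically through $\mathcal{E}(s)$, so $p_s$ points at $e_{s,\livegroupSingleN}\in\livegroupSingleN$ infinitely often, whence an edge of $\livegroupSingleN$ is taken infinitely often, i.e.\ $\square\lozenge\livegroupSingleN$ holds. Finally I would bound the running time: $\extract$ marks the edges of $\safegroup$ and $\colivegroup$, sweeps each vertex's outgoing edges once to pick $e_v$, and sweeps the live-group edge sets once to pick the $e_{v,\livegroupSingleN}$ and build the lists $\mathcal{E}(v)$; each edge is touched a constant number of times, so the extraction runs in $\bigO(m)$ and returns a finite-memory strategy with constant-time per-move lookup.

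The step I expect to be the main obstacle is the liveness argument, specifically making the round-robin starvation-free. A single \emph{global} pointer cycling through the live-groups can fail: a group whose source is visited infinitely often may be permanently skipped if an earlier group whose source is visited only finitely often blocks the pointer. Keeping a separate pointer \emph{per vertex} (rather than per group) is exactly what defeats this, because the pigeonhole step isolates one recurrent source vertex $s$, and the purely local round-robin at $s$ then serves $\livegroupSingleN$ infinitely often irrespective of what happens at any other vertex or what the opponent does from $V_{1-i}$.
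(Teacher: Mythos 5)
Your proposal is correct and follows essentially the same route as the paper's proof: remove (or avoid) all unsafe and co-live edges and have \p{i} perpetually alternate among the surviving outgoing edges at each of its vertices, with condition~(i) of \cref{def:conflict-freetemplate} guaranteeing well-definedness and condition~(ii) guaranteeing that every live-group whose source recurs is served infinitely often. Your version merely makes explicit the per-vertex round-robin pointer and the pigeonhole step that the paper leaves implicit in the phrase \enquote{alternates between all remaining edges from every vertex}.
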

\begin{proof}
The proof is straightforward by constructing the strategy as follows.
We first remove all unsafe and co-live edges from $\gamegraph$ and then construct a strategy $\strati$ that alternates between all remaining edges from every vertex. This strategy is well-defined
as condition (i) in \cref{def:conflict-freetemplate} ensures that after removing all the unsafe and co-live edges a choice from every vertex remains. Moreover, if the vertex is a source of a live-group edge, condition (ii) in \cref{def:conflict-freetemplate} ensures that  there are outgoing edges satisfying every live-group. Thereby, the constructed strategy indeed follows $\template$. 
\end{proof}

It is worth noting that $\parityTemp$ (and all its variants given in \cref{app:template extraction}) always return conflict-free templates $\assumpI{i}$ and $\Strati$ by construction. Only when combining templates from different players into $\Strati\wedge\assumpI{{1-i}}$ conflicts may arise. However, as conflict-freeness of template $\assumpI{1-i}\wedge\Strati$ implies the existence of a $\p{i}$ strategy following it from \cref{prop:strategyExtraction}, this immediately implies that both \csms are compatible, leading to the following corollary.

\begin{corollary}\label{prop:conflictfree2compatible}
	Given two \csms $(\assumpz,\Stratz)$ and $(\assumpo,\Strato)$ in a game graph $\gamegraph$, if for each $i\in\{0,1\}$ the template $\assumpI{1-i}\wedge\Strati$ is conflict-free, then the two \csms are compatible.
\end{corollary}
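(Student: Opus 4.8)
The plan is to derive the corollary directly from \cref{prop:strategyExtraction}: for each player I would extract an explicit strategy following the combined template and then simply invoke \cref{def:compatible}, whose condition is precisely the existence of such strategies.

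First I would fix a player $i\in\{0,1\}$ and observe that $\assumpI{1-i}\wedge\Strati$ is a legitimate (composed) template \emph{for \p{i}}. Indeed, $\Strati$ is a strategy template for \p{i}, so all edges it constrains have source in $V_i$; and $\assumpI{1-i}$ is an assumption template for \p{1-i}, so all edges it constrains have source in $V_{1-(1-i)}=V_i$ as well. Using the remark on composed templates that the conjunction of $\template\compt(\safegroup,\colivegroup,\condlivegroup)$ and $\template'\compt(\safegroup',\colivegroup',\condlivegroup')$ is the single template associated with the unioned edge sets $(\safegroup\cup\safegroup',\colivegroup\cup\colivegroup',\condlivegroup\cup\condlivegroup')$, the conjunction $\assumpI{1-i}\wedge\Strati$ is a composed template all of whose safe, co-live and (conditional) live-group edges originate in $V_i$. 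Hence it qualifies as a template for \p{i} to which \cref{prop:strategyExtraction} is applicable.

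Next, using the hypothesis that $\assumpI{1-i}\wedge\Strati$ is conflict-free, I would apply \cref{prop:strategyExtraction} (the procedure $\extract$) to obtain a \p{i} strategy $\strati$ following $\assumpI{1-i}\wedge\Strati$. Since every $\strati$-play then lies in $\lang(\assumpI{1-i}\wedge\Strati)=\lang(\Strati\wedge\assumpI{1-i})$, the strategy $\strati$ follows $\Strati\wedge\assumpI{1-i}$ exactly as required in \cref{def:compatible}. Carrying this out for both $i=0$ and $i=1$ produces, for each player, a strategy following its combined template, which is the defining condition of compatibility; therefore the two \csms are compatible.

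There is essentially no hard step here, as the corollary is an immediate repackaging of \cref{prop:strategyExtraction}. The only point requiring (minor) care is the bookkeeping in the second paragraph: one must verify that the conjunction $\assumpI{1-i}\wedge\Strati$ is still a valid template \emph{for \p{i}} — i.e.\ that every constrained edge has source in $V_i$ — so that the strategy-extraction proposition genuinely applies. Once this localness observation is made, the conclusion follows verbatim from \cref{def:compatible}.
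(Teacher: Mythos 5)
Your proposal is correct and follows essentially the same route as the paper: the corollary is obtained by applying \cref{prop:strategyExtraction} to the conflict-free template $\assumpI{1-i}\wedge\Strati$ to extract a \p{i} strategy following it, which is exactly the condition of \cref{def:compatible}. Your extra check that the conjunction is a legitimate template \emph{for \p{i}} (all constrained edges originating in $V_i$, since $\assumpI{1-i}$ is an assumption template for \p{1-i}) is left implicit in the paper but is a correct and worthwhile piece of bookkeeping.
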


We note that the converse of \cref{prop:conflictfree2compatible} is not true, as there can be a strategy following $\assumpI{1-i}\wedge\Strati$ even when the corresponding \csms are not conflict-free. 
However, this does not affect the completeness of our algorithm.
Therefore, we focus our attention on ensuring conflict-freeness rather than compatibility.
Moreover, if such a strategy exists it will be retained by the conflict resolving mechanism of $\negotiate$, introduced next.%

\subsection{Resolving Conflicts}\label{sec:resolve}
Given a conflict in $\assumpI{1-i}\wedge\Strati \compt (\safegroup,\colivegroup,\condlivegroup)$ we now discuss how the modified specifications $\specir$ (as in \cref{alg:negotiate:addColive} of \cref{alg:negotiate}) allows to resolve this conflict in the next iteration. 

For this, first assume that $\colivegroup=\emptyset$. In this case a conflict exists because all available (live) edges are unsafe and should never be taken. Hence, an extracted strategy (via \cref{prop:strategyExtraction}) is not well-defined (i.e., might get stuck in a vertex for which (i) of \cref{def:conflict-freetemplate} is false) or not ensured to be winning (i.e., will not be able to fulfill the liveness obligations in $\livegroup$ if (ii) of \cref{def:conflict-freetemplate} is false). 

In order to ensure strategies to be winning, templates need to be re-computed over a game graph where unsafe edges $e\in\safegroup$ in $\assumpI{1-i}\wedge\Strati$ are removed.
By looking into the details of the computation of $\safegroup$ within $\parityTemp$, we see that unsafe edges always transition from the winning region $\win_i=\team{i}(\assumpI{i}, \StratI{i})$ to its complement $\overline{\win}_i=\neg\win_i$, i.e., every (cooperatively winning) play should never visit states in $\overline{\win}_i$.
We therefore achieve the desired effect by adding the requirement $\LTLalways \neg (\overline{\win}_0\cup\neg\overline{\win}_1)=\LTLalways(\win_0\cap\win_1)$ to the specification, which obviously does not restrict the cooperative winning region, as $\parityTemp$ is ensured to not remove any cooperative solution (due to \cref{item:def:APA:sufficient} in \cref{def:adequate assumption}). %

This intuition generalizes to the case where $\colivegroup\neq\emptyset$ as follows. Here, we need to resolve the game while ensuring that co-live edges $e\in\colivegroup$ are only taken finitely often. In analogy to unsafe edges, co-live edges are computed by $\parityTemp$ s.t.\ they always transition to the set of vertices  $\mathcal{C}_i$ that must only be seen finitely often along a winning play. In addition to $\win_i$, the set $\mathcal{C}_i$ can also be memorized during the computation of $\colivegroup$ within $\parityTemp$ and hence passed to $\checkTemplate$ in \cref{alg:negotiation:parity} of \cref{alg:negotiate}. As for the unsafe-edge case, we can achieve the desired effect for recomputation by adding the requirement $\LTLeventually\LTLalways \neg(\conflict_0\cup\conflict_1)$ to the specification $\speci$ (see \cref{alg:negotiate:addColive} of \cref{alg:negotiate}). Again, this obviously does not alter the cooperative winning region of the game.

\begin{remark}\label{rem:otherinput}
 We note that \cref{alg:negotiate} is slightly simplified, as the objective $\specir$ in \cref{alg:negotiate:addColive} of \cref{alg:negotiate} used as an input to $\negotiate$ in latter iterations, is not a \enquote{plain} parity objective $\parity(\priority)$. As $\parityTemp$ expects a parity game as an input, we need to convert $(\gamegraph,\speczr,\specor)$ into a parity game by a simple reprocessing step. Luckily, both additional specifications can be dealt with using classical steps of Zielonka's algorithm \cite{ZIELONKA1998135}, a well-known algorithm to solve parity games, which is used as the basis for $\parityTemp$. Concretely, we handle the $\LTLalways(\win_0\cap\win_1)$ part of $\specir$, by restricting the game graph $\gamegraph$ to $\win=\win_0\cap\win_1$ and the $\LTLeventually\LTLalways \neg(\conflict_0\cup\conflict_1)$ part by assigning all vertices in $\conflict=(\conflict_0\cup\conflict_1)$ the highest odd priority $2d_i+1$. The correctness of these standard transformations follows from the same arguments as used to prove the correctness of similar steps of the $\parityTemp$ algorithm in \cref{app:template extraction}. 
\end{remark}

\subsection{Properties of $\negotiate$}\label{sec:soundness}
With this, we are finally ready to prove that (i) $\negotiate$ always terminates in a finite number of steps, and (ii) upon termination, the computed \csms indeed provide a solution to \cref{prob:main}.

\smallskip
\noindent\textbf{Termination.}
Intuitively, all local synthesis problems are performed over the same (possibly shrinking) game graph $\gamegraph$. Therefore, there exists only a finite number of templates $\template\compt(\safegroup,\colivegroup,\condlivegroup)$ over $\gamegraph$, which, in the worst case, can all be enumerated in finite time.

\begin{theorem}\label{thm:terminate}
 Given a two-objective parity game $\game=((V,E),\specI{0},\specI{1})$ with $\specI{i}=\paritygame(\priority_i)$, \cref{alg:negotiate} always terminates in $\bigO(n^6)$ time, where $n = \abs{V}$.
\end{theorem}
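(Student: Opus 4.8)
The plan is to factor the running time of \cref{alg:negotiate} as (number of recursive iterations) $\times$ (cost of one iteration) and bound each factor separately. For a single iteration, the two calls to $\parityTemp$ in \cref{alg:negotiation:parity} dominate: by \cref{thm:parity_assumption_new} each runs in $\bigO(n^4)$, while the compatibility test $\checkTemplate$ costs only $\bigO(n)$ per template (\cref{def:conflict-freetemplate}) and the re-encoding of $\specir$ into a plain parity game described in \cref{rem:otherinput} -- restricting $\gamegraph$ to $\win$ and raising priorities on $\conflict$ -- costs $\bigO(n^2)$. Hence one iteration is $\bigO(n^4)$, and it remains to show that the recursion in \cref{alg:negotiate:recursion} has depth $\bigO(n^2)$.

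To bound the depth, first I would introduce a progress measure on the two quantities that the modified objective in \cref{alg:negotiate:addColive} carries from one round to the next, namely $\win := \winz\cap\wino$ and $\conflict := \conflict_0\cup\conflict_1$. I take the lexicographically ordered pair $\Phi := (\abs{\win},\, n-\abs{\conflict})$, which ranges over $\{0,\ldots,n\}\times\{0,\ldots,n\}$ and therefore assumes at most $\bigO(n^2)$ distinct values. The first thing to prove is \emph{monotonicity}: since the conjunct $\LTLalways(\winz\cap\wino)$ restricts every subsequent game to $\win$, each winning region $\wini$ can only shrink across rounds, so $\win$ is non-increasing; and since the conjunct $\LTLeventually\LTLalways\neg(\conflict_0\cup\conflict_1)$ gives the vertices of $\conflict$ the highest odd priority, they remain \enquote{seen finitely often} and hence stay in $\conflict$ on any fixed graph, so $\conflict$ is non-decreasing while $\win$ is unchanged. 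Consequently $\Phi$ never increases.

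The crux is \emph{strict progress}: whenever $\checkTemplate$ fails in \cref{alg:negotiation:if} and the algorithm recurses, $\Phi$ must strictly decrease. I would establish this through the structural facts about $\parityTemp$ recalled in \cref{rem:otherinput} (and proved in \cref{app:template extraction}): every unsafe edge of $\assumpI{i}$ or $\Strati$ leaves the winning region $\wini$, and every co-live edge leads into $\conflict_i$. A conflict in the combined template $\assumpI{1-i}\wedge\Strati$ means some vertex (or live-group source) $v$ has all its relevant outgoing edges unsafe or co-live. If any of these is unsafe, restricting to $\win$ deletes it, so $v$ loses successors and drops out of the winning region, making $\win$ strictly smaller. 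Otherwise all offending edges are co-live and hence lead into $\conflict$; once $\conflict$ is marked with the highest odd priority, every successor of $v$ must be visited finitely often, forcing $v$ itself into the new co-live set, so $\conflict$ strictly grows. In either case $\Phi$ strictly decreases, which caps the recursion depth at $\bigO(n^2)$ and yields the claimed $\bigO(n^6)$ bound.

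I expect this last step to be the main obstacle. The individual templates returned by $\parityTemp$ are conflict-free by construction, so a conflict can only appear in the \emph{combination} $\assumpI{1-i}\wedge\Strati$, and the delicate part is to show that the two purely syntactic edits of \cref{alg:negotiate:addColive} -- graph restriction and odd-priority marking -- eliminate (or render transient) exactly the combined unsafe/co-live edges that cause the conflict, so that a round which fails to shrink $\win$ is guaranteed to enlarge $\conflict$. Making this precise requires reasoning with the internal invariants of $\parityTemp$ from \cref{app:template extraction} rather than treating the templates as opaque LTL formulas.
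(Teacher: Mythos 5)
Your overall skeleton coincides with the paper's: a lexicographic progress measure on (size of the joint winning region, a quantity tracking the conflict set), at most $\bigO(n^2)$ iterations, and an $\bigO(n^4)$ cost per iteration dominated by the two calls to $\parityTemp$, giving $\bigO(n^6)$. Your measure $(\abs{\win},\,n-\abs{\conflict})$ is an acceptable variant of the paper's $(\abs{\win},\abs{\win\setminus\conflict})$, and your unsafe-versus-co-live dichotomy is exactly the paper's case split.

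The gap is in your strict-progress step, which you yourself flag as the main obstacle. You argue \emph{forward}: a conflict at a vertex $v$ in round $k$ causes $v$ to drop out of the winning region (unsafe case) or to enter the co-live set (co-live case) in round $k+1$. Neither implication holds as stated. In the unsafe case, the conflict only says that \emph{all} of $v$'s relevant outgoing edges are unsafe \emph{or} co-live; if the mix contains co-live edges, restricting the graph to $\win$ deletes only the unsafe ones, $v$ keeps successors, and nothing forces $v$ out of the next winning region. In the co-live case, marking $\conflict$ with the highest odd priority does not by itself force $v$ into the new co-live set --- $v$ could become losing instead, or the new co-live edges could originate elsewhere. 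The paper avoids this entirely by arguing \emph{backward}, comparing round $k$ to round $k-1$: since round $k$ is played on the graph already restricted to the previous $\win$, the mere \emph{existence} of an unsafe edge in round $k$ (such edges go from the new region $\win'$ to $\win\setminus\win'$ by construction) already witnesses $\win'\subset\win$ strictly; and if no unsafe edge exists, a conflict can only involve co-live edges, whose existence (they are computed in a subgame excluding all highest-odd-priority vertices, hence all of the previous $\conflict$) already witnesses $\abs{\win'\setminus\conflict'}<\abs{\win\setminus\conflict}$. No reasoning about the fate of the particular conflicting vertex, and no further internal invariants of $\parityTemp$ beyond where unsafe and co-live edges point, are needed. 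To close your proof you should replace the forward claim by this backward comparison; as written, the strict decrease of $\Phi$ is not established.
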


\begin{proof}
 We prove termination via an induction on the lexicographically ordered sequence of pairs $(\abs{\win},\abs{\win\setminus \conflict})$. %
	As the base case, observe that if $\abs{\win} = 0$ we have that $\win_i=\emptyset$ for at least on \p{i}, implying $\assumpi\compt(\emptyset,\emptyset,\emptyset)$ and $\Strati\compt(\emptyset,\emptyset,\emptyset)$ for this \p{i}. As $\assumpi\wedge\Strat_{1-i}=\Strat_{1-i}$ and $\assump_{1-i}\wedge\Strati=\assump_{1-i}$ in this case, and $\assump_{1-i}$ and $\Strat_{1-i}$ are conflict-free by construction, $\checkTemplate$ returns $\mathtt{true}$ and $\negotiate$ terminates. If instead only $\abs{\win\setminus \conflict}=0$, it follows from \cref{rem:otherinput} that all vertices in $\win$ have highest odd priority, implying that in the next iteration $\win_i = \emptyset$ for both players, hence all templates are empty, i.e., trivially conflict-free, hence $\negotiate$ terminates.
	
	Now for the induction step, suppose $\abs{\win}>0$ and $\abs{\win\setminus \conflict}>0$ in the \emph{previous} iteration. 	If $\assumpz \wedge \Strato$ and $\assumpo \wedge \Stratz$ are conflict-free, $\negotiate$ terminates. 
	Suppose this is not the case. As $\gamegraph$ gets restricted to $\win$ for this iteration (see \cref{rem:otherinput}), unsafe edges can only occur if $\win'\subset\win$ (as they are by construction from $\win'$ to $\neg\win'$ s.t.\ the latter is a subset of $\win$), where $\win'$ is the winning region computed in the \emph{current} iteration. If $\win'=\win$ conflicts need to arise from colive edges. As colive edges are computed by $\parityTemp$ in a subgame that excludes all vertices with the highest odd priority (and therefore all vertices in $\conflict$ due to \cref{rem:otherinput}), the existence of co-live edge conflicts implies the existence of colive edges, which implies that $\abs{\win'\setminus \conflict'}<\abs{\win\setminus \conflict}$. Therefore, $(\abs{\win},\abs{\win\setminus \conflict})$ always reduces (lexicographically) when conflicts occur. Hence, the algorithm terminates by induction hypothesis.
	Furthermore, as each iteration calls $\checkTemplate$ once and $\parityAssump$ twice which runs in $\bigO(\abs{V}^4)$ time, \cref{alg:negotiate} terminates in $\bigO(\abs{V}^6)$ time.
\end{proof}

\smallskip
\noindent\textbf{Soundness.}
While it seems to immediately follow that \irmac-specifications can be \csms that $\negotiate$ outputs, as it only terminates on adequately permissive and compatible \csms, this is only true w.r.t.\ the new game $(\gamegraph,\speczr,\specor)$ which gets modified in every iteration. It therefore remains to show that the resulting \csms induce \irmac-specification for $(G,\specz,\speco)$, which then proves that $\negotiate$ solves \cref{prob:main}.

\begin{theorem}\label{thm:negotiate-irmac}
 Let $((\speczq,\specoq),(\assumpz , \Stratz) , (\assumpo , \Strato))$ be the output of $\negotiate(\gamegraph ,\specz, \speco)$. Then $(\speczp,\specop)$ with $\specip := \assumpI{1-i}\wedge(\assumpI{i}\Rightarrow\speciq)$ are \irmac-specifications for $(\gamegraph, \specI{0},\specI{0})$.
\end{theorem}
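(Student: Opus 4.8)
The plan is to reduce the statement for the original game $(\gamegraph,\specz,\speco)$ to \cref{thm:csm-iRmaC}, which already yields \irmac-specifications for the \emph{final} modified game. Concretely, by \cref{thm:terminate} the recursion halts after finitely many rounds on compatible, adequately permissive \csms $(\assumpI{i},\StratI{i})$ for $(\gamegraph,\speciq)$, so \cref{thm:csm-iRmaC} gives $\lang(\speczq\wedge\specoq)=\lang(\speczp\wedge\specop)$ and $\team{0,1}(\speczq\wedge\specoq)=\team{0}\speczp\cap\team{1}\specop$. It therefore suffices to establish the two \emph{invariance} facts $\lang(\specz\wedge\speco)=\lang(\speczq\wedge\specoq)$ and $\team{0,1}(\specz\wedge\speco)=\team{0,1}(\speczq\wedge\specoq)$; chaining these with the two equalities above delivers \eqref{equ:ceis:ce} and \eqref{equ:ceis:is} for $(\gamegraph,\specz,\speco)$. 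Throughout I reason about plays of $\gamegraph$, so $\lang(\cdot)$ denotes plays of $\gamegraph$ satisfying the formula.

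The core step is to show that a single modification round preserves the language of the conjunction, i.e.\ $\lang(\specz\wedge\speco)=\lang(\speczr\wedge\specor)$ where $\speczr=\specz\wedge\LTLalways(\winz\cap\wino)\wedge\LTLeventually\LTLalways\neg(\conflict_0\cup\conflict_1)$ and $\specor=\speco\wedge\LTLalways(\winz\cap\wino)\wedge\LTLeventually\LTLalways\neg(\conflict_0\cup\conflict_1)$. Since both objectives receive the \emph{same} added conjunct (\cref{alg:negotiate:addColive} of \cref{alg:negotiate}), $\speczr\wedge\specor$ is just $\specz\wedge\speco$ conjoined with $\LTLalways(\winz\cap\wino)\wedge\LTLeventually\LTLalways\neg(\conflict_0\cup\conflict_1)$, so I only need that every play in $\lang(\specz\wedge\speco)$ already satisfies these two conjuncts. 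For the safety part I would use that each round objective $\speci$ is closed under taking suffixes (parity and $\LTLeventually\LTLalways$ are prefix-independent, and an $\LTLalways$-conjunct is preserved under suffixes); combined with the elementary fact that $v\in\team{0,1}\speci$ iff some play of $\gamegraph$ from $v$ satisfies $\speci$, this forces every vertex visited by a play satisfying $\specz\wedge\speco$ to lie in $\team{0,1}\specz\cap\team{0,1}\speco$, which by sufficiency (\cref{item:def:APA:sufficient} of \cref{def:adequate assumption}) is contained in $\winz\cap\wino$; hence the play satisfies $\LTLalways(\winz\cap\wino)$. For the co-liveness part I would invoke the characterization of $\conflict_i$ from \cref{sec:resolve}: $\conflict_i$ collects precisely the vertices that every winning play of $(\gamegraph,\speci)$ visits only finitely often, so any play satisfying $\specz\wedge\speco$ also satisfies $\LTLeventually\LTLalways\neg(\conflict_0\cup\conflict_1)$. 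Iterating this equality over the finitely many rounds gives $\lang(\specz\wedge\speco)=\lang(\speczq\wedge\specoq)$.

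The remaining invariance of the cooperative winning region is then immediate: since $\team{0,1}\Phi$ is exactly the set of vertices from which some play of $\gamegraph$ lies in $\lang(\Phi)$, it is determined by $\lang(\Phi)$ and $\gamegraph$ alone, so equal conjunction-languages force $\team{0,1}(\specz\wedge\speco)=\team{0,1}(\speczq\wedge\specoq)$. I would also note that the operational graph-restriction to $\win=\winz\cap\wino$ and the priority reassignment on $\conflict$ performed in \cref{rem:otherinput} are merely the operational counterparts of the $\LTLalways(\winz\cap\wino)$ and $\LTLeventually\LTLalways\neg(\conflict_0\cup\conflict_1)$ conjuncts and, by the argument above, discard no play of $\lang(\specz\wedge\speco)$, so reasoning at the level of LTL objectives is sound.

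The main obstacle is the core step of the second paragraph: the inclusions $\lang(\specz\wedge\speco)\subseteq\lang(\LTLalways(\winz\cap\wino))$ and $\lang(\specz\wedge\speco)\subseteq\lang(\LTLeventually\LTLalways\neg(\conflict_0\cup\conflict_1))$ hinge on the precise semantics of $\wini$ and $\conflict_i$ produced by $\parityTemp$ and on suffix-closure of the round objectives; once these are pinned down, the rest is bookkeeping over the recursion. A secondary subtlety to treat carefully is that from the second round onward $\speci$ is no longer a plain parity condition, so the suffix-closure argument must explicitly cover the $\LTLalways$- and $\LTLeventually\LTLalways$-conjuncts rather than appeal to prefix-independence alone.
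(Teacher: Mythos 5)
Your proposal is correct and follows essentially the same route as the paper's proof: apply \cref{thm:csm-iRmaC} to the final modified game $(\gamegraph,\speczq,\specoq)$, then reduce everything to the invariance $\lang(\specz\wedge\speco)=\lang(\speczq\wedge\specoq)$, which holds because $\parityTemp$ computes $\win_i$ and $\conflict_i$ adequately permissively, so the added conjuncts $\LTLalways(\winz\cap\wino)$ and $\LTLeventually\LTLalways\neg(\conflict_0\cup\conflict_1)$ exclude no cooperatively winning play. Your write-up merely spells out in more detail (suffix-closure of the round objectives, deriving the $\team{0,1}$-equality from the language equality) what the paper compresses into "this however immediately follows."
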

\begin{proof}
As \csm $(\assumpi,\Strati)$ is adequately permissive for $\p{i}$ in the game $(\gamegraph,\speci'')$ and as the returned \csms are compatible, by using \cref{thm:csm-iRmaC}, the contracted specifications $(\speczp,\specop)$ for the two-objective game $(\gamegraph,\specz'',\speco'')$ are \irmac-specifications. Hence, 
$\lang(\specz''\wedge\speco'') = \lang(\speczp\wedge\specop)$ and 
$\team{0,1} \specz''\wedge\speco'' = \team{0}\speczp~\cap~\team{1}\specop$.
Hence, in order to prove that \eqref{equ:ceis} holds, it suffices to show that $\specz''\wedge\speco''$ is equivalent to $\specz\wedge\speco$, i.e., $\lang(\specz''\wedge\speco'') = \lang(\specz\wedge\speco)$. This however immediately follows from the fact that \parityTemp computes $\win_i$ and $\conflict_i$ in an adequately permissive manner, i.e., never excluding any cooperative winning play. Thereby, the addition of the terms 
$\LTLalways(\win_0\cap\win_1)$ and
$\LTLeventually\LTLalways \neg(\conflict_0\cup\conflict_1)$ to the specification does not exclude cooperative winning plays either, hence keeping $\lang(\specz\wedge\speco)$ the same in each iteration.
\end{proof}

\smallskip
\noindent\textbf{Decoupled Strategy Extraction.}
By combining the properties of \irmac-specifications with \cref{prop:strategyExtraction}, we have the following proposition which shows that by using templates to formalize \irmac contracts, we indeed fully decouple the strategy choices for both players. %

\begin{proposition}\label{cor:finalresult}
	In the context of \cref{thm:negotiate-irmac}, let $\pi_i$ be a strategy of player $i$ following $\assumpI{1-i}\wedge\Strati$. Then
	\begin{enumerate}[(i)]
	 \item $\strati$ is winning in $(G,\specip)$ from every $v\in\team{0,1}(\specz\cap\speco)$, and 
	 \item the strategy profile $(\pi_0,\pi_1)$ is winning in $(\gamegraph,\spec_0,\spec_1)$. %
	\end{enumerate}
\end{proposition}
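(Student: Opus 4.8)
The plan is to prove part (i) directly from the properties of the locally computed \csm and then obtain part (ii) as a short consequence of the defining \irmac equality \eqref{equ:ceis:ce}.

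For (i), fix $v\in\team{0,1}(\specz\wedge\speco)$ and $i\in\{0,1\}$. The first step is to certify that the \csm $(\assumpI{i},\Strati)$ is winning for $\p{i}$ from $v$ in the modified game $(\gamegraph,\speciq)$, which I would get by chaining three facts available from earlier. First, the proof of \cref{thm:negotiate-irmac} gives $\lang(\speczq\wedge\specoq)=\lang(\specz\wedge\speco)$, so the two specifications induce the same cooperative winning region, i.e.\ $\team{0,1}(\specz\wedge\speco)=\team{0,1}(\speczq\wedge\specoq)$. Second, since $\speczq\wedge\specoq$ implies $\speciq$, any cooperatively winning profile for the conjunction is also cooperatively winning for $\speciq$, so $\team{0,1}(\speczq\wedge\specoq)\subseteq\team{0,1}\speciq$. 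Third, adequate permissiveness via the sufficiency condition (\cref{item:def:APA:sufficient}) yields $\team{0,1}\speciq\subseteq\team{i}(\assumpI{i},\Strati)$. Composing the three inclusions places $v$ in $\team{i}(\assumpI{i},\Strati)$, so the \csm is winning from $v$.

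The second step of (i) turns this into a statement about $\strati$ itself. Since $\strati$ follows $\assumpI{1-i}\wedge\Strati$, it in particular follows $\Strati$, so \cref{rem:winundera} applies and every $\strati$-play from $v$ satisfies $\assumpI{i}\Rightarrow\speciq$; the underlying reason is the localness of templates, namely that if a $\strati$-play happens to satisfy $\assumpI{i}$ then $\p{1-i}$'s moves along it already comply with $\assumpI{i}$, so the \csm-winning condition forces the play into $\speciq$. Since $\strati$ also follows $\assumpI{1-i}$, every $\strati$-play satisfies $\assumpI{1-i}$ as well. Conjoining, every $\strati$-play from $v$ satisfies $\specip=\assumpI{1-i}\wedge(\assumpI{i}\Rightarrow\speciq)$, which is exactly the assertion that $\strati$ wins in $(\gamegraph,\specip)$ from $v$.

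For (ii), I would note that any play compliant with the profile $(\stratz,\strato)$ is in particular a $\strati$-play for each $i$, so by (i) the $\stratz\strato$-play from any $v\in\team{0,1}(\specz\wedge\speco)$ satisfies both $\speczp$ and $\specop$, hence $\speczp\wedge\specop$. As $(\speczp,\specop)$ are \irmac for $(\gamegraph,\specz,\speco)$ by \cref{thm:negotiate-irmac}, \eqref{equ:ceis:ce} gives $\lang(\speczp\wedge\specop)=\lang(\specz\wedge\speco)$, so the play satisfies $\specz\wedge\speco$ and is winning; since this holds for every $v$ in the cooperative winning region, $(\stratz,\strato)$ is winning in $(\gamegraph,\specz,\speco)$. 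The only delicate point is the first step of (i): correctly transporting the cooperative winning region across both the specification modification and the individual-versus-combined distinction (using the language equality from \cref{thm:negotiate-irmac} together with sufficiency), so that the locally computed \csm is guaranteed winning from $v$; everything else is routine bookkeeping on top of \cref{rem:winundera} and \eqref{equ:ceis:ce}.
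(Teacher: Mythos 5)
Your proof is correct and follows essentially the same route as the paper: part (i) via the same chain $\team{0,1}(\specz\wedge\speco)=\team{0,1}(\speczq\wedge\specoq)\subseteq\team{0,1}\speciq\subseteq\team{i}(\assumpI{i},\Strati)$ combined with \cref{rem:winundera}, and part (ii) as a direct consequence of (i) together with the \irmac{} property. The only cosmetic difference is that for (ii) the paper simply cites \cref{prop:irmac}, whereas you unfold the underlying argument (combined plays satisfy $\speczp\wedge\specop$, then apply \eqref{equ:ceis:ce}), which is if anything slightly more explicit about why the \emph{specific} profile $(\pi_0,\pi_1)$ is winning.
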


\begin{proof}
\textbf{(i)} As the \csm $(\assumpi,\Strati)$ is adequately permissive for $\p{i}$ in the game $(\gamegraph,\speci'')$, by \cref{rem:winningcsm}, the sufficiency condition makes it winning from all vertices in $\team{0,1}\speci''\supseteq\team{0,1}(\specz''\wedge\speco'') = \team{0,1}(\specz\wedge\speco)$.
	Moreover, as $\strati$ follows $\Strati$, by using \cref{rem:winundera}, $\strati$ is winning in the game $(\gamegraph,\assumpi\Rightarrow\speci'')$ from $\team{0,1}(\specz\wedge\speco)$. Hence, every $\strati$-play from $\team{0,1}(\specz\wedge\speco)$ satisfies both $\assumpI{1-i}$ and $\assumpi\Rightarrow\speci''$. 
	As $\specip = \assumpI{1-i}\wedge(\assumpi\Rightarrow\speci'')$, strategy $\strati$ is winning in the game $(\gamegraph,\specip)$ from $\team{0,1}(\specz\wedge\speco)$.

\textbf{(ii)} This now follows directly from (i) and \cref{prop:irmac}.
	\end{proof}

\smallskip
\noindent\textbf{Completeness.} As our final result, we note that as a simple corollary from \cref{prop:strategyExtraction} and \cref{cor:finalresult} follows that whenever a cooperative solution to the original synthesis problem $(\gamegraph,\spec_0,\spec_1)$ exists, we can extract a winning strategy profile from the \csms computed by $\negotiate$.
	\begin{corollary}
	 	In the context of \cref{thm:negotiate-irmac}, for any vertex $v$ from which there exists a winning strategy profile $(\pi'_0,\pi'_1)$ for the two-objective parity game $(\gamegraph,\spec_0,\spec_1)$, there exist strategies $\pi''_i$ from $v$ following $\assumpI{1-i}\wedge\Strati$ for both $i\in\{0,1\}$.
	\end{corollary}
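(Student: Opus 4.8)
The plan is to read off the existence of the desired strategies directly from the termination behaviour of $\negotiate$ together with \cref{prop:strategyExtraction}. First I would use the fact that we are in the context of \cref{thm:negotiate-irmac}, so the call $\negotiate(\gamegraph,\spec_0,\spec_1)$ has produced an output; by \cref{thm:terminate} this happens via the guarded return of \cref{alg:negotiate} (\cref{alg:negotiate:returnmid}), since that is the only non-recursive exit and the recursion is shown to reach it. By inspection of the guard in \cref{alg:negotiation:if}, a return can only occur when $\checkTemplate(\gamegraph,\assumpI{1-i}\wedge\Strati)=\mathtt{true}$ for both $i\in\{0,1\}$, which by \cref{def:conflict-freetemplate} is exactly the statement that each composed template $\assumpI{1-i}\wedge\Strati$ is conflict-free.

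Next I would apply \cref{prop:strategyExtraction} to each conflict-free template $\assumpI{1-i}\wedge\Strati$ (for $i=0,1$), obtaining via $\extract$ a strategy $\pi''_i$ for \p{i} that follows $\assumpI{1-i}\wedge\Strati$. Since following a template means that \emph{every} $\pi''_i$-play lies in $\lang(\assumpI{1-i}\wedge\Strati)$ irrespective of its starting vertex, these strategies in particular follow the template from the given vertex $v$, which already establishes the claimed existence for both $i\in\{0,1\}$.

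Finally, to connect with the completeness narrative preceding the statement, I would use the hypothesis that a winning strategy profile $(\pi'_0,\pi'_1)$ exists from $v$, which by definition places $v$ in the cooperative winning region $\team{0,1}(\spec_0\wedge\spec_1)$. Feeding the extracted strategies $\pi''_i$ into \cref{cor:finalresult} then yields that each $\pi''_i$ is winning in $(\gamegraph,\specip)$ from $v$ and that the profile $(\pi''_0,\pi''_1)$ is winning in $(\gamegraph,\spec_0,\spec_1)$, so the strategies read off from the \csms indeed realize the assumed cooperative solution in a fully decentralized manner.

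I do not expect a serious obstacle, as this is a corollary; the only point requiring care is the logical chain \emph{termination} $\Rightarrow$ \emph{conflict-freeness of both output templates} $\Rightarrow$ \emph{extractability}. Concretely, I must make sure that $\negotiate$ cannot terminate by any route other than the guarded return, so that conflict-freeness of $\assumpI{1-i}\wedge\Strati$ for both $i$ is guaranteed on output, and that the global nature of "following a template" is precisely what lets me discharge the apparent vertex-dependence in \cref{prop:strategyExtraction}. The hypothesis on $v$ itself is needed only to obtain winningness through \cref{cor:finalresult}, not for the bare existence of the template-following strategies.
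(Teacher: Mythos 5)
Your proposal is correct and follows essentially the same route as the paper, which presents this corollary without a separate proof and explicitly derives it from \cref{prop:strategyExtraction} and \cref{cor:finalresult}; your chain (termination only via the guarded return $\Rightarrow$ conflict-freeness of both $\assumpI{1-i}\wedge\Strati$ $\Rightarrow$ extractability via $\extract$, with the hypothesis on $v$ used only to obtain winningness through \cref{cor:finalresult}) just makes the intended argument explicit. Your observation that \enquote{following a template} is vertex-independent, so the qualifier \enquote{from $v$} is discharged automatically, is exactly the right reading of the definition.
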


\section{Experimental Evaluation}\label{sec:experiements}
To demonstrate the effectiveness of our approach, we conducted experiments using a prototype tool \toolname\footnote{Repository: \url{https://github.com/satya2009rta/cosmo}} that implements the negotiation algorithm (\cref{alg:negotiate}) for solving two-objective parity games. All experiments were performed on a computer equipped with an Apple M1 Pro 8-core CPU and 16GB of RAM.

\subsection{Factory Benchmark}\label{sec:experiments:factory}
Building upon the running example in \cref{sec:intro}, we generated a comprehensive set of 2357 factory benchmark instances. These instances simulate two robots, denoted as $R_1$ and $R_2$, navigating within a maze-like workspace.
We used four parameters, i.e., size of the maze $x\times y$, number of walls $w$, and maximum number of one-way corridors $c$.
First, we consider the B\"uchi objective that robots $R_1$ and $R_2$ should visit the upper-right and upper-left corners, respectively, of the maze infinitely often, while ensuring that they never occupy the same location simultaneously and do not bump into a wall. Second, we consider the parity objectives from \cref{exp:pens:1}.
Further details regarding the generation of these benchmark instances can be found in \cref{app:experiments-benchmarks}.
An illustration of one such benchmark is depicted in \cref{fig:experiments} (left).

\begin{figure}
    \centering
    \includegraphics[width=0.4\linewidth]{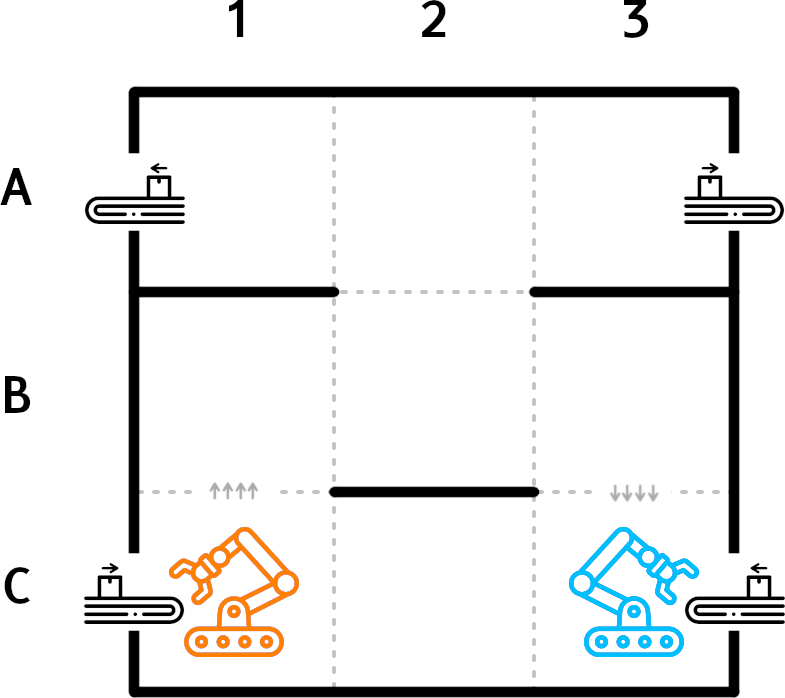}
    \def\svgwidth{0.5\linewidth}
	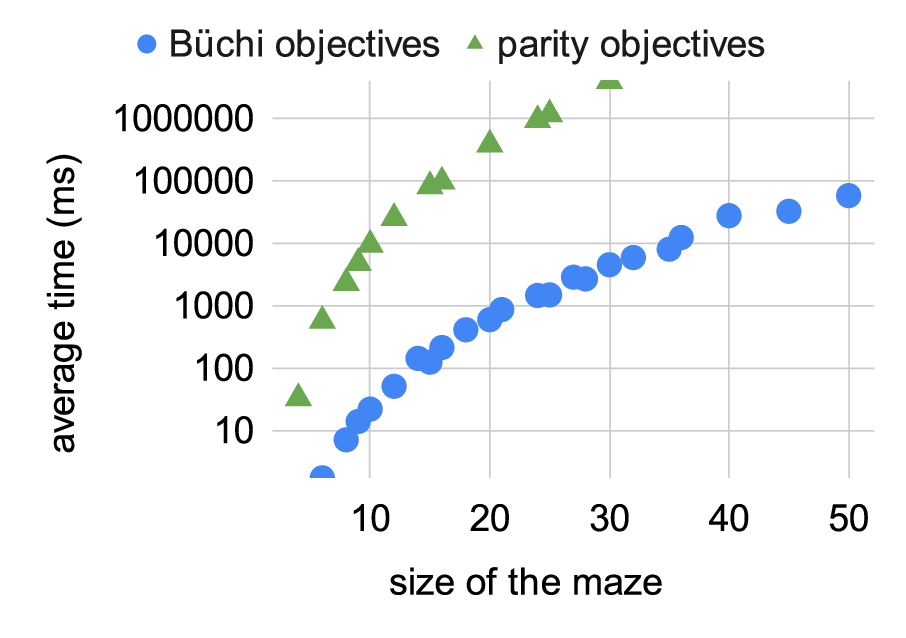
    \caption{Left: Example of a factory benchmark with parameters $x=3$, $y=3$, $w=3$, and $c=2$. Solid lines denote walls, little up- and downward pointing arrows indicate one-way corridors. Right: Data points for factory benchmarks with \buchi objectives (blue circles) and parity objectives (green triangles) describing average execution time over all instances with the same grid size. The $y$-axis is given in log-scale.}
    \label{fig:experiments}
    \vspace{-0.5cm}
\end{figure}

\smallskip
\noindent\textbf{Experimental Results.}\label{sec:evaluation:factory}
In a first set of experiments, we ran our tool on all the factory benchmarks instances and plot all average run-times per grid-size (but with varying parameters for $c$ and $w$) in \cref{fig:experiments} (right). 
We see that \toolname takes significantly more time for parity objectives compared to \buchi objectives.  
That is because computing templates for \buchi games takes linear time in the size of the games whereas the same takes biquadratic time for parity games (see \cref{app:template extraction}).
Furthermore, the templates computed for \buchi objectives do not contain co-liveness templates, and hence, they do not raise conflicts in most cases. However, templates for parity objectives contain all types of templates and hence, typically need several rounds of negotiation. %

In a second set of experiments, we compared the performance of \toolname, with the related tool\footnote{Unfortunately, a comparison with the only other related tool \cite{Bosy-based-contract} which allows for parity objectives was not possible, as we were told by the authors that their tool became incompatible with the new version of BoSy and is therefore currently unusable.} \texttt{agnes} implementing the contract-based distributed synthesis method from \cite{AGDistributed}. Unfortunately, \texttt{agnes} can only handle B\"uchi specifications and resulted in segmentation faults for many benchmark instances.
We have therefore 
only report computation times for all instances that have not resulted in segmentation faults.

The experimental results are summarized in \cref{fig:experiments:bars}-\ref{fig:experiments:agnes}.
As \toolname implements a complete algorithm, it provably only concludes that a given benchmark instance is unrealizable, if it truly is unrealizable, i.e., for $1.67\%$ of the considered $120$ instances. \texttt{agnes} however, concludes unrealizability in $36,67\%$ of its instances (see \cref{fig:experiments:bars} (left)), resulting an many false-negatives. Similarly, as \toolname is ensured to always terminate, we see that all considered instances have terminated in the given time bound. While, \texttt{agnes} typically computes a solution faster for a given instance (see \cref{fig:experiments:agnes} (left)), it enters a non-terminating negotiation loop in $13,34\%$ of the instances (see \cref{fig:experiments:bars} (right)). This happens for almost all considered grid sizes, as visible from \cref{fig:experiments:agnes} (right) where all non-terminating instances are included in the average after being mapped to $300s$, which was used as a time-out for the experiments. 

\begin{figure}[]
    \centering
    \def\svgwidth{0.49\linewidth}
	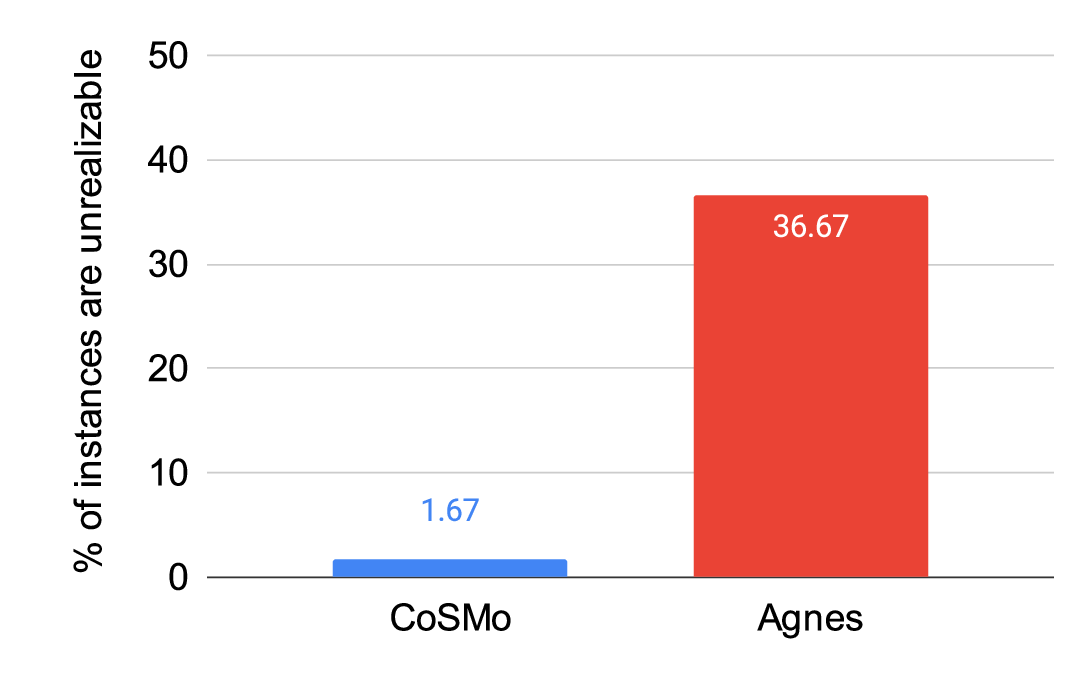
\hfill
	\def\svgwidth{0.49\linewidth}
	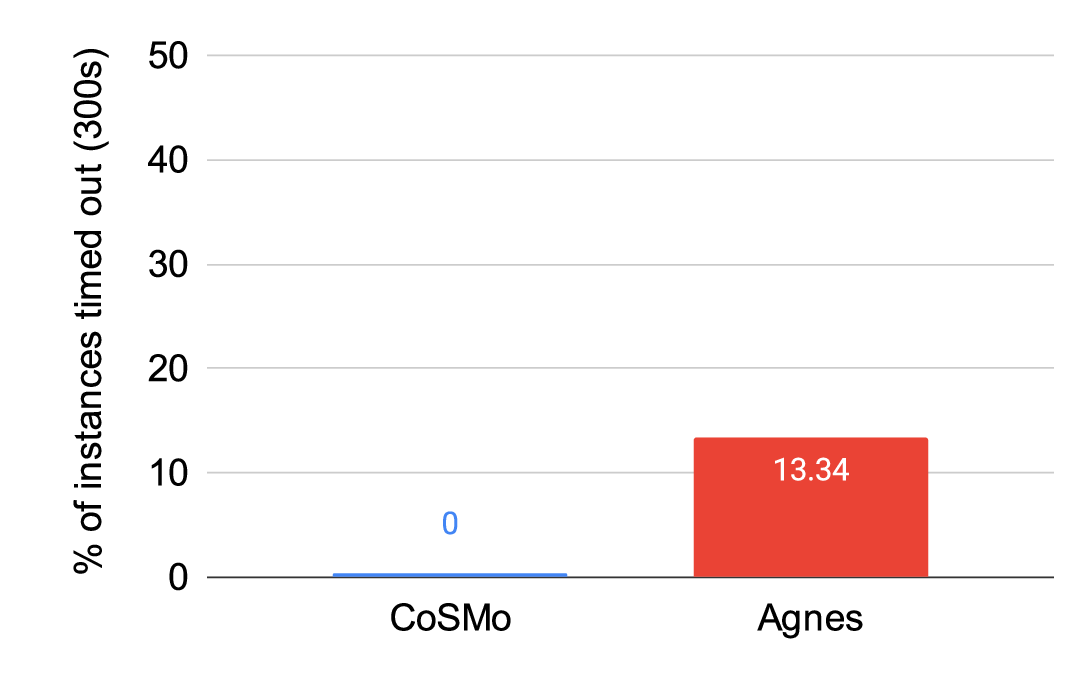
    \caption{Left: Percentage of instances on which the respective tool reports unrealizability after termination. Right: Percentage in which the respective tool does not terminate. Both numbers are mutually exclusive.}
    \label{fig:experiments:bars}
    \vspace{-0.5cm}
\end{figure}

While our experiments show that \texttt{agnes} outperforms \toolname in terms of computation times when it terminates on realizable instances (see \cref{fig:experiments:agnes} (left)), it is unable to synthesize strategies either due to conservatism or non-termination in almost $50\%$ of the considered instances (in addition to the ones which returned segmentation faults and which are therefore not included in the results). In addition to the fact that \texttt{agnes} can only handle the small class of Büchi specifications while \toolname can handle parity objectives, we conclude that \toolname clearly solves the given synthesis task much more satisfactory.

\begin{figure}[]
    \centering
    \def\svgwidth{0.49\linewidth}
	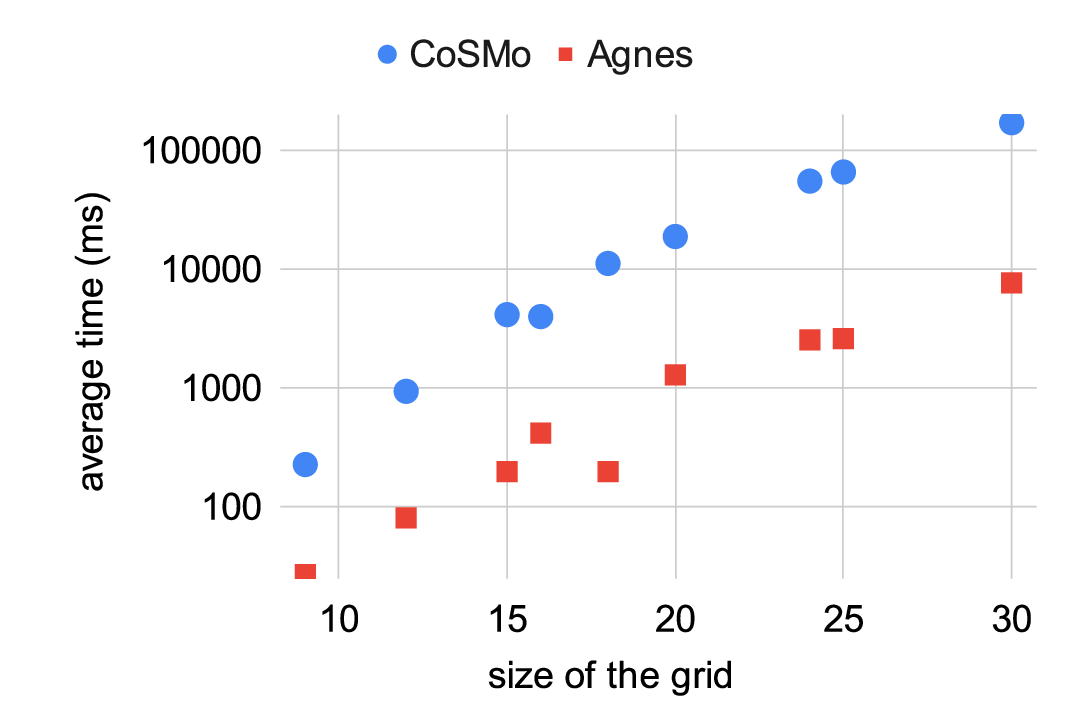
\hfill
    \def\svgwidth{0.49\linewidth}
	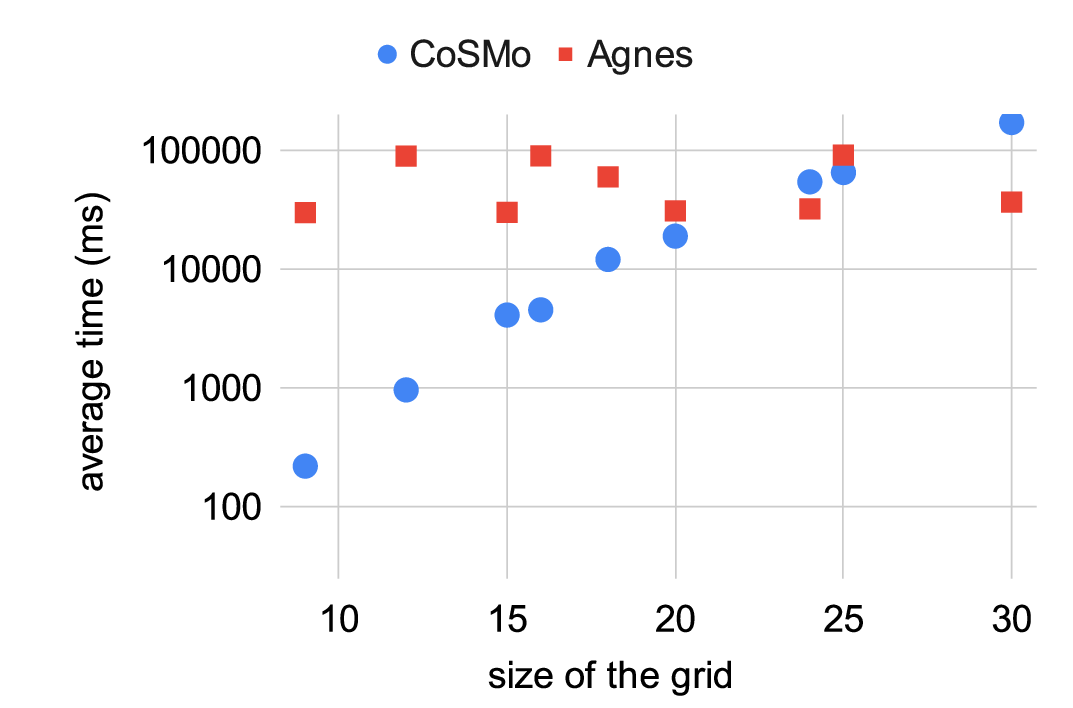
    \caption{Average computation times over all instances with the same grid size for \toolname (blue circles) and \texttt{agnes} (red squares) without timed-out instances (left) and with timed-out instances mapped to the time-out of $300s$ (right). The y-axis is given in log-scale.}
    \label{fig:experiments:agnes}
    \vspace{-0.5cm}
\end{figure}

\newcommand{\compose}{\textsc{Compose}}
\newcommand{\assumpsafe}{\colred safe}
\newcommand{\assumpgrlive}{\colred grlive}
\newcommand{\assumpdep}{\colred dep}
\newcommand{\edges}{\colred edges}

\subsection{Incremental Synthesis and Negotiation}\label{sec:composition}
While the previous section evaluates our method for a single, static synthesis task, we want to now emphasize the strength of our technique for the online adaptation of strategies. 
To this end, we assume that \cref{alg:negotiate} has terminated on the input $(\gamegraph,\spec_0,\spec_1)$ and compatible \csm's $(\assump_0,\Strat_0)$ and $(\assump_1,\Strat_1)$ have been obtained. Then a new parity objective $\spec_i'$ over $G$ arrives for component $i$, for which additional \csm $(\assump'_i,\Strat'_i):=\parityTemp(\gamegraph, \spec'_i)$ can be computed. 
It is easy to observe that if $(\assump'_i,\Strat'_i)$ does not introduce new conflicts, no further negotiation needs to be done and the \csm of component $i$ can simply be updated to $(\assump_i\wedge\assump'_i,\Strat_i\wedge\Strat_i')$. Otherwise, we simply re-negotiate by running more iterations of \cref{alg:negotiate}. 

We note that, algorithmically, this variation of the problem requires solving a chain of \emph{generalized} parity games, i.e., a parity game with a conjunction of a finite number of parity objectives. We therefore compare the performance of \toolname on such synthesis problems\footnote{The details about the benchmarks can be found in \cref{app:experiments-benchmarks}.} to the best known solver for \emph{generalized} parity games, i.e, \genziel from \cite{chatterjee2007:generalizedparitygames} (implemented by  \cite{Buryere2019:partialsolvers}). Similar to our approach, \genziel is complete and based on Zielonka's algorithm. However, it solves one \emph{centralized} cooperative game for the conjunction of all players objectives. %

\begin{figure}[]
\centering
\def\svgwidth{0.7\linewidth}
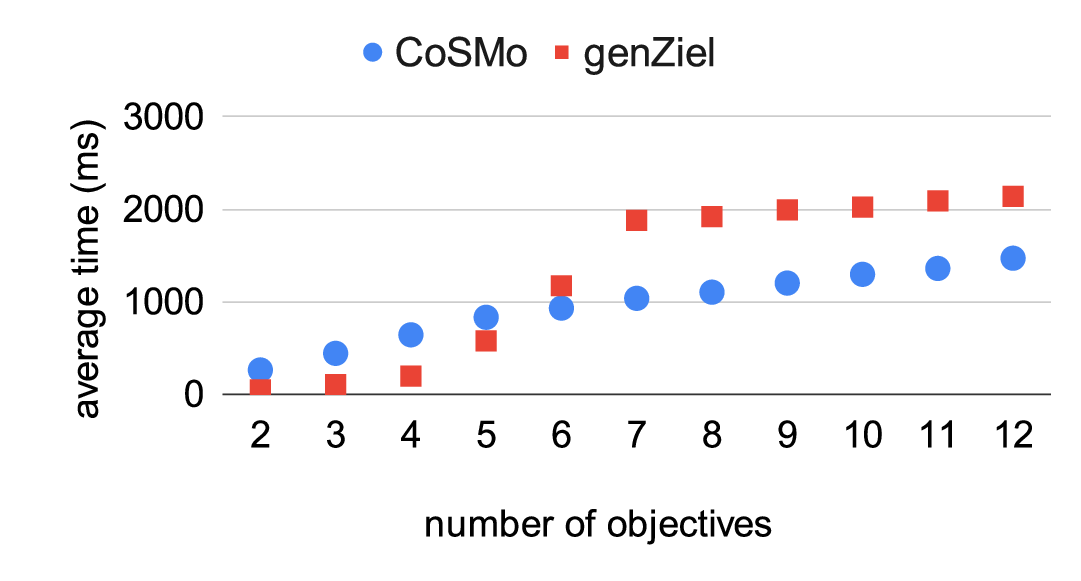
\vspace{-0.4cm}
    \caption{Experimental results over $2244$ games when new parity objectives are added \emph{incrementally one-by-one}. Data points give the average execution time (in ms) over all instances with the same number of parity objectives for \toolname (blue circles) and \genziel \cite{chatterjee2007:generalizedparitygames} (red squares).}
    \label{fig:experiment:incremnetal}
\end{figure}

\begin{figure}[]
\centering
\def\svgwidth{0.49\linewidth}
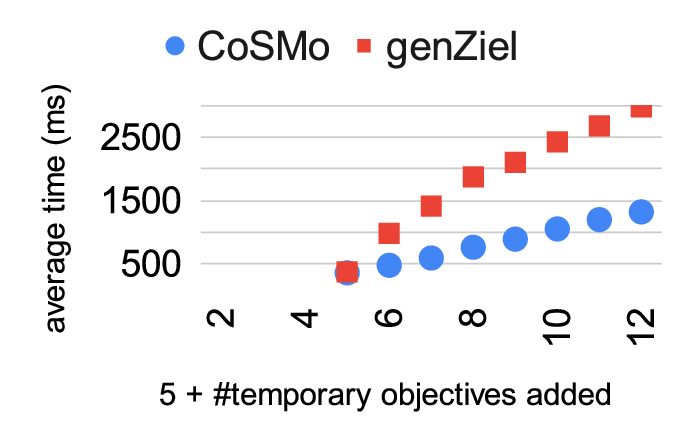
\def\svgwidth{0.49\linewidth}
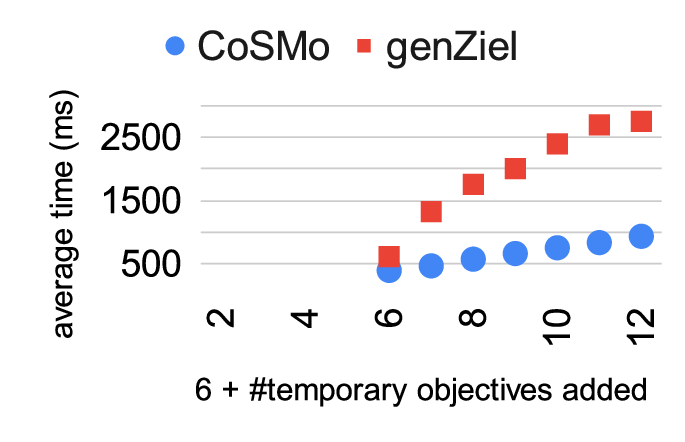
\vspace{-0.8cm}
    \caption{Variation of the experiment in \cref{fig:experiment:incremnetal} with either $5$ (left) or $6$ (right) long-term objectives.}
    \label{fig:experiment:temp}
 \vspace{-0.5cm}
\end{figure}

\smallskip
\noindent\textbf{Comparative evaluation.}
\cref{fig:experiment:incremnetal} shows the average computation time of \genziel and \toolname when objectives are \emph{incrementally one-by-one}, i.e., the game was solved with $\ell$ objectives, then  one more objective was added and the game was solved it again. We see that for a low number of objectives, the negotiation of contracts in a distributed fashion by \toolname adds computational overhead, which reduces when more objectives are added. However, as more objectives are added the chance of the winning region to become empty increases. This gives \genziel an advantage, as it can detect an empty winning region very quickly (due to its centralized computation).
In order to separate the effect of (i) the increased number of re-computations and (ii) the shrinking of the winning region induced by an increased number of incrementally added objectives, we conducted a section experiment where added objectives are allowed to disappear again after some time. %
Here, we consider benchmarks with a fixed number of long-term objectives, and iteratively add just one \emph{temporary} objective at a time.
The results are summarized in \cref{fig:experiment:temp} when the number of long-term objectives are $5$ (left) and $6$ (right). We see that in this scenario \toolname clearly outperforms \genziel, while performing computations in a distributed manner and returning strategy templates.

 \bibliographystyle{abbrv}
 \bibliography{main.bib}

 \begin{appendix}
 
\section{Details on the Computation of Adequately Permissive \csms}\label{app:template extraction}

The insight that we exploit in this section, is the fact that the algorithmic computation of adequately permissive (and hence, also winning) \csm's is similar to the computation of adequately permissive assumptions for parity games from \cite{SIMAssumptions22}, with particular modifications to extract both assumption and winning strategy templates \emph{at the same time}. %

\subsection{Set Transformers} 
 We use some set transformation operators in the algorithms to compute strategy templates. 
Let $ \gamegraph=(V=\vertexz\cupdot \vertexo, E) $ be a game graph, $ U\subseteq V $ be a subset of vertices, and $ a\in \{0,1\} $ be the player index. Then we define a predecessor, \emph{controllable predecessor} $ \cprea{\gamegraph}{U} $ as
\begin{eqnarray}
	 \cprea{\gamegraph}{U}=&\bigcup
	\begin{cases}
		\{v\in V^a\mid \exists (v,u)\in E.~u\in U \}\\
		 \{v\in V^{1-a}\mid \forall (v,u)\in E.~u\in U  \}
	\end{cases}\\
\cpre{\gamegraph}{U}{a,1} =& \cpre{\gamegraph}{U}{a}\cup U\\
\cpre{\gamegraph}{U}{a,i}=&\cpre{\gamegraph}{\cpre{\gamegraph}{U}{a,i-1}}{a} \cup \cpre{\gamegraph}{U}{a,i-1}
\end{eqnarray}
where $ i\geq 1 $.
Intuitively, the operators $ \textsf{cpre}^a_{\gamegraph}(U) $ and $\cpre{\gamegraph}{U}{a,i}$ compute the set of vertices from which $ \p{a} $ can force visiting $ U $ in \emph{at most} \emph{one} and $i$ steps respectively. 
In the following, we introduce the attractor operator $ \textsf{attr}^a_{\gamegraph}(U) $ that computes the set of vertices from which $ \p{a}$ can force at least a single visit to $ U $ in \emph{finitely many but nonzero}\footnote{In existing literature, usually $ U\subseteq\mathsf{attr}^a(U) $, i.e.\ $\attra{}{U}$ contains vertices from which $U$ is visited in zero steps. We exclude $U$ from $\attra{}{U}$ for a minor technical reason.} steps: 
\begin{eqnarray}
	\attra{\gamegraph}{U} =&\big(  \bigcup_{i\geq 1} \textsf{cpre}^{a,i}{(U)} \big)\backslash U
\end{eqnarray}

It is known that $ \attra{\gamegraph}{U} $ can be computed in finite time, and by $ \computeAttr_a(\gamegraph, U) $, we denote the procedure that takes a game graph $ \gamegraph $ and a set of vertices $ U \subseteq V $, and outputs $ \attra{\gamegraph}{U} $, and the live groups $ \{(u,v)\mid u\in\cpre{\gamegraph}{U}{a,i}\text{ and } v\in\cpre{\gamegraph}{U}{a,i-1}, i>0  \} $.

In the following, we show the algorithms to compute assumptions on the other player, and a strategy template under the assumption for self, for one of the most simple but useful objective.

\subsection{Safety Games}

A safety game is a game $\game=(\gamegraph,\spec)$ with $\spec=\square U$
for some $U\subseteq V$, and a play fulfills $\spec$ if it never leaves $U$.
It is well-known that an assumption satisfying the properties of being adequately permissive, and a winning strategy template for safety games disallow every move that leaves the cooperative winning region in $\gamegraph$. This is formalized in the following theorem. 

\begin{theorem}[\cite{chatterjee2008environment,Klein2015:mostGeneralController}]\label{thm:safety assumption}
Let $\game=(\gamegraph,\square U)$ be a safety game, $Z^*:=\nu Y. U\cap \pre{}{Y}$, and $ \safegroup_i = \Set{(u,v)\in E\mid \left(u\in \vertexi\cap Z^*\right) \wedge \left(v \notin Z^*\right)}$. 
Then $\team{0,1}\square U=Z^*$ is the cooperative winning region.

Furthermore, for both $i\in\{0,1\}$ the tuple $ (\assump_i,\spec_i) $ is an adequately permissive \csm for \p{i}, where $ \assump_i=(\safegroup_{1-i},\emptyset,\emptyset) $ and $ \spec_i= (\safegroup_i,\emptyset,\emptyset) $.  
\end{theorem}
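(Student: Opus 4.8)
The plan is to first pin down the cooperative winning region as $Z^*$, and then verify the three defining conditions of \cref{def:adequate assumption} one by one, throughout exploiting that each escaping edge is ``owned'' by exactly one player. For the equality $\team{0,1}\square U = Z^*$ I would argue by the standard two inclusions from the fixpoint identity $Z^*=U\cap\pre{}{Z^*}$. The inclusion $Z^*\subseteq\team{0,1}\square U$ follows because every vertex of $Z^*$ lies in $U$ and has a successor in $Z^*$, so stitching such successors together yields a cooperative play that never leaves $Z^*\subseteq U$. For the converse, given a play $\play$ from $v$ witnessing $v\in\team{0,1}\square U$, the set $W$ of vertices on $\play$ satisfies $W\subseteq U$ and $W\subseteq\pre{}{W}$ (each vertex has its $\play$-successor in $W$), so $W$ is a post-fixpoint of $Y\mapsto U\cap\pre{}{Y}$ and hence $W\subseteq Z^*$ by the greatest-fixpoint (Knaster--Tarski) characterization; in particular $v\in Z^*$.

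The crucial structural observation I would record next is that \emph{every} play satisfying $\square U$ in fact stays inside $Z^*$ forever: if it ever visited some $w\notin Z^*$, its suffix from $w$ would be a cooperative witness for $w\in\team{0,1}\square U=Z^*$, a contradiction. Consequently such a play never traverses an edge leaving $Z^*$, and in particular takes no edge of $\safegroup_0\cup\safegroup_1$. This immediately yields \emph{permissiveness} (\cref{item:def:APA:permissive}): since $\assump_i=\templatesafe(\safegroup_{1-i})$ only forbids edges leaving $Z^*$, we obtain $\lang(\square U)\subseteq\lang(\assump_i)$.

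For \emph{implementability} I would invoke the localness of the safety templates. All edges of $\safegroup_{1-i}$ originate in $\vertex_{1-i}\cap Z^*$, vertices controlled by \p{1-i}; and since every such $u$ lies in $\pre{}{Z^*}$ it has a successor inside $Z^*$, so \p{1-i} can choose it and thereby avoid $\safegroup_{1-i}$, while the standing assumption that every vertex has an outgoing edge supplies a move elsewhere. Hence \p{1-i} follows $\assump_i$ from every vertex, giving $\team{1-i}\assump_i=V$, and the symmetric argument on $\safegroup_i$ and \p{i} gives $\team{i}\spec_i=V$. For \emph{sufficiency} I would show $(\assump_i,\spec_i)$ is winning from every $v\in Z^*$ in the sense of \cref{def:csm_winning}: if \p{i} follows $\spec_i=\templatesafe(\safegroup_i)$ and \p{1-i} follows $\assump_i=\templatesafe(\safegroup_{1-i})$, then a one-step induction shows the play cannot leave $Z^*$, because at a \p{i}-vertex of $Z^*$ the only escaping edges lie in $\safegroup_i$ (which $\strati$ avoids) and at a \p{1-i}-vertex of $Z^*$ they lie in $\safegroup_{1-i}$ (which $\stratI{1-i}$ avoids). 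The play therefore stays in $Z^*\subseteq U$ and satisfies $\square U$, so $\team{i}(\assump_i,\spec_i)\supseteq Z^*=\team{0,1}\square U$.

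I expect the main obstacle to be conceptual rather than computational: correctly matching the ownership of the escaping edges ($\safegroup_i$ to \p{i}, $\safegroup_{1-i}$ to \p{1-i}) against the two distinct roles the templates play---$\assump_i$ as a constraint \p{1-i} must be \emph{able} to honor (used in implementability and permissiveness) versus $\spec_i$ as a constraint \p{i} \emph{imposes on itself}---and making sure the induction in the sufficiency step only ever appeals to the template the current vertex-owner is following. Once this bookkeeping is in place, every remaining step reduces to the fixpoint identity $Z^*=U\cap\pre{}{Z^*}$ together with totality of the game graph.
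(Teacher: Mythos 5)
Your proof is correct. The paper itself gives no proof of this theorem---it imports the result from the cited references \cite{chatterjee2008environment,Klein2015:mostGeneralController}---so there is nothing in-paper to compare against, but your argument (characterizing $Z^*$ as the greatest fixpoint via the two standard inclusions, observing that every $\square U$-play stays inside $Z^*$ and hence avoids all of $\safegroup_0\cup\safegroup_1$, and then checking sufficiency, implementability and permissiveness by tracking which player owns each escaping edge) is exactly the standard one and matches the paper's stated intuition that the templates merely disallow every move leaving the cooperative winning region.
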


We denote by $\computeSafe(\gamegraph,U,i)$ the algorithm computing $(\safegroup_i,\safegroup_{1-i})$ via the fixed point $\nu Y. U\cap \pre{}{Y}$ used within \cref{thm:safety assumption}. This algorithm runs in time $ \bigO(m) $, where $ m=|E|$. 

\subsection{\buchi games}
A \buchi game is a game $\game=(\gamegraph,\spec)$ where $\spec=\square\lozenge I$ for some $I\subseteq V$.
Intuitively, a play is winning for a \buchi objective if it visits the vertex set $I$ infinitely often.

\begin{algorithm}[t]
	\caption{\buchiTemp}
	\label{alg:compute buechi assumption}
		\begin{algorithmic}[1]
			\Require $ \gamegraph=\tup{V=\vertexz\cupdot \vertexo, E}$, $I\subseteq V$, and $i\in\{0,1\}$
			\Ensure \csm $(\assump_i,\Strat_i)$
			\State $Z^* \gets \solveBuchi_{0,1}(\gamegraph,I)$\label{algo:coop buechi computation}
			\State $(\safegroupS, \safegroupA)\gets  \computeSafe(\gamegraph,Z^*,i)$
			\State $\gamegraph\gets \gamegraph|_{Z^*}, I \gets I\cap Z^*$\label{algo:restricted buechi game}
			\State $ (\livegroupS, \livegroupA) \gets $\computeLive$(\gamegraph,I,i) $
			\State \Return $((\safegroupS,\emptyset,\livegroupS),(\safegroupA,\emptyset,\livegroupA))$ 
			
			\Statex
			\Procedure {\computeLive}{$\gamegraph,I,i$}
			\State $\livegroupS\gets\emptyset; \livegroupA\gets\emptyset$
			\State $ U\gets I $
			\While{$ U\not=V $}
			\State $ (W_{attr}, \livegroup')\gets \computeAttr_i(\gamegraph, U) $\label{algo:buechi:compute attr}
			\State $ \livegroupS\gets \livegroupS\cup \livegroup' $ \label{algo:buechi:add attr i}
			\State $ U\gets U\cup W_{attr} $\label{algo:buechi:add attr}
			\State $ C\gets \cpre{\gamegraph}{U}{1-i} $\label{algo:buechi:compute front}
			\State $ \livegroupA\gets \livegroupA \cup \{\{ (u,v)\in E\cap (C\times U) \}\} $\label{algo:buechi:add assump live group}
			\State $ U\gets U\cup C $\label{algo:buechi:add front}
			\EndWhile
			\State \Return $ (\livegroupS,\livegroupA) $
			\EndProcedure
		\end{algorithmic}
\end{algorithm}

Let us present \cref{alg:compute buechi assumption} that computes assumption and strategy templates for \buchi games.
We observe that the algorithm only outputs safety and group-liveness templates. The reason lies in the intrinsic nature of the \buchi objective. $ \p{i} $ needs to visit $ I $ infinitely often. Since there is no restriction on how frequently should they visit $ I $, it suffices to always eventually make progress towards $ I $, and this behavior is precisely captured by live groups. The algorithm exploits this observation and first finds the set of vertices from where $ \p{i} $ can visit $ I $ infinitely often, possibly with help from $ \p{1-i} $. Then all the edges going out from these vertices to the losing vertices becomes unsafe for both players. Now, we need to find ways to find progress towards the goal vertices. To this end, in \cref{algo:buechi:compute attr}, the algorithm finds the vertices from which $ \p{i} $ can visit $ I $ without any help from $ \p{1-i} $, while finding the live groups to facilitate the progress. Then in \cref{algo:buechi:compute front}, the algorithm finds the vertices from which $ \p{1-i} $ needs to help, and forms another live group from these vertices towards the goal vertices. This procedure continues until all the vertices have been covered, while making the covered vertices the new goal.

\begin{restatable}{theorem}{buechiAssumption}\label{thm:Buechi assumptions}

	Given a game graph $ \gamegraph=\tup{V=\vertexz\cupdot \vertexo, E} $ with \buchi objective $ \spec_i=\square\lozenge I $ for $ \p{i} $, \cref{alg:compute buechi assumption} terminates in time $ \bigO(m) $, where $ m $ is the number of edges in the graph. Moreover, $ (\assump_i,\spec_i)= \buchiTemp(\gamegraph, I, i)$ is an adequately permissive \csm for \p{i}.
\end{restatable}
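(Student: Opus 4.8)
The statement bundles two claims: the $\bigO(m)$ running time and the fact that the returned pair $(\assumpi,\Strati)$ satisfies the three requirements of \cref{def:adequate assumption}. I would handle these separately, and for the adequacy part exploit the clean split of the output into a safety layer (produced by \computeSafe) and a live-group layer (produced by \computeLive), where the attractor groups collected for $\p{i}$ are the progress obligations of the \emph{strategy} template $\Strati$ and the frontier groups collected at the $\cpre{\gamegraph}{U}{1-i}$-steps are the \emph{assumption} obligations placed on $\p{1-i}$ in $\assumpi$. Throughout I write $Z^\ast=\team{0,1}\speci$ for the cooperative winning region returned by $\solveBuchi_{0,1}$.

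\textbf{Termination and complexity.} Computing the cooperative Büchi region $Z^\ast$ is a Büchi-emptiness-style fixpoint and runs in $\bigO(m)$, and \computeSafe\ is $\bigO(m)$ by \cref{thm:safety assumption}. For the loop in \computeLive\ the key observation is that $U$ grows strictly in every iteration (each round adds at least the freshly attracted vertices or a nonempty $\cpre{\gamegraph}{U}{1-i}$-frontier), and on $Z^\ast$ every vertex is cooperatively winning, so the frontier stays nonempty until $U$ exhausts $Z^\ast$; hence the loop runs at most $|V|$ times and halts once $U=V$. The nontrivial point is that the repeated attractor and controllable-predecessor computations do not cost $\bigO(m)$ each: performed incrementally over the growing set $U$, they touch every edge only a constant number of times, so their total cost across all iterations is $\bigO(m)$, exactly as in linear-time Büchi solving.

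\textbf{Implementability and permissiveness.} Implementability ($\team{i}\Strati=V$ and $\team{1-i}\assumpi=V$) I would obtain from conflict-freeness: by construction both templates only forbid edges leaving $Z^\ast$ and only impose live groups whose source vertices retain a witnessing edge inside $Z^\ast$, so they are conflict-free in the sense of \cref{def:conflict-freetemplate}; \cref{prop:strategyExtraction} then yields a following strategy from every vertex (vertices outside $Z^\ast$ carry no constraints). For permissiveness ($\lang(\speci)\subseteq\lang(\assumpi)$) I would show that any play $\play\models\LTLalways\LTLeventually I$ satisfies $\assumpi$: every suffix of $\play$ again visits $I$ infinitely often, so each vertex of $\play$ witnesses its own membership in $Z^\ast=\team{0,1}\LTLalways\LTLeventually I$, whence $\play$ never leaves $Z^\ast$ and respects the safety part of $\assumpi$; and since $\play$ must enter each lower layer of the \computeLive-decomposition infinitely often to keep reaching $I$, and the only edges crossing such a frontier from the outside are precisely those gathered into the corresponding assumption live group, seeing a frontier source infinitely often forces a group edge to be taken infinitely often.

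\textbf{Sufficiency (the crux).} The main obstacle is sufficiency, $\team{i}(\assumpi,\Strati)\supseteq Z^\ast$. Fix $v\in Z^\ast$, a $\p{i}$ strategy $\strati$ following $\Strati$ and a $\p{1-i}$ strategy $\stratI{1-i}$ following $\assumpi$, and let $\play$ be the resulting play. The safety parts of $\Strati$ and $\assumpi$ jointly forbid every edge leaving $Z^\ast$, so $\play$ stays in $Z^\ast$. The heart of the argument is then a ranking argument along the layered decomposition built by \computeLive: assign each vertex its layer index and suppose, for contradiction, that the minimal layer visited infinitely often by $\play$ is some $r>0$. Then the layer-$r$ vertices are visited infinitely often, so they act as an infinitely-often-seen source of the governing live group---an attractor group inside $\Strati$ if layer $r$ was produced by $\computeAttr_i$, or a frontier group inside $\assumpi$ if it was produced via $\cpre{\gamegraph}{U}{1-i}$---which forces a progress edge into a strictly lower layer to be taken infinitely often, contradicting the minimality of $r$. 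Hence the minimal recurrent layer is $0$ and $\play\models\LTLalways\LTLeventually I$. Making this progress argument fully rigorous---in particular that the two interleaved families of live groups together guarantee a strict rank decrease and that no play can stall at a positive layer---is where the real work lies, and it runs in close analogy with the correctness proof for parity assumptions in \cite{SIMAssumptions22}.
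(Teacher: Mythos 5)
Your proposal follows essentially the same route as the paper's proof: termination and the $\bigO(m)$ bound via the strictly growing layered fixpoint computed incrementally, permissiveness via the observation that a Büchi-winning play never leaves $Z^*$ and can only enter each attractor-closed frontier set through edges collected into the corresponding assumption live group (which is sound precisely because the attractor fixpoint leaves no $\p{i}$ vertex outside it with an edge into it), and sufficiency via a minimal-recurrent-layer contradiction. The only notable deviations are that you obtain implementability from conflict-freeness plus \cref{prop:strategyExtraction} where the paper argues directly from the ownership of the live-group sources, and that in the sufficiency step you quantify over \emph{all} strategies following $\Strati$ (as \cref{def:csm_winning} requires) whereas the paper fixes the concrete attractor strategy --- your version therefore additionally needs the inner induction over the $\cpre{\gamegraph}{U}{i,k}$-levels inside each attractor layer, which is exactly the detail you flag as ``the real work'' and is where the remaining effort lies.
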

\begin{proof}
	We first show that the algorithm terminates. We show that the procedure $ \computeLive $ terminates. Since in \cref{algo:restricted buechi game}, the game graph is restricted to cooperative \buchi winning region $ Z^* $, we need to show that in the procedure, $ U=V=Z^* $ eventually. Let $ U_l $ be the value of $ U $ after the $ l$-th iteration of $ \computeLive (\gamegraph,I,i)  $, with $ U^0=I $. Since vertices are only added to $ U $ (and never removed) and there are only finitely many vertices, $ U_0\subseteq U_1\subseteq\ldots\subseteq U_{m-1}=U_m $ for some $ m\in\N $. 
	
	Since the other direction is trivial, we show that $ Z^*\subseteq U_m $. Suppose this is not the case, i.e. $ v\in Z^*\backslash U_m $. Since $ v\in Z^* $, both players cooperately can visit $ I $ from $ v $. Then there is a finite path $ \play = v=v_0v_1\cdots v_k $ for $ v_k\in I $. But since $ I=U_0\subseteq U_m $, but $ v\not \in U_m $, $ \play $ enters $ U_m $ eventually. Let $ l $ be the highest index such that $ v_l\not\in U_m $ but $ v_{l+1}\in U_m $.
	
	Then if $ v_l\in \vertexi $, it would be added to $ U $ in \cref{algo:buechi:add attr} of $ (m+1) $-th iteration, i.e. $ U_m\not=U_{m+1} $. Else if $ v_l\in \vertexI{1-i} $, it would be added to $ U $ in \cref{algo:buechi:add front} of $ (m+1) $-th iteration since $ v_{l+1}\in U_m $, i.e. $ U_m\not=U_{m+1} $. In either case, we get a contradiction. Hence, $ v\in U_m $, implying $ Z^*=U_m $.  Hence the procedure \computeLive, and hence \cref{alg:compute buechi assumption}, terminates.
	
	We now show that the \csm obtained is adequately permissive for \p{i}. 
	
	\begin{inparaitem}[$\blacktriangleright$]
		\noindent \item \textbf{(ii)} Implementability: We first show that $ \team{1-i}\assump = V $, but since the forward containment is trivial, we show $ \team{1-i}\assump\supseteq V $. We note that in \cref{algo:buechi:compute front}, $ C\subseteq \vertexI{1-i} $: since if $ v\in \vertexi\cap C $, then there is an edge from $ v $ to $ U $, and hence $ v\in U $ already by \cref{algo:buechi:compute attr} and \cref{algo:buechi:add attr}. Then since the source of live groups (which are only added in \cref{algo:buechi:add assump live group}) is a subset of $ \p{1-i} $'s vertices, then if $ \p{1-i} $ plays one of these live group edges infinitely often, when the sources are visited infinitely often, $ \p{i} $ can not falsify it, and hence $ \team{1-i}\assump\supseteq V $. 
		
		Analogously, $ \team{i}\Strat \supseteq V $, since in Line \ref{algo:buechi:add attr i}, live group edges are added only from $ \p{i}$'s vertices, by definition of $ \computeAttr_i $.
		
		\noindent \item \textbf{(i)} Sufficiency: Again, let $ U_l $ and $ m $ be as defined earlier. Define $ X_l\coloneqq U_l\backslash U_{l-1} $ for $ 1\leq l\leq m $, and $ X_0=U_0=I $. Then every vertex $ v\in Z^* $ is in $ X_l $ for some $ l\in [0;m] $.
		
		Consider the strategy $ \strati$ for $ \p{i} $: at a vertex $ v\in V_i\cap X_l $, she plays the $ \textsf{attr}^i $ strategy to reach $ U_{l-1} $, and for other vertices, she plays arbitrarily. It is easy to observe that $ \strati $ follows $ \spec $. Then we show that for any $ \p{1-i} $ strategy $ \stratj $ following $ \assump $, the resulting $ \strati\stratj $-play $ \play $ from any $ v_0\in Z^* $ belongs to $ \lang(\spec) $.

		Let $ v_0\in Z^*=\team{0,1}\square\lozenge I $ (from \cref{algo:coop buechi computation}). Let $ \stratj $ be an arbitrary strategy that follows $\assump$, and $ \play=v_0v_1\ldots $ be an arbitrary $ \strati\stratj $-play. Then $ \play\in \lang(\assump) $. It remains to show that 
		$ \play \in \lang(\spec) $.
		
		Suppose $ \play \not\in \lang(\spec) $, i.e. $ inf(\play)\cap I=\emptyset $. Note that $ \play $ never leaves $ Z^* $ due to safety assumption template. Then consider the set $ R $ of vertices which occur infinitely often in $ \play $. Let $ 0\leq k\leq m $ be the least index such that $ R\cap X_{k} \not=\emptyset$. From the assumption, $ k>0 $. Let $ v\in R\cap X_k $.
		
		If $ v\in \vertexi $, by the definition of $ \strati $, every time $ \play $ reaches $ v $, it must reach $ U_{k-1} $, contradicting the minimality of $ k $. Else if $ v\in \vertexI{1-i} $, then by the definition of $ \livegroupA $, infinitely often reaching $ v $ implies infinitely often reaching $ \attr{i}{U_{k-1}}{} $. But again the play visits $ U_{k-1} $ by arguments above, giving the contradiction.
		
		In any case, we get a contradiction, implying that the assumption is wrong. Hence, $ \play \in \lang(\spec)$, and $ v_0\in\team{i}(\assump, \Strat) $.

		\noindent \item \textbf{(iii)} Permissiveness: 
		Now for the permissiveness of the \csm, let $ \play\in \lang{(\spec)} $. Suppose that $ \play\not\in\lang{(\assump)} $. 
		
		Case 1: If $ \play\not\in\lang{(\templatesafe(\safegroupA))} $, then some edge $ (v,v')\in \safegroupA $ is taken in $ \play $. Then after reaching $ v' $, $ \play $ still satisfies the \buchi condition. Hence, $ v'\in Z^*=\team{0,1}\square\lozenge I $, but then $ (v,v')\not\in S^a $, which is a contradiction.
		
		Case 2: If $ \play\not\in\lang(\templategrlive(\livegroupA)) $, then $ \exists \livegroupSingle\in \livegroupA, $ such that $ \play $ visits $ src(\livegroupSingle)=C_l $ (for the value of C after $ l $-th iteration) infinitely often, but no edge in $ \livegroupSingle $ is taken infinitely often. Then since the edges in $ \livegroupSingle $ lead to $  \attr{i}{}{U_{l-1}} $, the play must  stay in either $ C_l $ or goes to $ U^{k}\setminus U^l $ for some $ k>l+1 $. In the first case, since $ I\cap Z^*\subseteq U^0 $, $ \play\not\in\spec $, which would be a contradiction. On the other hand, in the second case, after going to $ U^k\setminus U^l $, $ \play $ has an edge going from some $ v\in Z^*\backslash U^{k-1} $ to some $ v'\in U^l $ (else $ I\cap Z^*\subseteq U^0\subseteq U^l $ can not be reached). But then $ v $ would be added to $ U^{k+1} $, which contradicts to the fact that $ k>l+1 $.
		In either case, we get a contradiction, so $ \play\in\lang(\assump)$. %

		\noindent \item \textit{Complexity analysis.}
		The computation of cooperative winning region can be done in time linear in number of edges, i.e. $ \bigO(m) $. The procedure \computeLive takes $ \bigO(m) $ time. Hence, resulting in time linear in number of edges in the game graph. %
	\end{inparaitem}
\end{proof}

\subsection{\cobuchi games}
A \cobuchi game is a game $\game=(\gamegraph,\spec)$ where $\spec=\lozenge\square I$ for some $I\subseteq V$.
Intuitively, a play is winning for a \cobuchi objective if it eventually stays in $ I $.

\begin{algorithm}[h]
	\caption{\cobuchiTemp}
	\label{alg:compute cobuechi assumption}
		\begin{algorithmic}[1]
			\Require $ \gamegraph=\tup{V=\vertexz\cupdot \vertexo, E}, I\subseteq V$, and $i\in\{0,1\}$
			\Ensure \csm $(\assump_i,\Strat_i)$
			\State $Z^* \gets \solveCobuchi_{0,1}(\gamegraph,I)$\label{algo:coop cobuechi computation}
			\State $(\safegroupS, \safegroupA)\gets  \computeSafe(\gamegraph,Z^*,i)$
			\State $\gamegraph\gets \gamegraph|_{Z^*}, I \gets I\cap Z^*$\label{algo:restricted cobuechi game}\Comment{All vertices are cooperatively \cobuchi winning}
			\State $ (\colivegroupS, \colivegroupA) \gets $\computeCoLive$ (\gamegraph,I,i) $
			\State \Return $((\safegroupS,\colivegroupS,\emptyset),(\safegroupA,\colivegroupA,\emptyset))$ 
			
			\Statex
			\Procedure {\computeCoLive}{$\gamegraph,I,i$}
			\State $ U\gets \solveSafety_{0,1}(\gamegraph, I) $\Comment{$ U\subseteq I $}
			\State $ D\gets (U\times V\backslash U)\cap E $\label{alg:colive:step:add colive out of safe region}
			\While{$ U\not=V $}
			\State $ D\gets D\cup ((\pre{\gamegraph}{U}\times \pre{\gamegraph}{U})\cap E)  $\label{alg:colive:step:add colive edge in pre}
			\State $ U\gets U\cup \pre{\gamegraph}{U} $
			\State $ D\gets D\cup ((U\times V\backslash U)\cap E) $\label{alg:colive:step:add colive edge out of pre}
			\EndWhile
			\State $ \colivegroupS\gets \{e\in D\mid src(e)\in \vertexi \} $
			\State $ \colivegroupA\gets \{e\in D\mid src(e)\in V^{1-i} \} $
			\State \Return $ (\colivegroupS,\colivegroupA) $
			\EndProcedure
		\end{algorithmic}
\end{algorithm}

Let us present \cref{alg:compute cobuechi assumption} that computes assumption and strategy templates for \cobuchi games.
The algorithm again only outputs two kinds of templates for assumption and strategy templates: safety and co-liveness. While \buchi requires visiting certain vertices infinitely often, dually, \cobuchi requires avoiding a certain set of vertices eventually (or equivalently staying in the complement eventually). Again, there is no restriction on when exactly a play should stop visiting the said vertices. This paves the way for the application of co-live templates which restrict taking some edges eventually (or equivalently, disallows going away from the goal set of vertices eventually). \cref{alg:compute cobuechi assumption} starts by finding the unsafe edges for both players as in the \buchi case. Then among the vertices from which $ \p{i} $ can satisfy the \cobuchi objective (possibly with help from $ \p{1-i} $), the algorithm finds the vertices from which the play does not even need to leave $ I $ (i.e, $ U $). Clearly, these vertices belong to $ I $, and this is set of vertices where the play should eventually stay. Hence, in \cref{alg:colive:step:add colive out of safe region}, the edges going out of $ U $ are marked co-live, ensuring the play does not always leave this set. Then we need to find vertices from which $ U $ may be reached in one step, and mark the edges going away from $ U $ as co-live. This procedure is repeated till all the vertices are covered. This allows the players to choose the strategies that ensure that the play does not always go away from $ U $ (forcing it to go towards $ U $), and once the play is in $ U $, it stays there eventually.

\begin{restatable}{theorem}{coBuechiAssumption}\label{thm:coBuechi assumptions}
	Given a game graph $ \gamegraph=\tup{V=\vertexz\cupdot \vertexo, E} $ with \cobuchi objective $ \spec_i=\lozenge\square I $ for $ \p{i} $, \cref{alg:compute cobuechi assumption} terminates in time $ \bigO(m) $, where $ m $ is the number of edges. Furthermore, $ (\assump_i,\Strat_i):= \cobuchiTemp(\gamegraph,I,i)$ is an adequately permissive \csm for $ \p{i} $. 
\end{restatable}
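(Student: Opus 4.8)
The plan is to follow the same four-part template as the \buchi proof of \cref{thm:Buechi assumptions}, adapting it to the dual structure: where the \buchi algorithm builds an \emph{increasing} family of attractor layers and emits live-groups, \cref{alg:compute cobuechi assumption} builds an increasing family of \emph{co-reachability} layers and emits co-live edges. Concretely, let $U_0 := \solveSafety_{0,1}(\gamegraph,I)$ be the cooperative safe core and let $U_0\subseteq U_1\subseteq\cdots\subseteq U_r = Z^*$ be the successive values of $U$ produced by the \computeCoLive while-loop, where $U_{l+1}=U_l\cup\pre{\gamegraph}{U_l}$. I would first fix the \emph{rank} of a vertex $v\in Z^*$ as the least $l$ with $v\in U_l$. \textbf{Termination} and the $\bigO(m)$ bound then follow as in the \buchi case: each iteration adds at least the fresh frontier $\pre{\gamegraph}{U}\setminus U$, vertices are never removed, and the loop stops at $U=Z^*$ because after restricting to $Z^*$ (\cref{algo:restricted cobuechi game}) every vertex is cooperatively \cobuchi winning, hence lies on a finite cooperative path into $U_0$; a least-rank vertex outside the current $U$ would be added in the next iteration, giving the usual contradiction. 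Each edge is touched a constant number of times, so the running time is linear in $m$.

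For \textbf{implementability} (condition (ii)) I would establish conflict-freeness of both $\assump_i$ and $\Strat_i$ and invoke \cref{prop:strategyExtraction}. The only forbidden edges are unsafe edges (leaving $Z^*$) and co-live edges (which by construction leave some layer $U_l$ or stay inside a frontier). Every rank-$l$ vertex with $l\ge 1$ is by construction a cooperative one-step predecessor of $U_{l-1}$, so it retains at least one edge into $U_{l-1}$ that is neither unsafe nor co-live; every vertex of $U_0$ retains a safe-core edge staying in $U_0$, which is likewise neither unsafe nor co-live. Splitting these retained edges by the player owning the source yields $\team{i}\Strati=V$ and $\team{1-i}\assump_i=V$.

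\textbf{Sufficiency} (condition (i)) is the heart of the argument. Fixing the player-$i$ strategy that, at a rank-$l$ vertex of $\vertexi$, moves to rank $l-1$ (and stays via a safe-core edge inside $U_0$), I would show every play in which \p{1-i} follows $\assump_i$ satisfies $\lozenge\square I$ from each $v_0\in Z^*$. The safety templates confine the play to $Z^*$, and since $\colivegroupS$ and $\colivegroupA$ together comprise all marked co-live edges $D$, compliance with $\Strat_i$ and $\assump_i$ forces every edge of $D$ to be taken only finitely often. The key combinatorial claim is that every non-co-live, non-unsafe edge strictly decreases rank for $l\ge 1$ and stays in $U_0$ for $l=0$; hence, past the last co-live move, rank is non-increasing and in fact drops until the play enters and remains in $U_0\subseteq I$, establishing $\lozenge\square I$ and thus $v_0\in\team{i}(\assump_i,\Strat_i)$.

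\textbf{Permissiveness} (condition (iii)) runs the implication in reverse: given $\play\in\lang(\speci)=\lang(\lozenge\square I)$ I must show $\play\in\lang(\assump_i)=\lang(\templatesafe(\safegroupA)\wedge\templatecolive(\colivegroupA))$. For safety, an unsafe edge would move $\play$ into $V\setminus Z^*$, from which $\lozenge\square I$ is unwinnable even cooperatively, contradicting $\play\in\lang(\speci)$. For co-liveness, the decisive observation is that a tail of $\play$ lying entirely in $I$ must in fact lie in $U_0$, since every vertex admitting an infinite internal path in $I$ belongs to $\solveSafety_{0,1}(\gamegraph,I)=U_0$; once the tail is inside $U_0$ it uses only $U_0$-internal edges, which are never marked co-live, so no edge of $\colivegroupA$ is taken infinitely often. \textbf{Main obstacle.} I expect the delicate point to be the exact alignment between $D$ and the rank layering needed \emph{simultaneously} for the two directions: sufficiency requires \emph{every} rank-non-decreasing edge to be co-live (so that forbidding $D$ forces descent into $U_0$), while permissiveness requires \emph{no} edge usable infinitely often by a play staying in $I$ (in particular no $U_0$-internal edge) to be co-live. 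Verifying that the three marking steps of \computeCoLive hit exactly the former set and avoid the latter — in particular that $\pre{\gamegraph}{U}$ is read as the fresh frontier so $U_0$-internal edges are never captured — is the step I would write out most carefully, as the whole correctness hinges on it.
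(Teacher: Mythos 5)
Your plan matches the paper's proof essentially step for step: the same layered decomposition $U_0\subseteq U_1\subseteq\cdots\subseteq Z^*$ for termination, the same rank-decreasing strategy for sufficiency, the same player-split of the co-live set $D$ for implementability, and the same tail-must-lie-in-$U_0$ argument for permissiveness. If anything, your version is slightly more careful in two spots --- routing implementability through conflict-freeness and \cref{prop:strategyExtraction}, and phrasing the permissiveness step correctly as \enquote{a tail confined to $I$ must lie in $U_0$} rather than the paper's looser claim that $v\notin U_0$ implies $v\notin I$ --- and the obstacle you flag about reading $\pre{\gamegraph}{U}$ as the fresh frontier is exactly the right detail to pin down.
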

\begin{proof}
	We first show that the algorithm terminates. We show that the procedure $ \computeCoLive $ terminates when all the vertices of the game graph are cooperatively winning for the \cobuchi objective $ \speci=\lozenge\square I $, since we restrict the graph to the cooperative winning region in \cref{algo:restricted cobuechi game}. We claim that $ U=V=Z^* $, eventually.
	
	Let $ U_l $ be the value of the variable $ U $ after $ l $-th iteration of the while loop, with $ U_0= \solveSafety_{0,1}(\gamegraph, I) $. Since vertices are only added in $ U $, $ U_0\subseteq U_1\subseteq \ldots \subseteq U_k=U_{k+1} $ for some $ k\in \N $. Suppose $ V\not\subseteq U_k $, then there exists $ v\in V\backslash U_k $. Since $ v\in Z^* $, there is a $ \play = vv_1v_2\ldots $ from $ v $ to $ U_0 $ and stays there forever. Then consider the largest index $ m $ such that $ v_m\not\in U_k $, but $ v_{m+1}\in U_k $. Note that this index exists because $ U_0\subseteq U_k $. But then $ v_m\in \pre{\gamegraph}{U_k} $, and hence $ U_k\not= U_{k+1} $, which is a contradiction. Hence, our assumption that $ v\not \in U_k $ is incorrect, implying that $ U=V $ eventually.
	
	Now we show that $ \assump $ is an adequately permissive assumption. Again, let $ U_l $ and $ m $ be as defined earlier. Define $ X_l\coloneqq U_l\backslash U_{l-1} $ for $ 1\leq l\leq m $, and $ X_0=U_0=I $. Then every vertex $ v\in Z^* $ is in $ X_l $ for some $ l\in [0;m] $.
	
	We again prove (i)-(iii) of \cref{def:adequate assumption} separately and finally comment on the complexity of $\cobuchiTemp$.

	\begin{inparaitem}[$\blacktriangleright$]
		\noindent \item \textbf{(ii)} Implementability: 
		We again observe that the sources of the co-live edges in $ \colivegroupA $ are $ \p{1-i} $'s vertices and hence can be easily implemented by $ \p{1-i} $, by taking those edges only finitely often. Similarly, the sources of the co-live edges in $ \colivegroupS $ are $ \p{i} $'s vertices, giving the implementability of $ \Strat $ by $ \p{i} $.

		\noindent \item \textbf{(i)} Sufficiency: 	
		Consider the following strategy $ \strati $ for $ \p{i} $: at a vertex $ v\in X_0\cap \vertexi  $, she takes edge $ (v,v')\in E $ such that $ v'\in X_0  $, at a vertex $ v\in X_l\cap \vertexi $, for $ l\in [1;m] $, she plays the edges not in $ \safegroupS $ to reach $ U_{l-1} $ (such edges exist by definition of $ U_l $'s), and for all other vertices, she plays arbitrarily.
		
		Let $ v_0\in Z^*=\team{0,1}\lozenge\square I $ (from \cref{algo:coop cobuechi computation}). Let $ \stratj $ be an arbitrary strategy of $ \p{1-i} $ following $\assump$, and $ \play=v_0v_1\ldots $ be an arbitrary $ \strati\stratj $-play. Then $ \play\in \lang(\assump) $. It remains to show that $ \play \in \lang(\spec) $.

		Since $ \play\in \lang(\templatesafe(\safegroup)) $ and by definition of $ \strati $, $ v_i\in Z^* $ for all $ i $. Now suppose $ \play\not\in \lang(\spec) $, i.e. $ \inf(\play)\cap (Z^*\setminus I)\not=\emptyset $. Let $ u\in Z^*\setminus I $. Then to reach $ u $ infinitely often some edge from $ \colivegroupA$ must be taken infinitely often in $ \play $, since $ \strati $ makes the play go towards $ I $. But this contradicts the fact that $ \play\in \lang(\assump) $. Hence, $ \play\in\lang(\spec) $.

		\noindent \item \textbf{(iii)} Permissiveness: 
		Let $ \play= v_0v_1\ldots $ such that $ v_0\in Z^* $ and $ \play\in\lang(\spec) $. Suppose that $ \play\not\in \lang(\assump) $.
		
		Case 1: If $ \play\not\in\templatesafe(\safegroupA) $. Then the same argument as in the \buchi case gives a contradiction.
		
		Case 2: If $ \play\not\in\templatecolive(\colivegroupA) $, that is $ \exists (u,v)\in \colivegroupA, $ such that $ \play $ takes $ (u,v) $ infinitely often. By the definition of $ \colivegroupA$, $ v\in Z^*\setminus U_0 $, implying $ v\not\in I $, since if $ v\in I $ then it would have been in $ U_0 $. Hence, $ \play\not\in\lang(\spec) $, giving a contradiction. So $ \play\in\lang(\assump)$.

		\noindent \item \textit{Complexity analysis.}
		Very similar to that for \buchi objectives and therefore omitted.
	\end{inparaitem}
\end{proof}

\subsection{Parity games}
A \emph{parity} game is a game $\game = (\gamegraph,\spec)$ with parity objective $\spec = \paritygame(\priority)$, where
\begin{equation}\label{equ:parity}
	\paritygame(\priority) \coloneqq \bigwedge_{i\inodd [0;d]} \left(\square\lozenge \priorityset{i} \implies \bigvee_{j\ineven [i+1;d]} \square\lozenge \priorityset{j}\right),
\end{equation}
with \emph{priority set} $\priorityset{j} = \{v : \priority(v)=j\}$ for $0\leq j\leq d$ of vertices for some \emph{priority function} $\priority : V \rightarrow [0;d]$ that assigns each vertex a \emph{priority}.
An infinite play $\play$ is winning for $\spec = \paritygame(\priority)$ if the highest priority appearing infinitely often along $\play$ is even.

\begin{algorithm}[h]
	\caption{\parityTemp}
	\label{alg:compute parity assumption}
		\begin{algorithmic}[1]
			\Require $ \gamegraph=\tup{V=\vertexz\cupdot \vertexo, E}$, $\priority:V\rightarrow [0;d]$, and $i\in\{0,1\}$
			\Ensure \csm $(\assump_i,\Strat_i)$
			\State $Z^* \gets \solveParity_{0,1}(\gamegraph,\priority)$\label{algo:coop parity computation}
			\State $(\safegroupS, \safegroupA)\gets  \computeSafe(\gamegraph,Z^*,i)$\label{algo:parity:safetemp}
			\State $\gamegraph\gets \gamegraph|_{Z^*}$, $\priority\gets \priority|_{Z^*}$\label{algo:restricted game}
			\State $ (\colivegroupS,\colivegroupA,\condlivegroupS, \condlivegroupA) \gets $\textsc{ComputeSets}$ ((\gamegraph,\priority,i),\emptyset,\emptyset,\emptyset,\emptyset,\emptyset) $
			\State \Return $Z^*,\conflict,(\safegroupS,\colivegroupS,\condlivegroupS), (\safegroupA, \colivegroupA, \condlivegroupA)$ 
			
			\Statex
			\Procedure {ComputeSets}{$(\gamegraph,\priority,i),\conflict,\colivegroupS,\colivegroupA,\condlivegroupS, \condlivegroupA$}
			\State $d\gets \mathrm{max}\{l \mid \priorityset{l} \neq \emptyset\}$
			\If {$ d $ is odd}\label{algo:odd:begin}
			\State $ W_{\neg d}\gets \solveParity_{0,1}(\gamegraph|_{V\setminus \priorityset{d}},\priority) $ \label{algo:parity without d}
			\State $ (\colivegroupS, \colivegroupA)\gets (\colivegroupS,\colivegroupA) \cup \computeCoLive(\gamegraph, W_{\neg d},i)$ \label{algo:add depressed edges}
			\State $\conflict \gets \conflict\cup (V\setminus W_{\neg d})$
			\Else
			\State $ W_{d}\gets \solveBuchi_{0,1}(\gamegraph,\priorityset{d}) $, $W_{\neg d}\gets V\setminus W_{d}$\label{algo:buechi to reach d}
			
			\ForAll{odd $ l\in[0;d] $}
			\State $ (\livegroupS,\livegroupA)\gets \computeLive( \gamegraph|_{W_{d}},\priorityset{l+1}\cup \priorityset{l+3}\cdots \cup \priorityset{d},i) $\label{algo:add live groups}
			\State $ (\condlivegroupS, \condlivegroupA)\gets (\condlivegroupS \cup (W_d\cap \priorityset{l}, \livegroupS ),\condlivegroupA \cup (W_d\cap \priorityset{l}, \livegroupA )) $\label{algo:add cond live groups}
			\EndFor
			
			\EndIf
			\If {$d> 0$}
			\State $ \gamegraph\gets \gamegraph|_{W_{\neg d}} $ , $ \priorityset{0}\gets \priorityset{0}\cup \priorityset{d} $, $\priorityset{d}\gets\emptyset$\label{algo:reduce game to few color}
			\State \textsc{ComputeSets}$ ((\gamegraph,\priority,i),\conflict,\colivegroupS,\colivegroupA,\condlivegroupS, \condlivegroupA) $\label{algo:recursively compute}
			\Else 
			\State \Return $(\colivegroupS,\colivegroupA,\condlivegroupS, \condlivegroupA)$
			\EndIf
			\EndProcedure
		\end{algorithmic}
\end{algorithm}

Let us present \cref{alg:compute parity assumption} that computes assumption and strategy templates for parity games.
Since parity objectives are more complex than either \buchi or \cobuchi objectives individually, but have some similarities with both (i.e. nesting of trying to reach some priorities always eventually and to avoid some others eventually), our algorithm outputs combinations of safety, conditional group-liveness and co-liveness templates. 

\cref{alg:compute parity assumption} follows the approach in \emph{Zielonka's algorithm} \cite{ZIELONKA1998135}. The algorithm again computes the cooperative winning region, the safety templates, and restricts the graph to the cooperative winning region in lines \ref{algo:coop parity computation}-\ref{algo:restricted game}. To further understand the algorithm, we discuss two cases (visualized in \cref{fig:paritycases}) possible in a parity game and see the reason for their different treatment. Firstly, if the highest priority $ d $ occurring in the game graph is odd (see \cref{fig:paritycases} (left)), then clearly, $ \p{i} $ can not win if the play visits $ \priorityset{d} $ infinitely often. So any way to win would involve eventually staying in $ V\backslash \priorityset{d} $. Within \cref{alg:compute parity assumption}, this case is treated in lines \ref{algo:odd:begin}-\ref{algo:buechi to reach d}. Here, we remove the vertices with priority $ d $ (shaded yellow in \cref{fig:paritycases} (left)) from the graph and compute the cooperative winning region $ W_{\neg d} $ (shaded green in \cref{fig:paritycases} (left)) along with the assumption/strategy templates for the restricted game (line \ref{algo:parity without d}). As \cref{alg:compute parity assumption} already removes all vertices which are not cooperatively winning in line \ref{algo:restricted game}, there must be a way to satisfy the parity condition from vertices outside $ W_{\neg d} $. Based on the above reasoning, the only way to do so is by visiting $ W_{\neg d} $ and eventually staying there, i.e., winning a co-Büchi objective for $I:=W_{\neg d}$. The templates for states in $V\setminus W_{\neg d}$ (shaded white and yellow in \cref{fig:paritycases} (left)) are therefore computed by a call to the \cobuchi algorithm giving co-liveness templates (line \ref{algo:add depressed edges}), visualized by red arrows in \cref{fig:paritycases} (left).

Now we consider the case when $ d $ is even (see \cref{fig:paritycases} (right)) treated in lines \ref{algo:buechi to reach d}-\ref{algo:add cond live groups} within \cref{alg:compute parity assumption}. Here, one way for $ \p{i} $ to win is to visit $ \priorityset{d} $ infinitely often, giving the set $ W_d $ (shaded white and yellow in \cref{fig:paritycases} (right)) computed in line \ref{algo:buechi to reach d}. In this region, it suffices to construct templates which ensure visiting a higher even priority vertex infinitely often (i.e., $\priorityset{l+1}\cup \priorityset{l+3}\cdots \cup \priorityset{d}$), whenever vertices $\priorityset{l}$ of certain odd priority $l$ are visited infinitely often. This can be captured by conditional live-groups $(W_d\cap \priorityset{l}, \livegroup)$ (added via line \ref{algo:add cond live groups} and visualized by green arrows in \cref{fig:paritycases} (right)) where $W_d\cap \priorityset{l}$ encodes the condition of seeing a vertex with odd priority $l$ infinitely often, and $\livegroup$ is the live-group computed in line \ref{algo:add live groups} ensuring progress towards a higher even priority vertex.

\begin{figure}[h]
	\centering

\begin{tikzpicture}
	\fill[colorblindgreen!40] (0, 1.5) rectangle (3, 3);
	\draw[very thick, colorblindblue] (0, 1.5) -- (3, 1.5);
	\node at (3/2, 9/4) {$ W_{\neg d} $}; %
	
	\draw[very thick, colorblindblue] (0, 0) rectangle (4.5, 3);
	\fill[colorblindorange!30] (3, 0) rectangle (4.5, 3);
	\draw[very thick, colorblindblue] (3, 0) rectangle (4.5, 3);
	\node at (15/4, 1.5) {$ P^d $};
	
	\draw[->, red, thick] (2.75,2.5) -- (3.25,2.5);
	\draw[->, red, thick] (2.75,2) -- (3.25,2);
	\draw[->, red, thick] (4.25,1.75) -- (4.25,1.25);
	\draw[->, red, thick] (3.25,1.75) -- (3.25,1.25);
	\draw[<-, red, thick] (2.75,1) -- (3.25,1);
	\draw[<-, red, thick] (2.75,0.5) -- (3.25,0.5);
	\draw[->, red, thick] (2,1) -- (1.5,1);
	\draw[->, red, thick] (1.25,0.75) -- (0.75,0.75);
	\draw[->, red, thick] (1.8,0.5) -- (1.3,0.5);

	\draw (9/4, -0.5) node {When $ d $ is odd};
\end{tikzpicture}
\hspace{1cm}
\begin{tikzpicture}
	\tikzset{every state/.style={thick, minimum size=0.4cm}}
	\fill[colorblindgreen!40] (0, 0) rectangle (3, 1.5);
	\fill[colorblindgreen!60!colorblindorange!30] (3,0) rectangle (4.5, 1.5);
	\node at (3/2, 3/4) {$ W_{\neg d} $}; %
	
	\draw[very thick, colorblindred] (0, 0) rectangle (4.5, 3);
	\fill[colorblindorange!30] (3, 1.5) rectangle (4.5, 3);
	\draw[very thick, colorblindred] (3, 0) rectangle (4.5, 3);
	\draw[very thick, colorblindred] (0, 1.5) -- (4.5, 1.5);
	\node at (15/4, 1.5) {$ P^d $};
	
	\node[state] (q0) at (0.3, 2.25) {};
	\node[state] (q1) at (0.9, 2.25) {};
	\node[state] (q2) at (1.7, 2.25) {};
	\node[state] (q3) at (2.5, 2.25) {};
	\node[state] (q4) at (3.5, 2.25) {};
	
	\draw (0.125,2.41) node [anchor=north west] {\tiny$ 0 $};
	\draw (0.725,2.41) node [anchor=north west] {\tiny$ 1 $};
	\draw (1.525,2.41) node [anchor=north west] {\tiny$ 2 $};
	\draw (2.24,2.39) node [anchor=north west][font=\fontsize{2.5}{2.5}\selectfont]  {$d-1$};
	\draw (3.325,2.41) node [anchor=north west] {\tiny$ d $};
	\draw (1.82,2.41) node [anchor=north west] {\tiny$ \cdots $};
	
	\path[->, color={rgb, 255:red, 65; green, 117; blue, 5 },draw opacity=0.8, very thick, bend right] (q1) edge (q2);
	\path[->,color={rgb, 255:red, 65; green, 117; blue, 5 },draw opacity=0.8, very thick, bend left] (q1) edge (q2);
	\path[->, color={rgb, 255:red, 65; green, 117; blue, 5 },draw opacity=0.8, very thick, bend right] (q3) edge (q4);
	\path[->,color={rgb, 255:red, 65; green, 117; blue, 5 },draw opacity=0.8, very thick, bend left] (q3) edge (q4);
	\path[->, color={rgb, 255:red, 65; green, 117; blue, 5 },draw opacity=0.8, very thick, bend left] (q1) edge (q4);

	\draw (9/4, -0.5) node {When $ d $ is even};

\end{tikzpicture}
\caption{Visualization of the case-distinction in \cref{alg:compute parity assumption} for highest odd (left, blue) and highest even (right, red) color. Yellow and green regions indicate vertex sets $ \priorityset{d} $ and $ W_{\neg d} $, respectively, (overlapping for red). The border colors (blue for odd, and red for even) assist in case identification within \cref{fig:parityalgo}. Red and green arrows indicate co-live and live edges, respectively.}\label{fig:paritycases}
\end{figure}

The reader should note that in either case, for the vertices in $ W_{\neg d} $ (shaded green in \cref{fig:paritycases}) the only way to win is to satisfy the parity condition by visiting even priority vertices with priority lesser than $d$ infinitely often. Since vertices in $ \priorityset{d}\cap W_{\neg d}$ can not be visited infinitely often from this region (by construction), the priority of these vertices can be reduced to $ 0 $ (line \ref{algo:reduce game to few color}), allowing us to get a restricted game graph with fewer priorities. We then find the templates in the restricted graph recursively (line \ref{algo:recursively compute}). This recursion is visualized in 
\cref{fig:parityalgo} when the highest priority in the restricted graph is 3.

\begin{figure}
\centering
\begin{tikzpicture}
	\fill[colorblindorange!30] (5.5, 0) rectangle (7.5, 4); %
	\fill[colorblindgreen!40] (0, 1.5) rectangle (5.5, 4); %
	\fill[white] (0.1, 1.6) rectangle (5.4, 3.9); %
	\fill[colorblindorange!30] (4.5, 1.6) rectangle (5.4, 3.9); %
	\fill[colorblindgreen!30] (0.1, 1.6) rectangle (4.5, 3); 
	\fill[colorblindgreen!60!colorblindorange!30] (4.5, 1.6) rectangle (5.4, 3);
	\fill[white] (0.2, 1.7) rectangle (5.3, 2.9); %
	\fill[colorblindorange!30] (0.2, 1.7) rectangle (1.5, 2.9); %
	\fill[colorblindgreen!40] (1.5, 2) rectangle (5.3, 2.9); %

	\draw[very thick, colorblindred] (1.6, 2.1) rectangle (5.2, 2.8); %
	\draw[very thick, colorblindred] (0.1, 3) -- (5.4, 3); %
	\draw[very thick, colorblindred] (4.5, 1.6) -- (4.5, 3.9); %
	\draw[very thick, colorblindred] (0.1, 1.6) rectangle (5.4, 3.9); %
	\draw[very thick, colorblindblue] (1.5, 2) -- (5.3, 2); %
	\draw[very thick, colorblindblue] (1.5, 1.7) -- (1.5, 2.9); %
	\draw[very thick, colorblindblue] (0.2, 1.7) rectangle (5.3, 2.9); %
	\draw[very thick, colorblindblue] (0, 1.5) -- (5.5, 1.5); %
	\draw[very thick, colorblindblue] (5.5, 0) -- (5.5, 4); %
	\draw[very thick, colorblindblue] (0, 0) rectangle (7.5, 4); %
	
	\draw (8.5,0.5) node [anchor= west]   [align=left, colorblindblue] {$ W_{\neg 4} $};
	\draw (8.5,1.5) node [anchor= west]   [align=left, colorblindred] {$ W_{\neg 3} $};
	\draw (8.5,2.5) node [anchor= west]   [align=left, colorblindblue] {$ W_{\neg 2} $};
	\draw (8.5,3.5) node [anchor= west]   [align=left, colorblindred] {$ W_{\neg 1} $};

	\draw[->, colorblindblue, thick] (7.5,0.5) -- (8.4,0.5);
	\draw[->, colorblindred, thick, bend left] (5.4,1.8) -- (8.4,1.5);
	\draw[->, colorblindblue, thick, bend left] (5.3,2.5) -- (8.4,2.5);
	\draw[->, colorblindred, thick, bend left] (5.2,2.6) -- (8.4,3.5);

	\draw (6.5,2) node {$ P^3 $};
	\draw (5,3.5) node {$ P^2 $};
	\draw (0.75,2.25) node {$ P^1 $};
	\draw[font=\fontsize{3.5}{3.5}\selectfont] (4.9,1.85) node {$ P^2\rightarrow P^0 $};

	\path[->, color={rgb, 255:red, 65; green, 117; blue, 5 },draw opacity=0.8, very thick] (0.7,3.5) edge (1.2,3.5);
	\path[->, color={rgb, 255:red, 65; green, 117; blue, 5 },draw opacity=0.8, very thick] (2.45,3.5) edge (2.95,3.5);
	\path[->, color={rgb, 255:red, 65; green, 117; blue, 5 },draw opacity=0.8, very thick] (4.2,3.5) edge (4.7,3.5);
	\path[->, red, thick] (5.25,3.5) edge (5.75,3.5);

	\path[->, red, thick] (1.3,2.3) edge (1.8,2.3);
	\path[->, red, thick] (2.8,2.3) edge (3.2,2.3);
	\path[->, red, thick] (4.2,2.3) edge (4.7,2.3);
	\path[->, red, thick] (5.25,2.3) edge (5.75,2.3);
	\path[<-, red, thick] (1.3,1.9) edge (1.8,1.9);

	\draw[->, red, thick] (7,2.25) -- (7,1.75);
	\draw[<-, red, thick] (0.5,2.5) -- (0.5,2);
	
	\draw[<-, red, thick] (5.25,0.75) -- (5.75,0.75);
	\draw[<-, red, thick] (4.1,0.75) -- (4.6,0.75);
	\draw[<-, red, thick] (2.4,0.75) -- (2.9,0.75);
	\draw[<-, red, thick] (0.7,0.75) -- (1.2,0.75);

\end{tikzpicture}
\caption{Schematic representation of the recursion within \cref{alg:compute parity assumption} for $d=3$ using the color-coding from \cref{fig:paritycases}. Here, the only way to win is to see vertices in $\priorityset{2}$ infinitely often. }\label{fig:parityalgo}
\end{figure}
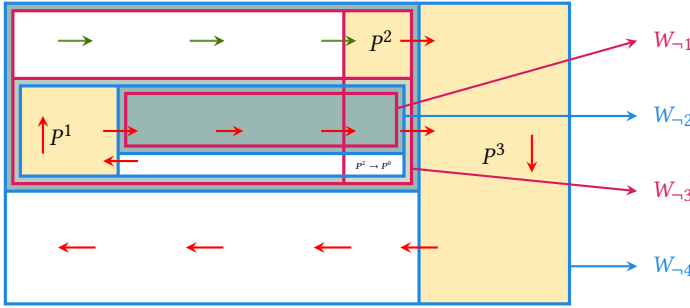

\begin{restatable}{theorem}{parityAssumption}\label{thm:parity assumption}
	Given a game graph $ \gamegraph=\tup{V=\vertexz\cupdot \vertexo, E} $ with parity objective $ \spec_i=\paritygame(\priority) $ for $ \p{i} $ where $ \priority $ is some priority function, \cref{alg:compute parity assumption} terminates in time $ \bigO(n^4) $, where $ n $ is the number of vertices in the graph. Moreover, with $ (\cdot,\cdot,\assump_i, \Strat_i):=\parityTemp(\gamegraph,\priority,i)$ the tuple $(\assump_i, \Strat_i)$ is an adequately permissive \csm for $ \p{i} $. %
\end{restatable}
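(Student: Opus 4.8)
The plan is to establish both the complexity bound and the adequate-permissiveness claim by induction on the number of priorities $d = \max\{l \mid \priorityset{l}\neq\emptyset\}$, using that every recursive call of \textsc{ComputeSets} strictly decreases $d$: in \cref{algo:reduce game to few color} the top priority is merged into $0$ and the game is restricted to $W_{\neg d}$, which contains no vertex of priority $d$. In the base case $d=0$ every play has highest recurring priority $0$ and hence wins, so all edge-sets are empty and the trivial \csm $(\assumpi,\Strati)\compt(\emptyset,\emptyset,\emptyset)$ is vacuously adequately permissive. For the step I would treat the three subroutine theorems as black boxes — the safety templates of \cref{thm:safety assumption} (used in \cref{algo:parity:safetemp}), the live-group construction \computeLive{} of \cref{thm:Buechi assumptions}, and the co-live construction \computeCoLive{} of \cref{thm:coBuechi assumptions} — and combine them with the induction hypothesis applied to $\gamegraph|_{W_{\neg d}}$.

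For termination and the $\bigO(n^4)$ bound I would charge each recursion level with a constant number of cooperative solves ($\solveParity_{0,1}$, $\solveBuchi_{0,1}$, $\solveCobuchi_{0,1}$) plus one call to \computeLive{} or \computeCoLive{} (each $\bigO(m)$ by the cited theorems). Since each cooperative objective reduces to a one-player problem solvable in polynomial time and the recursion depth is bounded by $d\leq n$, summing over levels gives $\bigO(n^4)$; the precise accounting mirrors \cite{SIMAssumptions22}. Well-definedness of the returned templates follows from the compositionality noted in \cref{section:templates:recal}.

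Next I would verify the three conditions of \cref{def:adequate assumption} separately, each time splitting on the parity of $d$. Implementability (ii) is the easiest: by construction every assumption edge ($S$-superscripted) has its source in $V_{1-i}$ and every strategy edge ($A$-superscripted) in $V_i$, so each player can comply with its own part, giving $\team{1-i}\assumpi=\team{i}\Strati=V$. Sufficiency (i), i.e.\ $\team{i}(\assumpi,\Strati)\supseteq\team{0,1}\speci=Z^*$ in the sense of \cref{def:csm_winning}, is the crux: after the safety template confines all plays to $Z^*$, for odd $d$ I would use the identity $Z^*=\team{0,1}\LTLeventually\LTLalways W_{\neg d}$ so that the co-live template (sound by \cref{thm:coBuechi assumptions}) forces the play to settle in $W_{\neg d}$, where priority $d$ never recurs and the induction hypothesis applies; for even $d$ the conditional live-groups $(W_d\cap\priorityset{l},\livegroup)$ guarantee that a higher even priority recurs whenever an odd $l$ recurs, forcing the highest recurring priority of any play settling in $W_d$ to be even, while plays settling in $W_{\neg d}$ are covered recursively.

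Permissiveness (iii) I would prove by contraposition in the same two cases: any $\play\in\lang(\speci)$ stays in $Z^*$ (every visited vertex is cooperatively winning, being on a winning play), so it never takes an assumption safety edge; and were it to take an assumption co-live edge infinitely often (odd $d$) or to violate a conditional live-group (even $d$), the permissiveness clauses of \cref{thm:coBuechi assumptions} and \cref{thm:Buechi assumptions}, together with the induction hypothesis on $W_{\neg d}$, would contradict $\play\in\lang(\speci)$. I expect the main obstacle to be sufficiency: establishing the two cooperative-region identities ($Z^*$ being exactly the cooperative co-Büchi region toward $W_{\neg d}$ for odd $d$, and the soundness of the $W_d$/$W_{\neg d}$ split for even $d$), and arguing that the strategy assembled from the per-level live/co-live choices together with the recursively obtained strategy genuinely follows the composed $\Strati$ while remaining winning against every $\assumpi$-compliant opponent; the delicate point is the behaviour of plays that move between $W_d$ and $W_{\neg d}$, which is resolved by the standard onion-ring argument of Zielonka's algorithm.
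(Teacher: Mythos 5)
Your proposal follows essentially the same route as the paper's proof: induction along Zielonka's recursion on the highest priority, verification of implementability, sufficiency and permissiveness separately with a case split on the parity of $d$, reuse of the safety/Büchi/co-Büchi sub-theorems, and the same per-level cost accounting summing to $\bigO(n^4)$. The one point you flag as delicate — that the per-level live/co-live obligations must be assembled into a single strategy winning against every $\assumpi$-compliant opponent — is exactly where the paper supplies the explicit finite-memory construction (switching among the Büchi attractor strategies for $W_d, W_d\cup W_{d-2},\dots$ in the even case), so your sketch identifies the right obstacle and the right resolution.
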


\begin{proof}
	The termination of the algorithm follows from the termination of the \textsc{ComputeSets} procedure. Note that if the graph is either empty or if every vertex has priority 0, then the procedure terminates trivially. Now, in every iteration of the procedure, either at least a vertex is removed (if $ d $ is odd, then $ W_{\neg D} $ has fewer vertices) or the number of priority is reduced (if $ d $ is even, then $ G\mid _{W_{\neg D}} $ has fewer priorities). Then, after finitely many iterations, \textsc{ComputeSets} terminates.

	We prove sufficiency, implementability and permissiveness below and then analyze the complexity of \cref{alg:compute parity assumption}.
	
	\begin{inparaitem}[$\blacktriangleright$]
		\noindent \item \textbf{(ii)} Implementability: 
		We note that the assumption template $ \assump $ is implementable by $ \p{1-i} $ due to the implementability of safety, liveness and co-liveness assumptions: if for a conditional live group, the corresponding vertex set is reached infinitely often, and also the sources of live groups are visited infinitely often, $ \p{1-i} $ can choose the live group edges, since they are controlled by $ \p{1-i} $. Moreover, there won't be any conflict due to the live groups as there can be no unsafe or co-live edge that is included in a conditional live group by construction. Analogously, $ \Strat $ is implementable by $ \p{i} $.
		
		\noindent \item \textbf{(i)} Sufficiency:
		We give a strategy for $ \pz $ depending on the parity of the highest priority $ d $ occurring in the game and show that it is winning under assumption $\assump$ for all vertices in the cooperative winning region $ Z^*=\solveParity(\gamegraph,\priority) $. The strategy uses finite memory and the winning strategies for $ \po $ in subgames with \buchi (\cref{thm:Buechi assumptions}) and \cobuchi (\cref{thm:coBuechi assumptions}) objectives. 
		
		By $ \buchigame(\gamegraph, U) $, we denote the game $ (\gamegraph, \square\lozenge U) $, and by $ \cobuchigame(\gamegraph, U) $, we denote the game $ (\gamegraph, \lozenge\square U) $. We also use the definitions of $d$, $ W_d $ and $ W_{\neg d} $, as in the \cref{alg:compute parity assumption}. Consider the following strategy $ \stratz $ of $ \pz $:
		
		\begin{inparaitem}[$\triangleright$]
			\item $ d $ is \textbf{odd}: If the play is in $ V\setminus W_{\neg d} $, then $ \pz $ plays the $ \cobuchigame(\gamegraph, W_{\neg d}) $ winning strategy to eventually end up in $ W_{\neg d} $. If the play is in $ W_{\neg d} \cap Z^* $, $ \pz $ plays the recursive winning strategy for $ (G|_{W_{\neg d}}, \paritygame(\priority)) $. Otherwise, she plays arbitrarily.
			
			\item $ d $ is \textbf{even}: If the play is in $ W_d $, $ \pz $ switches its strategy among $ \buchigame(\gamegraph, W_d) $, $\buchigame(\gamegraph, W_d \cup W_{d-2})$, $\ldots$, $\buchigame(\gamegraph, W_d \cup W_{d-2} \cup \cdots W_2)$ winning strategies, i.e., for each vertex, she first uses the first strategy in the above sequence, then when that vertex is repeated, she uses the second strategy for the next move, and keeps switching to the next strategies for every move from the same vertex. If the play is in $ V\setminus W_d  \cap Z^* $, then she plays the recursive winning strategy for $ (G|_{W_{\neg d}}, \paritygame(\priority)) $, where $ \priority $ is modified again as in \cref{algo:reduce game to few color}. Otherwise, she plays arbitrarily.
		\end{inparaitem}

		We prove by induction, on the highest occurring priority $ d $, that the above constructed strategy $ \stratz $ for $ \pz $ ensures satisfying the parity objective on the original game graph if the assumption $ \assump $ is satisfied. For the base case, when $ d=0 $, the constructed strategy is trivially winning, because the only existing color is even. Now let the strategy be winning for $ d-1\geq0 $.
		
		Let $ v_0\in Z^*= \solveParity(\gamegraph,\priority)$. Let $ \strato $ be an arbitrary strategy of $ \po $ following $\assump $, and $ \play=v_0v_1\ldots $ be an arbitrary $ \stratz\strato $-play. Then $ \play\in \lang(\assump) $. We need to show that $\play$ is winning, i.e., $\play\in\lang(\Phi)$. Note that by the safety assumption and by the construction of $\stratz$, $\play$ stays in the vertex set $Z^*$. 
		
		Case 1: If $ d $ is odd, then since at vertices in $ V\setminus W_{\neg d} $, $ \pz $ plays to eventually stay in $ W_{\neg d} $, the play can not stay in $ W_{\neg d} $ without violating $ \templatecolive(\colivegroup) $. And if $ \play $ eventually stays in $ W_{\neg d} $, then by the induction hypothesis, it is winning, since $ W_{\neg d}\cap \priorityset{d}=\emptyset $.
		
		Case 2: If $ d $ is even, then if the play stays in $ W_d $ eventually, and if the play visits vertices of an odd priority $i$ infinitely often, then $ \strato $ satisfies $\computeLive(\gamegraph,\priorityset{i+1}\cup \priorityset{i+2}\cup \cdots \cup \priorityset{d},i)$ by the conditional live group assumption. Note that $ \pz $ plays the $ \buchigame(\gamegraph, \priorityset{i+1}\cup \priorityset{i+2}\cup \cdots \cup \priorityset{d})$ winning strategy for infinitely many moves from every vertex occurring in $ \play $. Since  $ \strato $ satisfies $\computeLive(\gamegraph,\priorityset{i+1}\cup \priorityset{i+2}\cup \cdots \cup \priorityset{d},i)$, after these moves as well, the play visits $ (\priorityset{i+1}\cup \priorityset{i+2}\cup \cdots \cup \priorityset{d}) $. Hence the play will visit vertices of an even color $>i$ infinitely often, implying that $ \play$ is winning. Else if $\play$ stays in $ V\setminus W_{d} $ eventually, then it is winning by induction hypothesis. This gives the sufficiency of the \csm computed by the algorithm.
		
		\noindent \item  \textbf{(iii)} Permissiveness:
		Now for the permissiveness of the \csm,  let $ \play\in \lang(\spec)$. We prove the claim by contradiction and suppose that $ \play\not\in\lang(\assump) $. 
		
		Case 1: If $ \play\not\in\templatesafe(\safegroup) $. Then some edge $ (v,v')\in \safegroup $ is taken in $ \play $. Then after reaching $ v' $, $ \play $ still satisfies the parity objective. Hence, $ v'\in Z^* $, but then $ (v,v')\not\in \safegroup $, which is a contradiction.
		
		Case 2: If $ \play\not\in\templatecondlive(\condlivegroup) $. Then for some even $ j $ and odd $i<j$, $\play$ visits $W_j\cap \priorityset{i}$ infinitely often but does not satisfy the live transition group assumption $\computeLive(\gamegraph',\priorityset{i+1}\cup \priorityset{i+2}\cdots \cup \priorityset{j}),i) $, where $\gamegraph' = \gamegraph|_{W_j}$. Due to the construction of the set $W_j$, it is easy to see that once $\play$ visits $W_j$, it can never visit $V\setminus W_j$. Hence, eventually $\play$ stays in the game $\game'$ and visits $\priorityset{i}$ infinitely often. Since $\play \in \lang(\spec)$, it also visits some vertices of some even priority $>i$ infinitely often, and hence, it satisfies $\square\lozenge(\priorityset{i+1}\cup \priorityset{i+3}\cdots \cup \priorityset{j})) $ in $ \gamegraph' $. Since $\computeLive(\gamegraph',\priorityset{i+1}\cup \priorityset{i+3}\cdots \cup \priorityset{j},i) $ is a permissive assumption for $(\gamegraph', \square\lozenge(\priorityset{i+1}\cup \priorityset{i+3}\cdots \cup \priorityset{j})) $, the play $\play$ must satisfy $\computeLive(\gamegraph',\priorityset{i+1}\cup \priorityset{i+3}\cdots \cup \priorityset{j},i) $, which contradicts the assumption.
		
		Case 3: If $ \play\not\in\templatecolive(\colivegroup) $. Then for some odd $ i $ an edge $ (u,v)\in \computeCoLive(\gamegraph,W_{\neg i},i) $ is taken infinitely often. Then the vertex $ v\in V\setminus W_{\neg i} $ is visited infinitely often. Note that $ \play $ can not be winning by visiting an even $ j>i $, since otherwise $ v $ would have been in $ \solveBuchi(\gamegraph, \priorityset{j}) $ as from $ v $ we can infinitely often see $ j $, and hence would have been removed from $ \game $ for the next recursive step. Hence, $ \play $ visits some even $ j<i $ infinitely often, i.e. $ i $ is not visited infinitely often. Then $ v $ would be in $ W_{\neg d} $, which is a contradiction.

		\noindent \item \textit{Complexity analysis.} We note that the cooperative parity game can be solved in time $ \bigO((n+m)\log d) $, where $ n, m $ and $ d $ are the number of vertices, edges and priorities respectively: consider the graph where $ pz $ owns all the vertices, find the strongly connected components in time $ \bigO(n+m) $, check which of these components have a cycle with the highest priority even by reduction to even-cycle problem \cite{valerie2001Paritywordproblem}. Then \textsc{ComputeSets} takes time $ \bigO(n^2) $ for the even case, but is dominated by $ \bigO(n^3) $ time for the odd case. For every priority, \textsc{ComputeSets} is called once, that is at most $ 2n $ calls in total. Then the total running time of the algorithm is $ \bigO((n+m)\log d+2n.(n^3)) = \bigO(n^4) $.
	\end{inparaitem}
\end{proof}

\section{Additional Material for Experimental Results}\label{app:experiments-benchmarks}
Due to space constrains we describe the benchmark generation for the experiments here. 

\smallskip
\noindent\textbf{Factory Benchmark Generator.} To generate problem instances with different computational difficulty, our benchmark generator takes four parameters to change the characteristics of the game graph: the number of columns $x$, the number of rows $y$, the number of walls $w$, and the maximum number of one-way corridors $c$. Given these parameters, the workspace of the robots is constructed as follows: first, $w$ horizontal walls, i.e., walls between two adjacent rows, are generated randomly. We ensure that there is at least one passage from every row to its adjacent rows (if this is not possible with the given $w$, then $w$ is set to the maximum possible number for the given $x$ and $y$). Next, for each passage from one row to another, we randomly designate it to be a one-way corridor. For example, if a passage is an up-corridor, then the robots can only travel in the upward direction through this passage. We ensure that the maze has at most $c$ one-way corridors. 
Given such a maze, we generate a game graph for two robots, denoted $R_1$ and $R_2$, that navigate the maze starting at the lower-left and lower-right corners of the maze, respectively. In this scenario the robots only interact explicitly via their shared workspace, i.e., possibly blocking each others way to the target.

\smallskip
\noindent\textbf{Synthesis Benchmark Generation for \cref{sec:composition}.}
We have generated $2244$ benchmarks from the games used for the Reactive Synthesis Competition (SYNTCOMP)~\cite{benchmark:syntcomp} by adding randomly generated parity objectives to given parity games. The random generator takes two parameters: game graph $\gamegraph$ and maximum priority $m$. It selects 50\% of the vertices in $\gamegraph$ randomly. These vertices are assigned priorities ranging from $0$ to $m$ (including $0$ and $m$) such that $1/m$-th of those vertices are assigned priority $0$ and $1/m$-th are assigned priority $1$ and so on. The other vertices are assigned random priorities. %

 \end{appendix}

\end{document}